\def\bR{\begin{color}{red}}
\def\bB{\begin{color}{blue}}
\def\bM{\begin{color}{magenta}}
\def\bC{\begin{color}{cyan}}
\def\bW{\begin{color}{white}}
\def\bBl{\begin{color}{black}} 
\def\bG{\begin{color}{green}}
\def\bY{\begin{color}{yellow}}
\def\e{\end{color}}
\theoremstyle{definition}\newtheorem{remark}{Remark}
\theoremstyle{definition}\newtheorem{example}{Example}
\newcommand{\bit}{\begin{itemize}}
\newcommand{\eit}{\end{itemize}\par\noindent}
\newcommand{\ben}{\begin{enumerate}}
\newcommand{\een}{\end{enumerate}\par\noindent}
\newcommand{\beq}{\begin{equation}}
\newcommand{\eeq}{\end{equation}\par\noindent}
\newcommand{\beqa}{\begin{eqnarray*}}
\newcommand{\eeqa}{\end{eqnarray*}\par\noindent}
\newcommand{\beqn}{\begin{eqnarray}}
\newcommand{\eeqn}{\end{eqnarray}\par\noindent}
\theoremstyle{plain}
\theoremstyle{plain}
\newtheorem{thm}{Theorem}
  \newtheorem{defn}[thm]{Definition}
	\newtheorem{lem}[thm]{Lemma}
	\newtheorem{cor}[thm]{Corollary}
	\newtheorem{prop}[thm]{Proposition}
	\newtheorem{post}[thm]{Postulate}
\newcommand{\xyR}[1]{%
\xydef@\xymatrixrowsep@{#1}}
\newcommand{\xyC}[1]{%
\xydef@\xymatrixcolsep@{#1}}
\newcommand{\I}{{I}}
\newcommand{\superscript}[1]{\ensuremath{^{\textrm{#1 }}}}
\newcommand{\st}[0]{\superscript{st}}
\newcommand{\nd}[0]{\superscript{nd}}
\newcommand{\rd}[0]{\superscript{rd}}
\newcommand{\rth}[0]{\superscript{th}}
\title{\mbox{Causal categories: relativistically interacting processes}}
\author{Bob Coecke and Raymond Lal} 
\date{University of Oxford, Computer Science, Quantum Group,\\ 
Wolfson Building, Parks Road, Oxford OX1 3QD, UK.\\
{\normalsize\tt coecke/rayl@cs.ox.ac.uk}} 
\begin{document}
\maketitle
\thispagestyle{empty}

\begin{abstract}
A symmetric monoidal category naturally arises as the mathematical structure that organizes physical systems, processes, and composition thereof, both sequentially and in parallel.  This structure admits a purely graphical calculus. This paper is concerned with the encoding of a fixed causal structure within a symmetric monoidal category: causal dependencies will correspond to topological connectedness in the graphical language.
We show that correlations, either classical or quantum, force terminality of the tensor unit. We also show that well-definedness of the concept of a global state
forces the  monoidal product  to be only partially defined, which in turn results in a relativistic covariance theorem.  Except for these assumptions, at no stage do we assume anything more than purely compositional symmetric-monoidal categorical structure.  We cast these two structural results in terms of a mathematical entity, which we call a \em causal category\em.  We provide methods of constructing causal categories, and we study the consequences of these methods for the general framework of categorical quantum mechanics.
\end{abstract}

%


\tableofcontents
\
\section{Introduction}


This paper is concerned with the causal structure of fundamental theories  of physics.
We cast  causal aspects of both relativity and    quantum theory in a unified setting, that is, a  single mathematical entity that will be derived from certain phenomenological considerations  from each theory. Our starting point is \em categorical quantum mechanics \em (CQM) \cite{AC1}, a general framework for physical theories in which \em type\em, \em process\em, and \em composition \em thereof,  are the primary concepts \cite{ContPhys, Templeton}.  The two modes of composing processes, \em in  parallel \em and \em sequentially\em, already provide an imprint of causal structure, admitting the interpretation of temporal and spatial composition respectively.  We seek to  make 
this correspondence  more precise, 
such that one can encode a fixed causal structure within the category.  From a dual perspective, we `thicken' a causal structure \cite{KP, Sorkin, KeyePrakash} to a proper category,  so that we obtain a category that  
encodes more than just causal relationships,  but which 
also encodes the processes that may take place along these causal connections.
 
We draw on earlier work by Markopoulou \cite{Markopoulou}, Blute-Ivanov-Panangaden \cite{BIP}, Hardy \cite{HardyTempleton, HardyPicturalism} and Chiribella-D'Ariano-Perinotti \cite{CDP1, CDP2}.\footnote{ In turn, the work by Hardy in  \cite{HardyTempleton, HardyPicturalism} and Chiribella-D'Ariano-Perinotti in \cite{CDP1, CDP2} is strongly influenced by CQM; in particular, by taking a diagrammatic language for processes as their starting point.}  In these papers, with increasing levels of abstraction, one considers a diagram representing causal connections, and decorates it with specific quantum events or processes.  We trim the assumptions in this work down to their `bare categorical bones', while retaining the key results:
\bit
\item a covariance theorem for global states as in \cite{BIP, HardyTempleton};
\item uniqueness of effects of a certain type as in \cite{CDP1}.
\eit
In particular, unlike the previous work mentioned above, our derivation does not make any reference to measurement or probabilities, just to a very general concept of process, and hence is more primitive.  We then observe that at the most basic level of the causal structure, the quantum-mechanical structure itself does not  play a crucial role; for example, our structure also captures classical  probabilistic processes that take place in relativistic space-time. 

The resulting structure is one which organizes processes which  can  potentially take place within a causal setting, together with their compositional interaction.  This allows, for example, the collection of  possible physical processes 
to vary according to 
the causal structure, for example, taking into account the different capabilities of distinct agents, or differences of a purely physical origin.  

Conceptually speaking, the stance of elevating processes to a privileged role in theories of physics was already present in the work of Whitehead in the 1950s \cite{Whitehead} and the work of Bohr in the early 1960s \cite{Bohr}.  It  became more prominent in the work of Bohm in the 1980s and  also  later in Hiley's \cite{Bohm, BohmHiley1, BohmHiley2}, who is still pursuing this line of research  \cite{Hiley}.  It is an honor to dedicate this paper to Basil Hiley, on the occasion of his 75th birthday.


\paragraph{Plan of the paper.}
In Section \ref{sec:CQM} we give an overview of symmetric monoidal categories and CQM, and we describe the problem that partly motivates this paper, namely how compact structure, interpreted as post-selected quantum teleportation  \cite{AC1, ContPhys}, leads to signaling. In Sections \ref{sec:Derivation} and \ref{sec:partial} we show how consideration of this problem and other phenomenological issues leads to the causal category structure. In Section \ref{sec:Definition} we formally define causal categories and describe some of their basic properties, in particular the way in which causal categories are incompatible with some structures of CQM. In Section \ref{sec:constructCC} we define methods of constructing causal categories, and how key features of CQM are retained.



%
%

\section{Processes as pictures}\label{sec:CQM}

In this Section we describe the existing framework for doing \em categorical quantum mechanics \em (CQM) using symmetric monoidal categories, and we state the problem addressed in this paper.  


\subsection{Symmetric monoidal categories}\label{sec:monoidal} 

Symmetric monoidal categories are mathematical entities with a direct physical interpretation;  introductions to the subject are \cite{AbrTze, BaezStay, CatsII}. Their role in CQM is that they provide two modes for composing systems and processes, sequential and concurrent. More precisely put:

\paragraph{1. Entities.} We shall consider a  collection of named \em objects \em or \em systems \em $A, B, C, ...$, and \em morphisms \em or \em processes \em $f:A\to B$ which may take a system of one kind $A$ into a system of another kind $B$. 
We call $A\to B$ the \em type\em\footnote{The term `type' reflects the application of category theory to theoretical computer science \cite{AspertiLongo}.} of the process $f$, for which $A$ is the \em domain \em and $B$ is the \em codomain\em. The set of all processes taking $A$ into $B$ is denoted ${\bf C}(A,B)$.  
States correspond to  processes from a special object ${I}$ into $A$, where one may interpret  ${I}$ as the `unspecified environment'.  From an operational perspective one can think of such a process as a preparation procedure, while from an ontic perspective one can think of it as the unknown process that caused $A$ to be in this state. Moreover, effects correspond to  processes from an object $B$ to  ${I}$, and  \em scalars \em or \em weights \em correspond to processes from  ${I}$ to  ${I}$.

\paragraph{2. Graphical language.} We now introduce a graphical language to represent our entities \cite{ContPhys}: systems are  represented by wires, and  processes $f:A\to B$ are represented by boxes with an input wire representing system $A$ and an output wire representing system $B$:
\[
\raisebox{-10mm}{\epsfig{figure=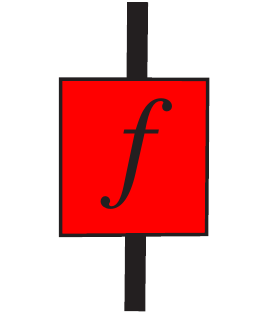,width=28pt}}
\]
In fact, as we shall discuss below, there may be more than one input and output wire---which could result from composing systems---or indeed no input and/or no output wires, representing `no system'---denoted above by $I$.

\paragraph{3. Composition.} The mathematical content of the formalism is given by the composition of processes. There are two  \em connectives \em which allow the composition of processes both `in parallel' and `sequentially':   
\bit
\item The \em sequential\em, or \em dependent\em, or \em causal\em, or \em connected \em composition of processes $f:A\to B$ and $g:B\to C$ is  $g\circ f:A\to C$,  and is depicted as:
\beq\label{def:seqcomp}
\raisebox{-10mm}{\epsfig{figure=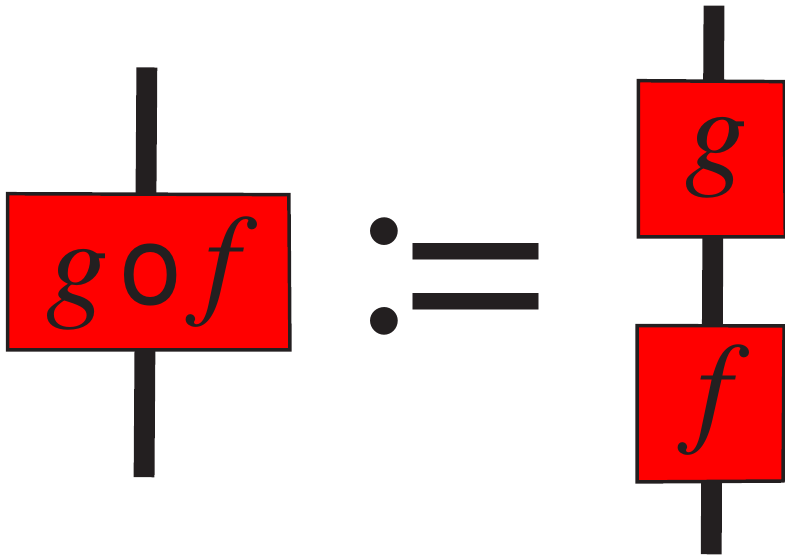,width=80pt}}\quad\ \ 
\eeq
\item The \em parallel\em, or \em independent\em, or \em acausal\em, or \em disconnected \em composition of processes $f:A\to B$ and $g:C\to D$  is $f\otimes g:A\otimes C\to B\otimes D$, 
and is depicted as:

\beq\label{def:parcomp}
\raisebox{-4mm}{\epsfig{figure=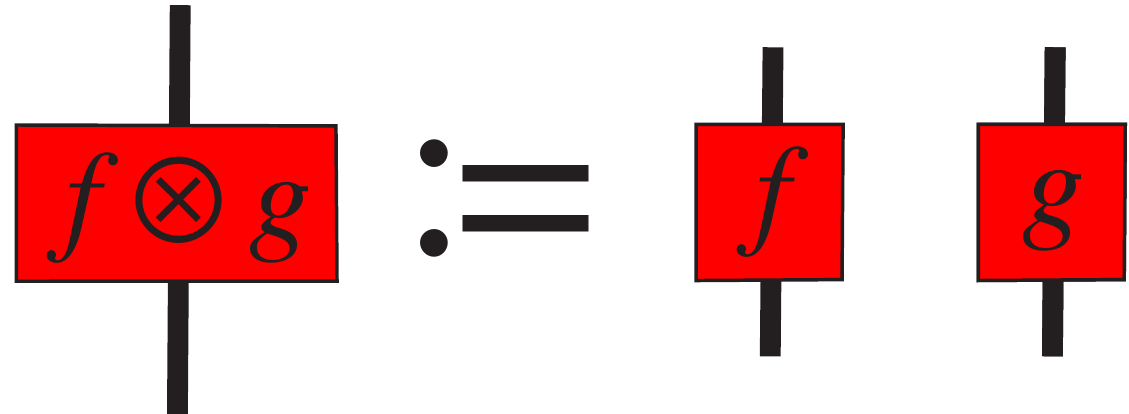,width=115pt}}\quad\ \  
\eeq
\eit

Importantly, compoundness of physical systems is now directly apparent in the graphical notation, in that there can be several wires side-by-side:

\begin{center}
\ one system\qquad\ two systems  \ \ \qquad  $n$ systems\ \ \qquad \ \ 
\raisebox{-2mm}{\begin{minipage}[b]{18mm} 
\begin{center}
operation on  $n$ systems
\end{center} 
\end{minipage}}
 \\ \ \\
\epsfig{figure=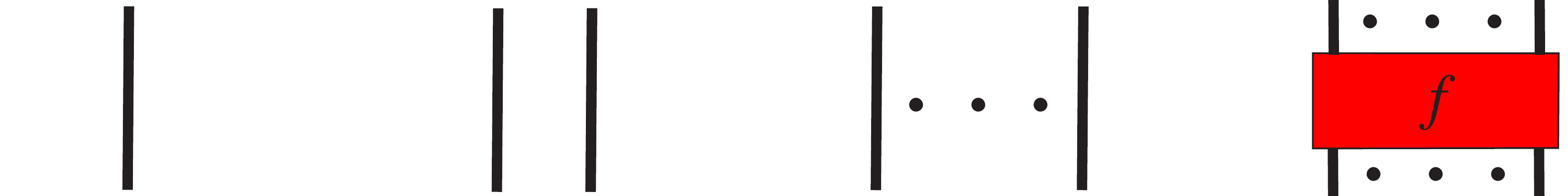,width=260pt} 
\end{center}

Note that there is an imprint of causal structure in this formalism 
(as indicated in the terminology), 
since one can think of the `acausal' composition $\otimes$ as `spatially' separating, while one thinks of the `causal' composition $\circ$ as `temporally' connecting 
\[
\epsfig{figure=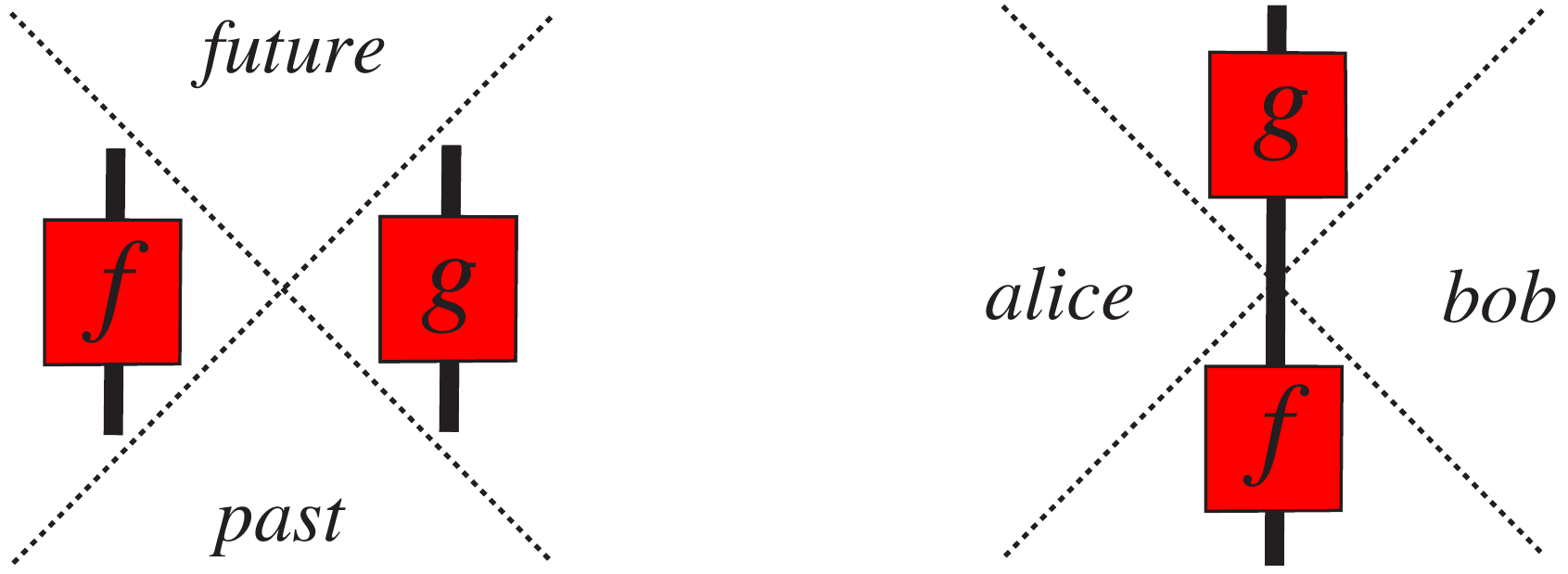,width=172pt}\vspace{-1.5mm}
\]

We now give the symbolic definition of a symmetric monoidal category. Here we  restrict to the case of  \em strict \em symmetric monoidal categories, since that is what the diagrammatic calculus embodies. More importantly, as argued in \cite{CatsII},  physical processes always form strict symmetric monoidal categories.\footnote{On the other hand, their mathematical representations typically form non-strict symmetric monoidal categories; see again \cite{CatsII} for a discussion of this point.}

\begin{defn}\em
A \em strict monoidal category \em is a category $\mathbf{C}$ equipped with a bifunctor $\otimes:\mathbf{C}\times\mathbf{C}\rightarrow\mathbf{C}$ and a \em unit object ${I}$ \em such that
$(\vert \mathbf{C} \vert , \otimes, I)$ is a monoid, and for all $f,g,h$ in $\mathbf{C}$ we have associativity of acausal composition:
\beq
f\otimes (g\otimes h)=(f\otimes g)\otimes h\,.
\eeq

\end{defn}

The content of $\otimes$ being a bifunctor is that there is an interchange law between $\otimes$ and $\circ$: for all morphisms $f,g,h,k$ (of appropriate type) in a strict monoidal category, we have:
\beq\label{eq:bifunct}
(g\circ f)\otimes (k\circ h) = (g\otimes k)\circ(f\otimes h)\,.
\eeq

Now, recall that an isomorphism between two objects $A$ and $B$ in a category is  a morphism $f:A\rightarrow B$ for which there exists an \em inverse \em  $f^{-1}:B\rightarrow A$, that is, a morphism which is  such that $f^{-1}\circ f=1_{A}$ and $f\circ f^{-1}=1_{B}$,  where $1_A:A\to A$ denotes the \em identity morphism \em on $A$. 

\begin{defn}\em 
A \em symmetric strict  monoidal category \em (SMC) is a monoidal category $\mathbf{C}$ equipped with a family of  symmetry isomorphisms  $\sigma_{A,B}:A\otimes B\rightarrow B\otimes A$ which is such that for all $A,B\in\vert \mathbf{C} \vert$ we have $\sigma^{-1}_{A,B}=\sigma_{B,A}$, and for all $A,B,C,D\in\vert \mathbf{C} \vert$ and $f:A\rightarrow C,g:B\rightarrow D$ in $\mathbf{C}$, we have:
\beq\label{eq:symnat}
\sigma_{C,D}\circ(f\otimes g)=(g\otimes f)\circ \sigma_{A,B}.
\eeq
\end{defn}

\begin{example}[Programming languages and proof theory]
Symmetric monoidal categories play a significant role in the theory of programming languages and proof theory, a modern branch of logic.  In programming, objects represent data types and a morphism $f:A\to B$ stands for running a program that requires input data of type $A$ and produces output data of type $B$. Sequential composition would then mean first running one program and then using its output as the input for the second program.  Parallel composition means running two programs in parallel.  In proof theory, objects represent propositions and a morphism $f:A\to B$ stands for a derivation of proposition $B$ given proposition $A$.
\end{example}

\begin{example}[Dirac notation] 
Morphisms for which the domain and/or the co-domain is the tensor unit $\I$ have a special form in the graphical notation. A generic  \em element \em or \em state \em  $\psi:\I\to A$ (cf.~a Dirac `ket'  $|\psi\rangle$ in quantum mechanical Dirac notation) is depicted as $\raisebox{-1mm}{\epsfig{figure=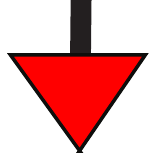,width=12pt}}\,$, and a generic  \em co-element \em or \em effect \em $\psi:  \I \to A$ (cf.~a Dirac `bra'  $\langle\psi|$) as $\raisebox{-1mm}{\epsfig{figure=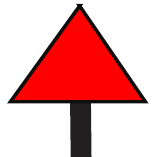,width=12pt}}\,$ \cite{ContPhys}.  This shows how the graphical language is a `two-dimensional' extension of Dirac notation; consider the graphical notation for states and effects compare to Dirac bras and kets:

\begin{center}
\begin{tabular}{|c|c|} 
 $|\psi\rangle$  & $\raisebox{-1mm}{\epsfig{figure=CCstate.pdf,width=12pt}}\,$  \\ 
\hline
$\langle\psi|$    &  $\raisebox{-1mm}{\epsfig{figure=CCeffect.pdf,width=12pt}}\,$ \\
\end{tabular} 
\end{center}
We notice that a clockwise rotation of the Dirac ket by $90\,^{\circ}$ yields the same triangle shape as the graphical notation on the right hand-side; and similarly for bras.
 Moreover, given a state $\psi:{I}\rightarrow A$ and an effect $\phi:A\rightarrow {I}$, we obtain a morphism with the `trivial' type $\psi\circ\phi: {I}\to {I}$. As mentioned above, this is a scalar, and is denoted graphically as having no input nor output wires, which is again a rotated denotation of $\langle \phi\mid\psi\rangle$. 
\end{example}

The symmetry natural isomorphism is denoted graphically by a crossing, so Eq.~(\ref{eq:symnat}) is depicted as:
\beq\label{eq:symnat2}
\raisebox{-6mm}{\epsfig{figure=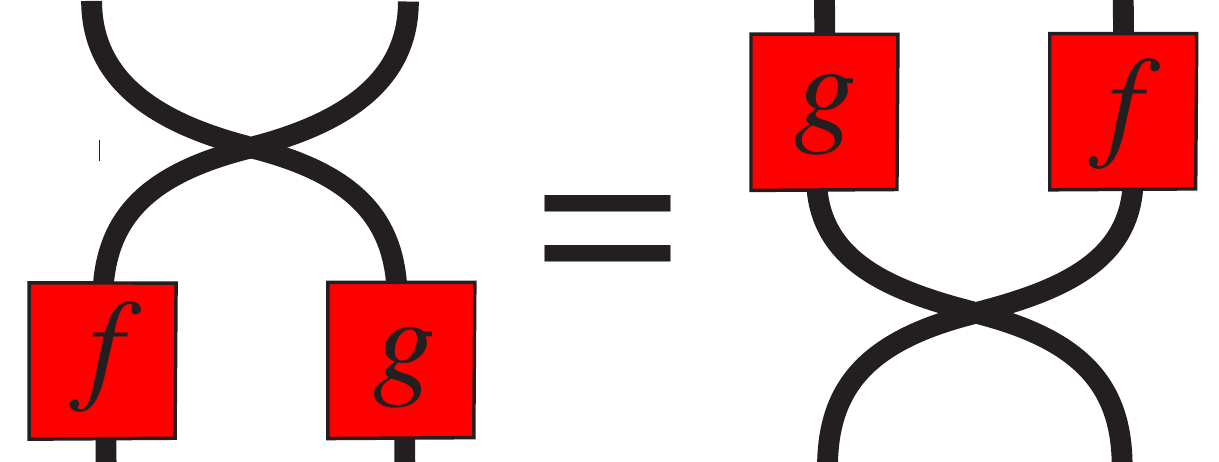,width=120pt}}
\eeq
This gives an indication of how the graphical \em calculus \em will subsume symbolic equations:  
Eq.~(\ref{eq:symnat2}) captures  Eq.~(\ref{eq:symnat}) by  the intuitive notion of `sliding boxes along a wire'.   It will therefore be useful to formally distinguish graphical and symbolic representations.

\begin{defn}[symbolic language]\label{def:formula}\em
By an \em object formula \em in the (symbolic) language of an SMC we mean any expression involving objects and the tensor thereof. By a \em (well-formed) morphism formula \em in the (symbolic) language of an SMC we mean any  expression involving morphisms, sequential composition, and parallel composition thereof, for which sequential composition only occurs for morphisms with matching types.
\end{defn}

Consider an object formula $A_1\otimes A_2$ in a category $\mathbf{C}$, with $A=A_1\otimes A_2$. We shall be careful to distinguish between $A$ and $A_1\otimes A_2$, since there may be other objects, say $B_1$ and $B_2$,  such that $A=B_1\otimes B_2$, and hence, the object formula $A_1\otimes A_2$
contains more information than its corresponding object in the
category $A$ does, namely,  it shows how it was formed. The same applies to morphism formulae, e.g.~$(1_A \otimes k) \circ h\circ (f\otimes g)$, which again contains more information than the corresponding morphism in the category.  
We shall notationally distinguish the object language and objects as follows:
\bit
\item Object formulae will be denoted by calligraphic capital letters ${\cal A}, {\cal B}, {\cal C}, \dots$
\item Objects will be denoted by Roman-font capital letters $A,B,C,\dots$
\eit
and morphism language and morphisms as follows:
\bit
\item Morphism formulae will be denoted by calligraphic capital letters ${\cal F}, {\cal G}, {\cal H}, ...$
\item  Morphisms will be denoted by Roman font $f, g, h, ...$ 
\eit
Each morphism formula ${\cal F}$ has an object formula as its input and output, which we specify by writing  ${\cal F}:{\cal A}\to {\cal B}$. We can associate to $\mathcal{F}$ a \em corresponding morphism \em $f:A\to B$, simply by performing the compositions expressed within the object formulae and morphism formula. We define \em corresponding objects \em for object formulae similarly. We write $A:={\cal A}$, $B:={\cal B}$ and  $f:={\cal F}$, a notational convention which we already use  in Eqs.~(\ref{def:seqcomp}) and (\ref{def:parcomp}) above, where the right-hand side represents the diagram expressing the composition, while the left-hand side represents the morphism that one obtains when performing the composition. We use the notation ${\cal F}:A\to B$ to mean that in ${\cal F}: {\cal A}\to {\cal B}$ we have $A:={\cal A}$ and, $B:={\cal B}$. An equation ${\cal F}={\cal G}$ means that the corresponding morphisms are equal, i.e.~$f=g$. The physical intuition behind this is that several `physical scenarios' or `experimental protocols', while  distinct in their implementation details, may have exactly the same overall effect.

The graphical elements we have introduced correspond formally to the entities defined by an SMC:  we can define a graphical  \em language \em and  \em calculus \em  \cite{JS, SelingerSurvey} in correspondence to the axioms of an SMC; this procedure traces back to Penrose's work in the 1970s \cite{Penrose}.  
For each morphism formula ${\cal F}$ there is a corresponding \em diagram \em in the graphical language, e.g.~for ${\cal F}=(1_A \otimes k) \circ h\circ (f\otimes g)$ the corresponding diagram is
\[
\epsfig{figure=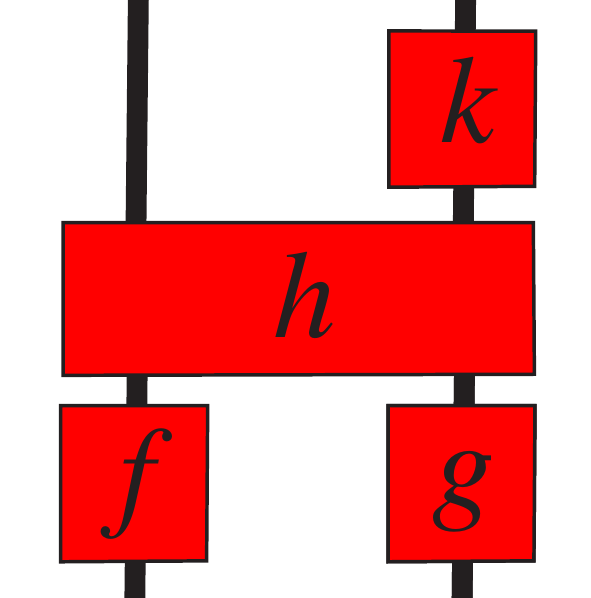,width=58pt}
\]
Surprisingly, a morphism formula still has more information than its corresponding diagram in the graphical  language. 
But from a physical perspective this extra information is in fact redundant.  For example, when expressing both sides of Eq.~(\ref{eq:bifunct}) in the graphical language,  we  obtain the same diagram twice: 
\[
\epsfig{figure=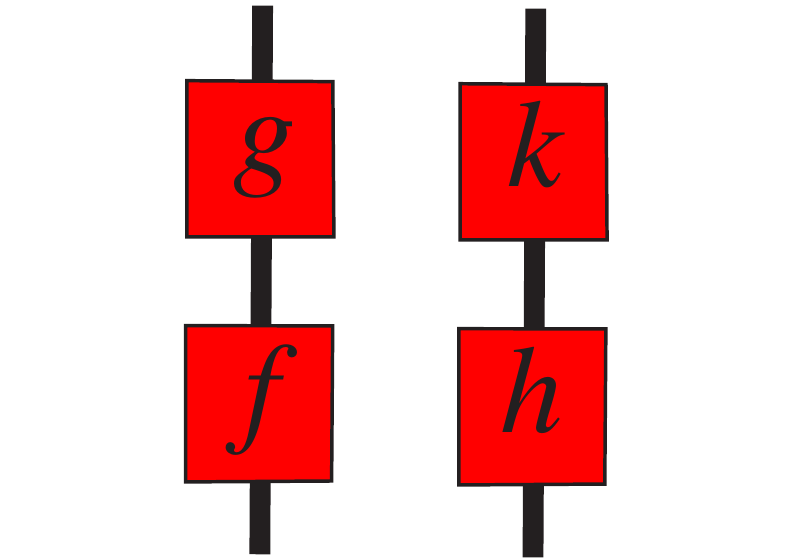,width=70pt}
\]
since  each side represents the same physical `scenario'.  Hence,  the graphical  language is, in a sense, superior to the symbolic language, since it renders an equational constraint superfluous.

The true power of the graphical calculus as opposed to the symbolic  formalism is made clear by the following theorem due to Joyal and Street \cite{JS, SelingerSurvey}, which implicitly defines what we actually mean by `graphical calculus'.


\begin{thm}\label{thm:SMCcoherence}
An equation between morphism formulae in the symbolic language of symmetric monoidal categories follows from the axioms of symmetric monoidal categories if and only if 
we can obtain one picture from the other by  displacing  the boxes, whilst retaining how wires connect the inputs and outputs of  boxes, as well as keeping the overall  number of  inputs and outputs of the diagram fixed.
\end{thm}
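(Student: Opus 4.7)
The plan is to recast the statement as the assertion that, for any signature $\Sigma$ of object and morphism variables, the free SMC generated by $\Sigma$ is isomorphic to a category $\mathbf{Diag}(\Sigma)$ whose morphisms are equivalence classes of boxes-and-wires pictures under the topological relation described in the theorem. Once this isomorphism is in place, an equation $\mathcal{F}=\mathcal{G}$ between morphism formulae is provable from the SMC axioms if and only if their images in $\mathbf{Diag}(\Sigma)$ coincide, i.e.\ the two diagrams are topologically equivalent in the prescribed sense. So the proof splits naturally into a \emph{soundness} direction (topologically equivalent diagrams yield equal morphisms) and a \emph{completeness} direction (equal morphisms yield topologically equivalent diagrams).

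For soundness, I would first make the notion of picture combinatorially precise as a \emph{progressive plane graph}: a finite acyclic directed graph with vertices labelled by morphism variables, edges labelled by objects, sources and sinks placed on a top and bottom boundary, with crossings treated as unordered (since $\sigma$ is self-inverse in the symmetric case). I would then check, one axiom at a time, that each basic SMC equation corresponds to an allowed isotopy of boxes: bifunctoriality $(g\circ f)\otimes(k\circ h)=(g\otimes k)\circ(f\otimes h)$ is the vertical sliding of two boxes past each other; associativity of $\otimes$ is trivial at the level of pictures, since there is no bracketing; naturality of $\sigma$ in Eq.~(\ref{eq:symnat}) is sliding a box along a wire through a crossing; and $\sigma^{-1}_{A,B}=\sigma_{B,A}$ is the cancellation of two adjacent crossings. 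Since diagram equivalence is generated by such local moves, any symbolic derivation yields a topological equivalence of pictures.

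For completeness, I would construct $\mathbf{Diag}(\Sigma)$ directly and verify its SMC structure. Objects are words in the generating objects; composition is vertical stacking; tensor is horizontal juxtaposition; the symmetry is the elementary crossing. Axioms of an SMC hold by construction, because the equivalence relation is designed to quotient exactly by the moves needed. The generators of $\Sigma$ embed into $\mathbf{Diag}(\Sigma)$ as single-box diagrams, which by the universal property induces a strict monoidal functor from the free SMC $\mathbf{F}(\Sigma)$ to $\mathbf{Diag}(\Sigma)$. This functor is essentially surjective and full by inspection, since every diagram can be cut into horizontal slices each consisting of one box together with identity wires. Faithfulness is precisely the completeness claim, and it follows by constructing an inverse: given a diagram, read off a normal-form morphism formula by slicing top-to-bottom, and show that two diagrams related by a single topological move yield normal forms that are equal in $\mathbf{F}(\Sigma)$ via the corresponding axiom.

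The main obstacle is this last step: showing that any topological equivalence of pictures can be decomposed into a finite sequence of the elementary moves used in the soundness check. This is a Reidemeister-type theorem for progressive plane graphs with (unordered) crossings; one has to argue that planar isotopy of such diagrams is generated by sliding boxes, the interchange move, crossing cancellation, and the one ``naturality'' move that pulls a box through a crossing. The proof proceeds by putting a generic diagram into a Morse-theoretic normal form in which only finitely many critical heights occur, each corresponding to one elementary move, and then showing any isotopy between two such normal forms factors through a sequence of local exchanges of adjacent critical heights. This is exactly the content of Joyal and Street's analysis, and it is the technical heart of the theorem.
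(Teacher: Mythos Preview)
Your sketch is essentially correct and follows the Joyal--Street strategy: define a category of progressive diagrams up to planar isotopy and symmetric crossing moves, verify it carries SMC structure, and show that the canonical functor from the free SMC on the signature is an isomorphism, with faithfulness reduced to a Reidemeister-type decomposition of isotopies into elementary moves. You have correctly identified the technical heart (the Morse-theoretic normal form argument).

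However, there is nothing to compare against: the paper does not prove this theorem. It is stated as a known result and attributed to Joyal and Street \cite{JS} (see also Selinger's survey \cite{SelingerSurvey}); the paper simply invokes it in order to justify working with the graphical calculus. So your proposal is not an alternative to the paper's proof but rather an outline of the cited proof itself. If this were submitted as a self-contained argument, the one place that would need real work is the claim that any topological equivalence decomposes into the listed elementary moves; everything else is routine verification.
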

So a `graphical calculation' is nothing but a `deformation', for example:
\beq\label{eq:planar_isotopy}
\raisebox{-18mm}{\epsfig{figure=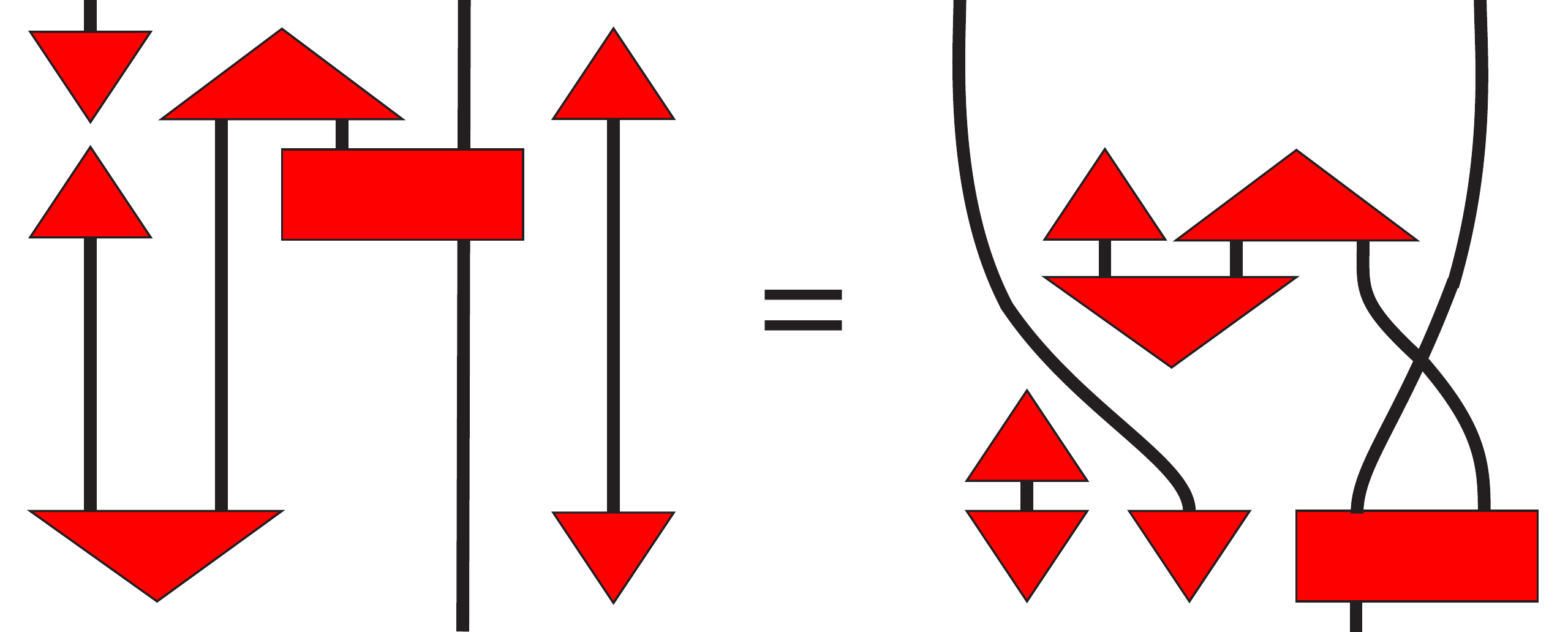,width=250pt}}\vspace{-1.5mm}   
\eeq


From now on we shall make use of the efficiency of the graphical language in making certain equations superfluous: we will treat morphism formula up to equivalence in the diagrammatic representation. 
For example, consider 
the following mathematical ambiguity about our use of connected vs.~disconnected composition as defined \em symbolically\em: while parallel composition \em always \em leads to topological disconnectedness, sequential composition may lead to either a connected or a disconnected diagram. In particular, when we compose over the tensor unit $I$ the two modes of composition coincide:
\[
g\circ f =\  \raisebox{-8.8mm}{\epsfig{figure=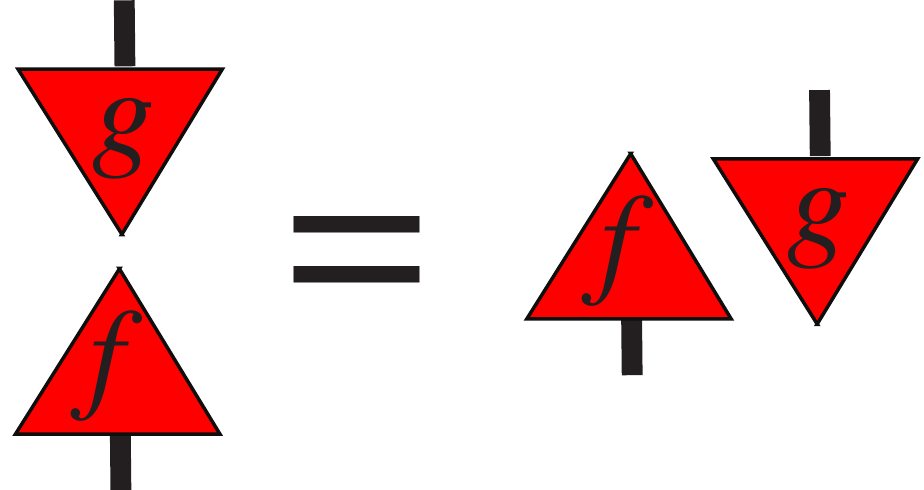,width=80pt}}\ = f\otimes g\,.
\]
Our use of diagrammatic equivalence classes resolves this ambiguity, since we can always represent a composition over the tensor unit by a formula that uses '$\otimes$' instead of '$\circ$'. 

We shall also assume that all our morphism formulae contain only `atomic' expressions---those which do not contain, in the correspondong graphical representation, topologically disconnnected components.  To define this symbolically, we first define  a \em generalized symmetry morphism \em to be a morphism that is either the identity morphism or is the vertical or parallel composition of symmetry morphisms $\sigma_{A,B}:A\otimes B\rightarrow B\otimes A$ or identity morphisms.

\begin{defn}[Non-trivial parallel composition; atomic morphism]\em
The  \em non-trivial parallel composition \em of morphisms $g_1:A_1\to B_1$ and $g_2:A_2\to B_2$  is a morphism   $f=g_1\otimes g_2$, where
neither $g_1$ nor $g_2$ is a scalar (i.e.~of type $I\rightarrow I$). A morphism $f:A\to B$ is \em atomic \em if, for all post or pre--compositions of generalized symmetry morphisms, it cannot be written as a non-trivial parallel composition of other morphisms. 
\end{defn}

Examples of non-atomic morphisms are: 
\[
\raisebox{-5mm}{\epsfig{figure=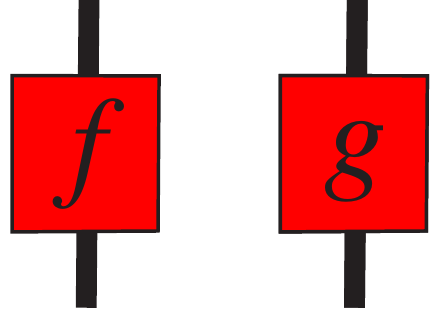,width=40pt}}\ :A_1\otimes A_2\to B_1\otimes B_2
\qquad
\raisebox{-4.5mm}{\epsfig{figure=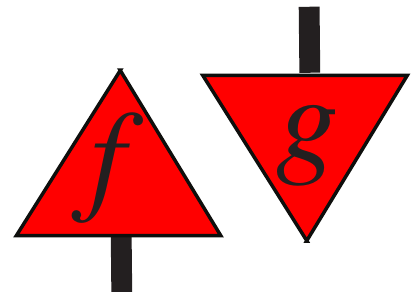,width=40pt}}:A\otimes I\to I\otimes B 
\]
which in the diagrammatic language always consist of two non-trivially typed (i.e.~non-scalar) subcomponents.  
In Appendix A we consider how these assumptions  
affect the relationship between the structure of symbolic formulae and topological connectedness in the corresponding diagrammatic language.
 
\paragraph{Why formulate physical theories using categories?}
A category provides not only a description of objects and morphisms,
but it also provides an \em equational theory\em.  Indeed, in the case
of an SMC, a category provides both the description of a physical
theory,  in terms of a language involving systems, processes
(cf.~evolution) and their (de)composition,  as well as the equational
laws it is subject to, e.g.~when two scenarios or protocols result in
the same overall a process. A particular case of this---which is the
one that one encounters in more traditional formulations of the
dynamics of a theory---is when the final states coincide  given they
take the same input state.  Leaving inputs open then means allowing
for variable inputs.  From a logical perspective this means that it
provides both the \em language\em, i.e.~well-formed formulae (wff),
and the \em axioms\em, i.e.~equations between wff.

By a model one means a concrete realization (e.g.~processes described
in concrete Hilbert-space quantum theory) which typically will obey
some extra axioms and have a more refined language; these can be seen
as additional laws and data, which may or may not be redundant.  For
example, when passing to a theory of quantum gravity one may expect
that certain ingredients of the Hilbert-space structure may have to be relinquished
(e.g.~the continuum \cite{Isham}).

%
%

\subsection{Elements of Categorical Quantum Mechanics}

There have been many attempts to identify the key underlying structures in quantum theory.  For example, the first such attempt, by Birkhoff and von Neumann, used non-distributive lattices to recast quantum theory as quantum logic \cite{vN, BvN}. Other axiomatic frameworks have variously taken as a starting point algebras of observables---using C*-algebras  \cite{Redei, HaagKastler}---or probabilistic structure---using probability spaces \cite{Mackey} or convex structures \cite{Ludwig1, Ludwig2}.   But the focus of these approaches has not usually been on causality.  Relatedly, a weakness of these approaches is that they lack 
an elegant conceptual account of the behavior of \em compound \em quantum systems. Indeed, for most of these approaches the Hilbert space tensor product does not lift to the level of the languages in which the axiomatic framework was stated.  The rise of quantum information and computation, where the tensor product plays a key role, and for which non-local phenomena are exploited for practical applications,  might be seen as  the fatal blow for many of these approaches. 

In contrast, CQM treats composing systems (and processes) as a primitive. This leads to a paradigmatic shift from treating measurement as the basic concept of a theory, as advocated by Birkhoff andvon Neumann, to compoundedness, as advocated by Schr\"odinger \cite{Schrodinger}. This has led to immediate results, and CQM has established itself as a promising framework for studying the foundations of quantum mechanics, as well as a high-level framework for quantum information and computation.  Some milestones and key results of CQM are \cite{SelingerCPM, Vicary, CPav, AbrClone, CD, CES, DP, CPer}.

In CQM we  add expressive power to the formalism described in Subsection \ref{sec:monoidal}  by using \em dagger compact categories\em,  which we define diagrammatically as follows:
\bit
\item {\it dagger}\,: For each graphical element, including an entire diagram, the one obtained by flipping it  upside-down is also a valid graphical element:
\[
\raisebox{-4mm}{\epsfig{figure=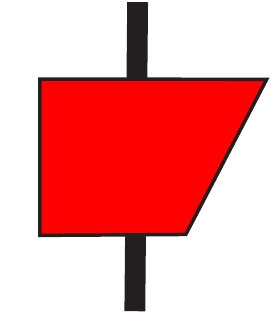,width=26pt}} \leadsto \raisebox{-4mm}{\epsfig{figure=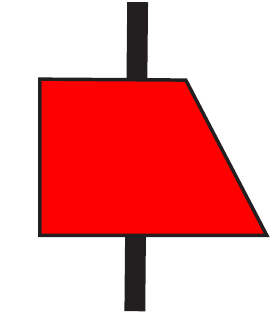,width=26pt}}
\]
\item {\it compactness}\,: for any object $A$ there is an object $A^*$ and a   \em Bell state \em 

\[
\raisebox{-3.5mm}{\epsfig{figure=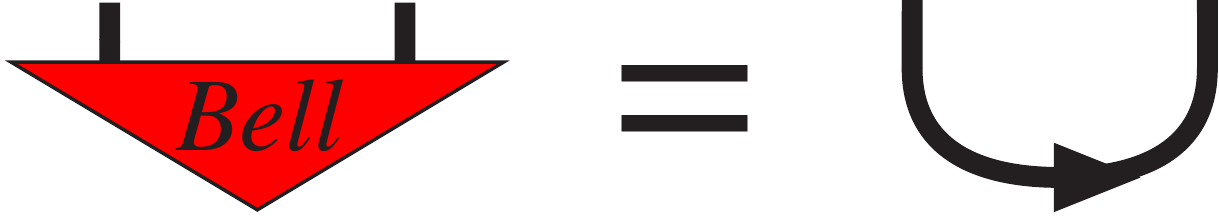,width=118pt}}\
:I\to A\otimes A^{*}\,,
\]
which is such that, 
any two diagrams with matching inputs and outputs are equal if they are equal up to {\bf homotopy}, that is, only the topology of the diagrams matters.   
\eit

In combination with the  dagger structure, this implies:  
\beq\label{pic:tele}
\raisebox{-7mm}{\epsfig{figure=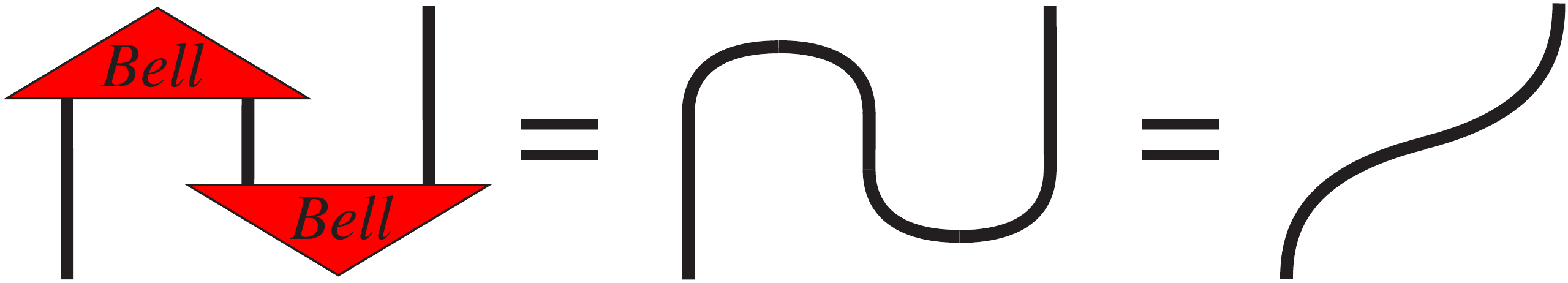,width=240pt}}\vspace{-1.5mm} 
\eeq
\beq\label{pic:teleswap}
\raisebox{-7mm}{\epsfig{figure=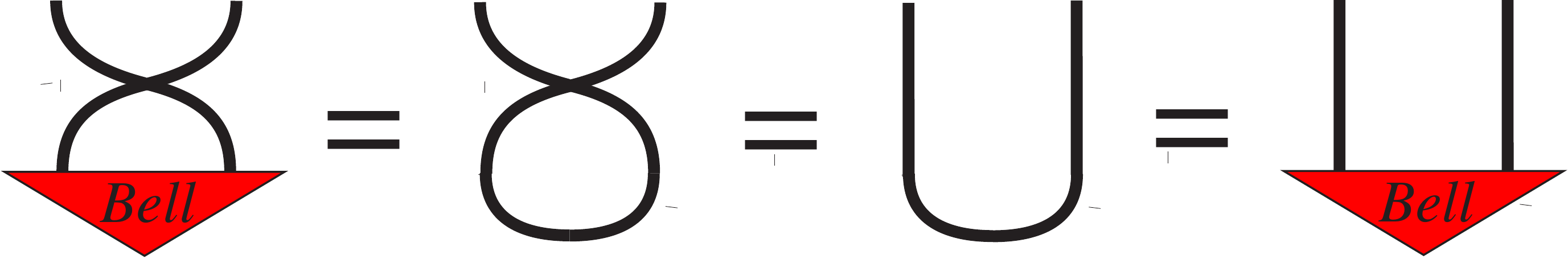,width=258pt}}\vspace{-1.5mm} 
\eeq
These equations are the defining equations of dagger compactness.

\begin{defn}[$\dagger$-SMC]\label{def:compact}\em
A \em dagger category \em is a category equipped with an involutive identity-on-objects contravariant functor $\dagger:\mathbf{C}^{op}\rightarrow \mathbf{C}$, called a \em dagger functor\em. A \em dagger (strict) symmetric monoidal category \em ($\dagger$-SMC) is a (strict) SMC equipped with a dagger functor such that for all  $A,B\in\vert \mathbf{C} \vert$ we have $\sigma^{\dagger}_{A,B}=\sigma^{-1}_{A,B}$, and
\[
(A\otimes B)^{\dagger}= A^{\dagger}\otimes B^{\dagger},
\]
and for all $f:A\rightarrow C,g:B\rightarrow D$ in $\mathbf{C}$:
\[
(f\otimes g)^{\dagger}= f^{\dagger}\otimes g^{\dagger}.
\]
\end{defn}

\begin{defn}[Compact structure]\em
A \emph{compact structure} on an object
$A$ of a SMC $\mathbf{C}$ is a quadruple $(A,A^{*},\epsilon:A\otimes A^{*}\rightarrow I,\eta:I\rightarrow A^{*}\otimes A$)
consisting of $A$, its \emph{dual object $A^{*}$}, the \emph{unit
$\eta_{A}$} and the \em counit \em $\epsilon_{A}$, such that the following
diagrams commute: 
\[
\xymatrix@=2cm{A^{*}\ar[d]_{\eta\otimes A^{*}}\ar[dr]^{1_{A^*}} & A\ar[r]^{A\otimes\eta}\ar[dr]_{1_A} & A\otimes A^{*}\otimes A\ar[d]^{\epsilon\otimes 1_A}\\
A^{*}\otimes A\otimes A^{*}\ar[r]_{1_{A^*}\otimes\epsilon} & A^{*} & A}
\]
A \em compact category \em is one for which there is a compact structure on each object.
\end{defn}

\begin{defn}\label{def:Bell}\em
In a $\dagger$-SMC a \emph{Bell state }$(A,A^{*},\eta)$
is a compact structure $(A,A^{*},\eta^{\dagger}\circ\sigma_{A,A^{*}},\eta)$, and a \emph{dagger compact category} is a\emph{ }$\dagger$-SMC for which
each object has a chosen Bell state;  these choices are moreover coherent with dagger symmetric monoidal structure.
\end{defn}


\begin{example}
The category of finite-dimensional Hilbert spaces and linear maps, denoted $\mathbf{FdHilb}$, is a $\dagger$-SMC for which the dagger functor is given by the linear-algebraic adjoint, and the monoidal product is given by the tensor product of Hilbert spaces. By linearity, states $I\rightarrow \mathcal{H}$ are in bijective correspondence with pure states $|\psi\rangle\in \mathcal{H}$ by the mapping $\psi::1 \mapsto |\psi\rangle$, and the scalars are endomorphisms $\mathbb{C}\rightarrow \mathbb{C}$, i.e. in bijective correspondence with the complex numbers. The terminology of Definition \ref{def:Bell} is justified by the fact that the morphism $\eta::1\mapsto |\eta\rangle$ is given by the quantum state $|\eta\rangle=|00\rangle+|11\rangle$.
\end{example}

We now introduce some operational terminology for the object and morphism languages, with a view to their physical interpretation.

\begin{defn}[Operational terminology]\label{def:slice_etc}\em
A \em slice \em is an object formula  in the symbolic language of SMCs, and a  \em scenario \em or \em protocol \em  is a morphism formula in the symbolic language of SMCs, or equivalently,  a diagram in the graphical language of SMCs.   
\end{defn}

Dagger-compact categorical structure was used in \cite{AC1} to provide sufficient structure to do a large amount of quantum theory, which justifies the terminology of Definition \ref{def:slice_etc}.  They appeared earlier in the work of Baez and Dolan \cite{BaezDolan}, as a particular case of $k$-tuply monoidal $n$-categories; the importance of the particular case of $n=1$ and $k=3$   was later acknowledged by Baez in \cite{Baez}. There also exists an analogous theorem to Theorem \ref{thm:SMCcoherence} for dagger compact categories, which identifies symbolic axioms with a graphical language for which only topology matters \cite{KL, SelingerCPM}.  

 Perhaps even more importantly, the  \em completeness theorem \em by Selinger  \cite{SelingerCompleteness}  states that an equational statement in the language of dagger compact categories is provable in the corresponding language \em if and only if \em it is provable in the SMC of Hilbert spaces, linear maps, the tensor product and the linear-algebraic adjoint. Put informally, for an important class of equational statements, derivability in the graphical language is equivalent to validity within Hilbert-space quantum theory.
Hence a less dichotomic view on
axioms versus models can be obtained by means of the concept of
abstraction. 


\subsection{A pitfall}

As discussed in \cite{ContPhys}, Eq.~(\ref{pic:tele}) can be interpreted as \em post-selected quantum teleportation\em, that is, quantum teleportation conditioned upon the measurement outcomes, such that no unitary correction is needed.
However, naive causal interpretation yields:
\beq\label{telecrash}
\raisebox{-8mm}{\epsfig{figure=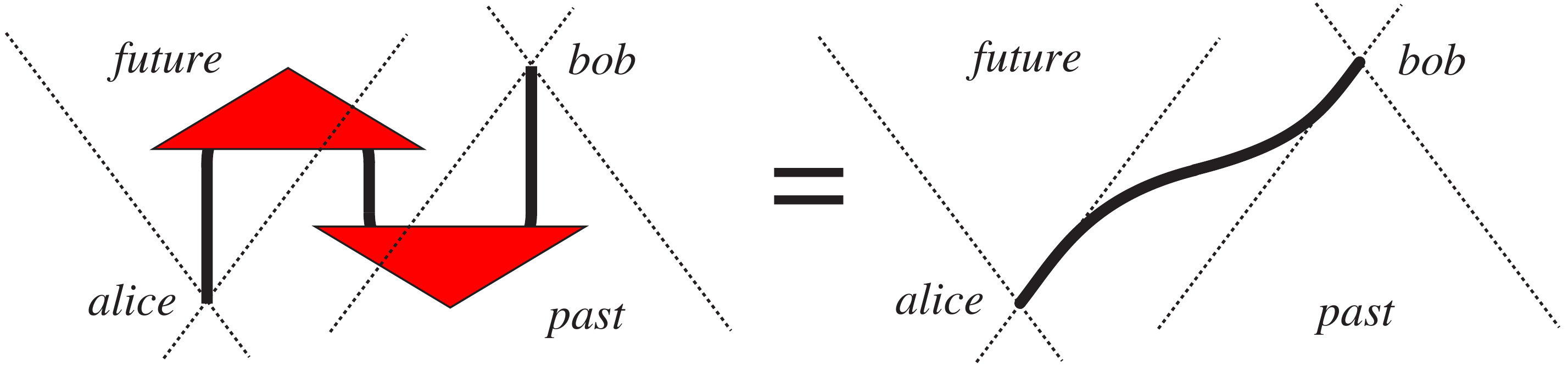,width=280pt}}
\eeq
The origin of this apparent ability for alice to `signal' to bob is  post-selection, which is easily seen to be a (virtual) resource  that enables signaling.    We obtain this even for classical probability theory: if alice and bob each have an unknown bit with the promise that they are the same, then if alice post-selects $x$ then consequently bob will also have $x$. Hence alice has signaled the bit $x$ to bob.  To avoid this,  one must consider all possible measurement outcomes together. In the quantum teleportation protocol this requires classical communication:  
\[
\epsfig{figure=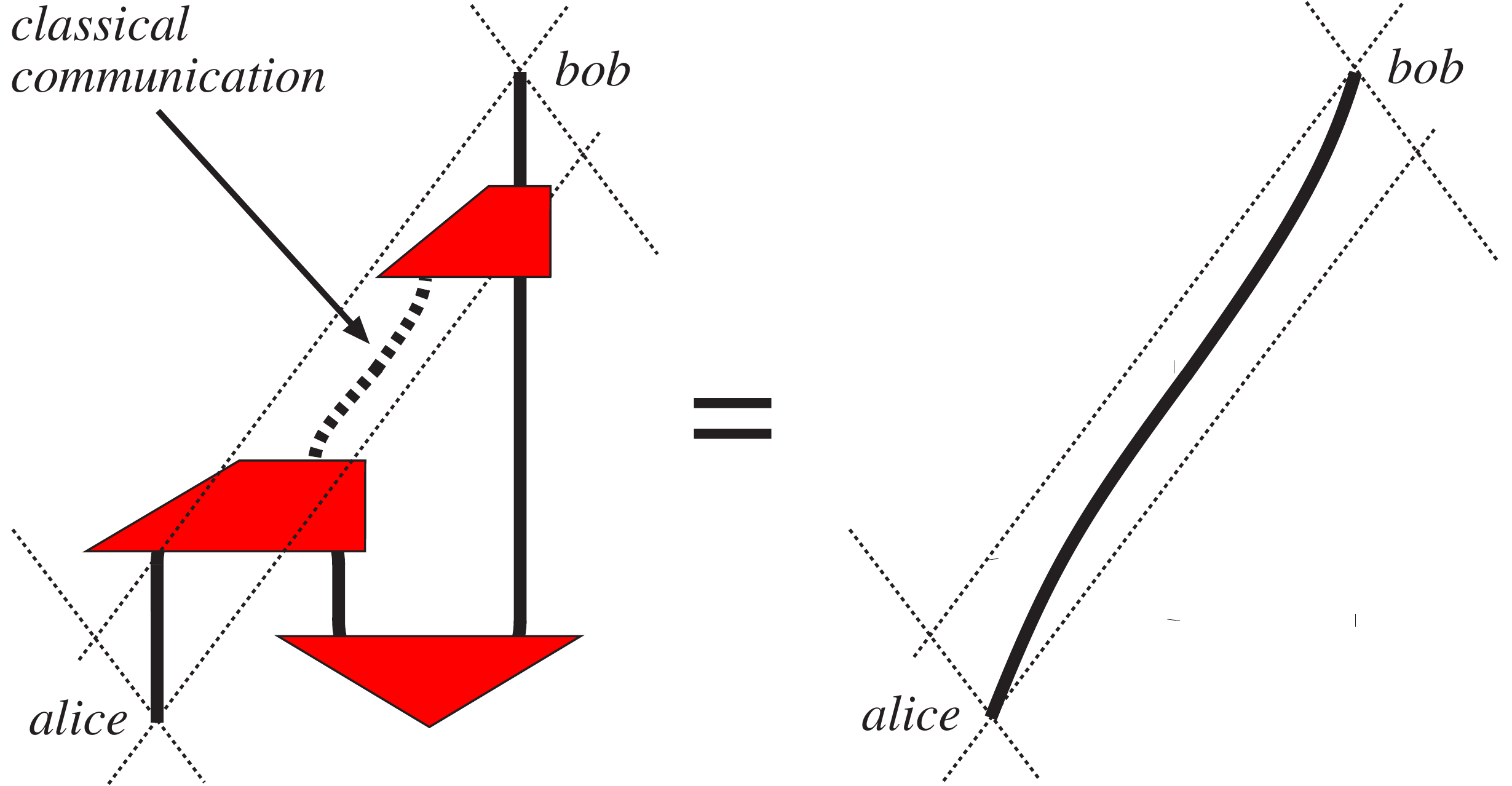,width=240pt}
\]
But to do so in the existing formalism  of CQM requires specifying admissible operations, e.g.~projective measurements, classical communication, classical control structure etc., and to do this various \em internal \em structures must be defined. 
However, in the structure we develop in this paper, \em causal categories\em, postselection will be automatically excluded,
as we see below in Section \ref{sec:restriction2}.  

\section{Terminality of the tensor unit from correlations}\label{sec:Derivation}


In this section we first show how causal structure can be thought of in terms of information flow, and how connectedness captures this. 

\subsection{Causality as information flow, formalised by connectedness}\label{sec:CausalityInSMC}

Causal structure is often conceived as a partially ordered set $({\cal A}, \leq)$ where $A\leq B$ stands for $A$ being in the causal past of $B$. The passage to SMCs will involve more than just expressing that there \emph{is} a causal connection:  it will involve specifying the processes that establish this causal connection from $A$ to $B$, e.g.~by means of sending a non-void signal. 

Now, consider a physical scenario of the kind we discussed in the previous section:
\[
\epsfig{figure=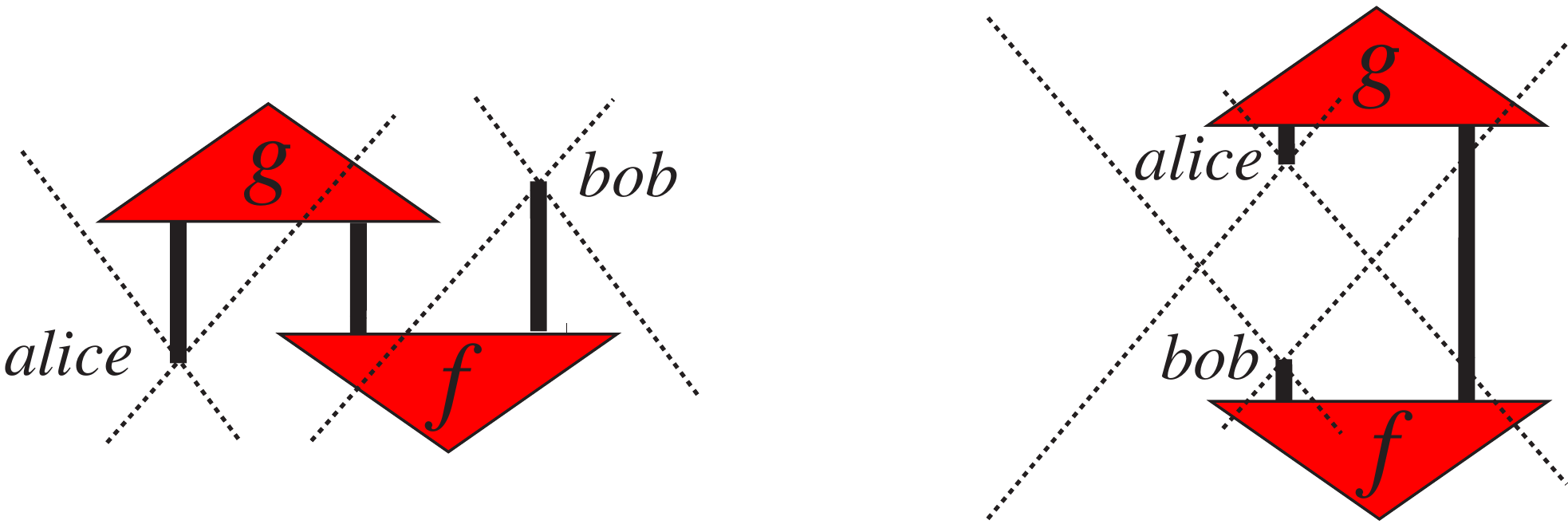,width=200pt}
\]
where  $A$ (alice) is \em not \em causally preceded by $B$ (bob).  Now, whilst by causality no information can flow from $A$ to $B$ (cf.~Eq.~(\ref{telecrash}) and the discussion in the previous section), there does \em physically exist  \em a composite process, e.g.~for the picture on the left: 
\[
h= (g\otimes 1)\circ(1\otimes f) :A\to B\,,
\]
So in particular,
 ${\bf C}(A,B)\not=\emptyset$\,.    Hence we make a  key distinction is between: 
\bit
\item  the \em existence \em of a physical process, that is ${\bf C}(A,B)\not=\emptyset$; and,
\item the \em flow  of information \em enabled by such a process: it is whether information flow  can occur which in this paper will witnesses  a causality assertion $A \leq B$.
\eit

\begin{example}[Proof theory]  The passage from an ordered structure to a categorical structure, or one from \em assertion \em to \em witnessing\em, is exactly what has occurred in logic, specifically in proof theory.  While in algebraic logic one asserts that there \emph{is} a proof which derives  predicate $B$ from predicate $A$, in categorical logic one also articulates \emph{how} this can be established by explicitly  giving the proofs, a proof then being a morphism in some category  of type $A\to B$ (see e.g.~\cite{LambekScott, AbrTze}). So rather than focussing on provability one also takes the structure of the  \emph{space}  of proofs into account:
\[
{\mbox{partial order}
\over
\mbox{category}}
=
{\mbox{asserting}
\over
\mbox{providing witnesses}}
=
{\mbox{algebraic logic}
\over
\mbox{categorical logic}}
=
{\mbox{causal structure}
\over
\mbox{``content of this paper''}}
\]
But in proof theory the paradigm connecting the ordered structure and the categorical structure is: 
\[
A \leq B \ \Longleftrightarrow\ {\bf C}(A,B)\not=\emptyset\,.
\]
or in words, $B$ is derivable from $A$ if there exists a proof that does so.  The above discussion shows that this proof-theory paradigm cannot be retained on-the-nose, and rather than having existence of a morphism as 
witnessing partial ordering, we will require the existence of an information-flow-enabling morphism.
\end{example}

We shall now formalize what we mean by information flow. 

\begin{remark}
In what follows we have categories of \em deterministic processes \em in mind, i.e.~non-postselected.  In CQM terms, this means that the category only contains one scalar, namely $1_{I}$, representing `certainty'.
\end{remark}

\begin{defn}\label{def:pointed1}\em 
We say that a process $f:A\to B$ is: 
\bit
\item \em constant on states \em iff for all $\psi, \phi:{I}\to A$  we have $f\circ\psi=f\circ\phi$;
\item \em is determined by its action on states \em iff for all $g:A\to B$,
$f\circ\psi=g\circ\psi$ for all $\psi:{I}\to A$ implies $f=g$.
\eit
\end{defn}

In this paper \em information flow \em means a non-constant process $f:A\to B$. Since in this case there exists $\psi, \phi:{I}\to A$ with $f\circ\psi\not=f\circ\phi$,  in a scenario bob can choose to `feed' either $\psi$ or $\phi$ into $f$, so that alice receives  $f\circ\psi$ or $f\circ\phi$ respectively: 
\beqa
&&\hspace{-2.1cm}\underbrace{\epsfig{figure=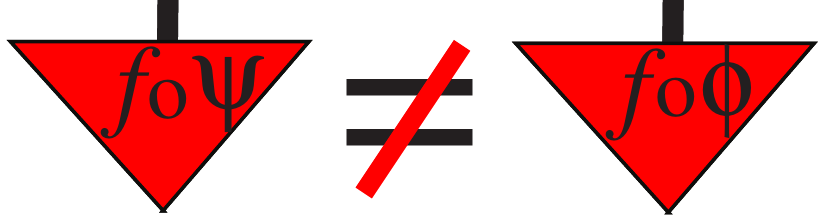,width=65pt}}\\
&\epsfig{figure=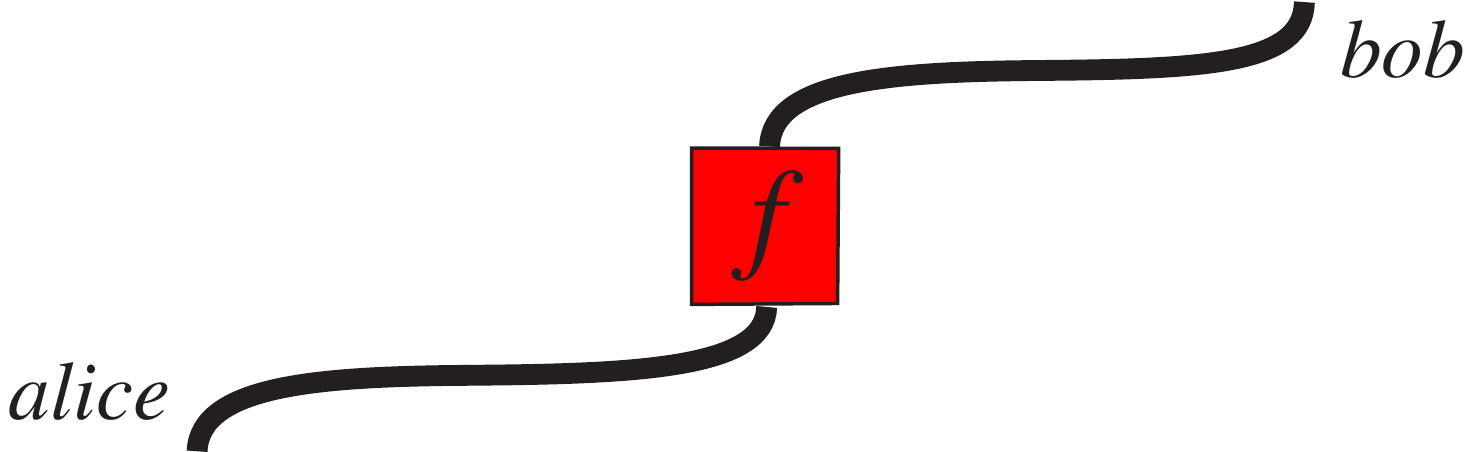,width=150pt}&\\
\overbrace{\epsfig{figure=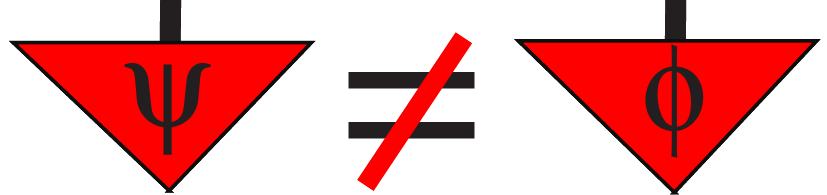,width=65pt}}\hspace{-2.2cm}&&
\eeqa

\begin{remark}[Well-pointedness]
Conditions of `well-pointedness', as used in Definition \ref{def:pointed1}, are sometimes thought to be undesirable, both for mathematical and physical reasons \cite{Johnstone, Isham}.  However, our level of generality will also capture `pointless' objects: we will show that, in a well-pointed situation, the notion of information flow that refers to points can be equivalently stated purely in terms of connectedness, without reference to states.  It is then this pointless characterization that we will use throughout the paper.  
\end{remark}

We shall now establish that  information flow  from $A$ to $B$ is captured by topological  connectedness in the graphical language:
\begin{center}
\begin{tabular}{|c|c|} 
 information flow  & no information flow \\ 
\hline
\epsfig{figure=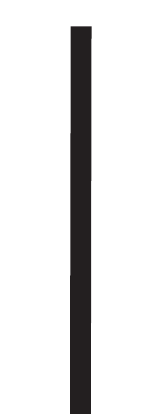,width=16pt}   & \epsfig{figure=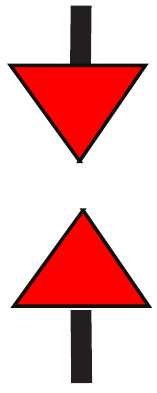,width=16pt} \\
\hline
\end{tabular} 
\end{center}


\begin{defn}\label{def:disconnected}\em
In an SMC,  a morphism  $f:A\to B$ is  \em disconnected \em if it decomposes as $f=p\circ e$ for some $e:A\to I$ and $p:I\to A$. A morphism is \em connected \em if it is not disconnected.
\end{defn}



\begin{prop}[Equivalence of constancy and disconnectedness]\label{propo:infoflowpoints}
If all scalars are equal to $1_{I}$ and if processes are determined by their action on states, then the following are equivalent:
\bit
\item $f:A\to B$ is constant on states;
\item $f:A\to B$ is disconnected. 
\eit 
\end{prop}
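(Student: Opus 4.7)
The proposition has two directions, and the forward one (disconnected $\Rightarrow$ constant on states) is essentially a one-line calculation that uses only the hypothesis that all scalars collapse to $1_I$; the reverse direction requires slightly more care.

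\textbf{Step 1 (Disconnected implies constant on states).} Suppose $f = p \circ e$ with $e : A \to I$ and $p : I \to B$ (the codomain of $p$ in Definition \ref{def:disconnected} must be $B$ for the types to match). For any state $\psi : I \to A$, the composite $e \circ \psi : I \to I$ is a scalar, hence equals $1_I$ by hypothesis. Therefore
\[
f \circ \psi \;=\; p \circ e \circ \psi \;=\; p \circ 1_I \;=\; p,
\]
independently of $\psi$, so $f$ is constant on states.

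\textbf{Step 2 (Constant on states implies disconnected).} Pick any state $\psi_0 : I \to A$ and set $p := f \circ \psi_0 : I \to B$; by constancy, $p = f \circ \psi$ for every state $\psi : I \to A$. Pick any effect $e : A \to I$. For an arbitrary state $\psi : I \to A$ the composite $e \circ \psi$ is again a scalar, so it equals $1_I$ and hence
\[
(p \circ e) \circ \psi \;=\; p \circ (e \circ \psi) \;=\; p \circ 1_I \;=\; p \;=\; f \circ \psi.
\]
Thus $p \circ e$ and $f$ agree on every state, so by the hypothesis that processes are determined by their action on states, $p \circ e = f$, witnessing that $f$ is disconnected.

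\textbf{Expected obstacle.} The calculations are short; the only subtle point is existence of at least one state $\psi_0 : I \to A$ and one effect $e : A \to I$, since without a state the definition of $p$ is empty, and without an effect no decomposition $p \circ e$ exists at all. In the intended well-pointed setting this is implicit (otherwise ``determined by action on states'' trivialises the hom-set), so I would flag this as a mild non-degeneracy assumption rather than attempt to derive it. The conceptual content is that the tensor-unit hypothesis ``all scalars $= 1_I$'' is precisely what makes the effect $e$ irrelevant once it is postcomposed with $p$, which is exactly the topological picture of a diagram splitting into two disconnected pieces joined only through $I$.
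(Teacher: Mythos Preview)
Your proof is correct and follows essentially the same route as the paper's: both directions hinge on the observation that any scalar $e\circ\psi$ collapses to $1_I$, and the constant-to-disconnected direction then invokes well-pointedness to conclude $f=p\circ e$. Your flagging of the implicit existence of at least one state and one effect is apt; the paper's proof silently makes the same assumption when it picks ``any $\pi:A\to I$''.
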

\begin{proof}
Let $f$ be constant on states and $\phi=f\circ \psi$ be that constant. Then we indeed have $f=\phi\circ\pi$ for any $\pi:A\to{I}$, since: 
\[
(\phi\circ\pi)\circ\psi=\phi\circ\underbrace{(\pi\circ\psi)}_{1_{I}}=\phi=f\circ \psi\vspace{-4.5mm}
\]
for all $\psi:{I}\to A$. Conversely:
\[
(\phi\circ \pi)\circ\phi = \phi\circ(\pi\circ\phi)=\phi\circ(\pi\circ\phi')=(\phi\circ \pi)\circ\phi'
\]
 where we again used uniqueness of scalars.
\end{proof}

 Hence we have characterized information flow using the structure of an SMC: $A$ and $B$ \em are causally related \em if a process $f:A\to B$ can take place which is not disconnected, and dually, $A$ and $B$ \em are not causally related \em iff all processes $f:A\to B$ are disconnected (i.e.~factor through the tensor unit), that is:
\[
{\bf C}(A, B)=\{\psi\circ\top_A \mid \psi:{I}\to  B\}
=\left\{\left.\raisebox{-4.5mm}{\epsfig{figure=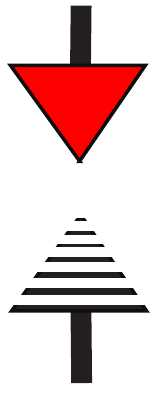,width=12pt}}\ \ \right| \
\raisebox{-1.5mm}{\epsfig{figure=CCstate.pdf,width=12pt}}:{I}\to  B \right\} 
\]
where $\raisebox{-1.5mm}{\epsfig{figure=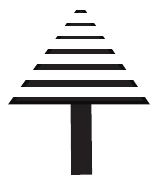,width=12pt}}: A\to{I}$ is the unique effect, which is stated in anticipation of 
the main result of the following section.

\subsection{Terminality of ${I}$ as `no  correlation-induced signaling'}\label{sec:restriction2}

Our notion of causality has been based so far on information flow between distinct locations. In the previous Subsection this was enabled by a process $f:A\to B$. We shall call this \em information flow of the 1\st kind\em.  However, given a bipartite state $f:{I}\to A\otimes B$, there may also be another type of information flow, which we call \em information flow of the 2\nd  kind\em. Diagrammatically, they appear as follows:

 \begin{center}
\begin{tabular}{|c|c|} 
1\st kind info-flow & 2\nd kind info-flow \\
\hline
\begin{minipage}{2.4cm}
\vspace{2mm}
\begin{center}
$\ \ \underbrace{\epsfig{figure=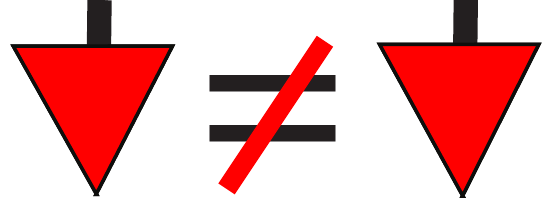,width=35pt}}\ \  $\vspace{1.5mm}\\
$\epsfig{figure=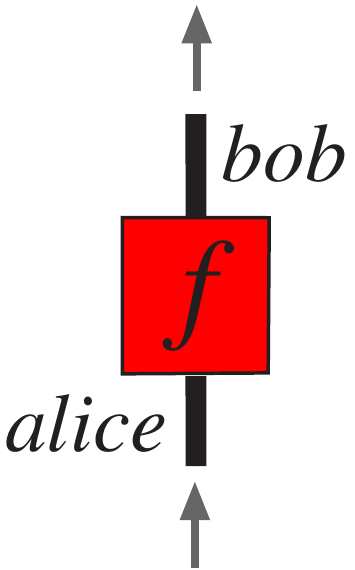,width=40pt}\ \  $\vspace{1.5mm}\\
$\  \ \overbrace{\epsfig{figure=psismall.pdf,width=35pt}}\ \  $\vspace{2mm} \\ 
\end{center}
\end{minipage}
& 
\begin{minipage}{4.0cm}
\begin{center}
$\underbrace{\epsfig{figure=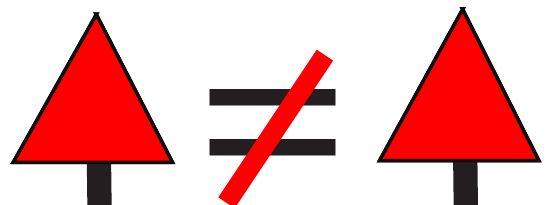,width=35pt}}\ \ \underbrace{\epsfig{figure=psismall.pdf,width=35pt}}\!\!\!$\vspace{1.5mm}\\
$\epsfig{figure=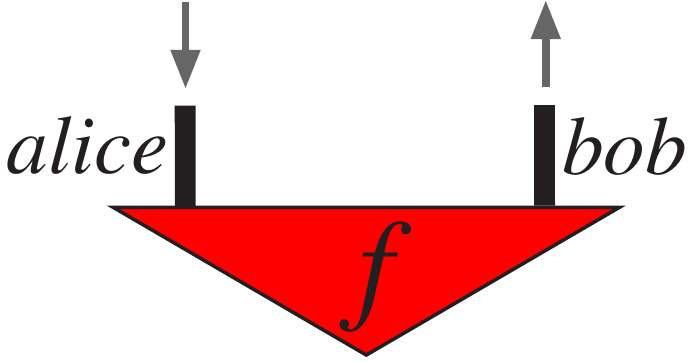,width=80pt}$\\ 
\end{center}
\end{minipage}\\
\hline
\end{tabular} 
\end{center}

\begin{example}[Quantum entanglement]
In quantum theory a bipartite state may be entangled. In that case, information flow of the 2\nd kind corresponds to correlations between measurement outcomes of the two parties that are \em signaling\em, which can only happen if we allow post-selection \cite{MRV}.
\end{example}

The following postulate imposes compatibility between information flows of the 2\nd kind  with those of the 1\st kind; in other words, it forbids correlation-induced signaling when systems are not causally related: otherwise 2\nd kind information-flow could be used to produce 1st kind information-flow, thus violating causal structure. 

\begin{post}[Causal consistency]\label{epostulate}
For a bipartite state $f:{I}\to A\otimes B$, information flows of the 2\nd kind cannot  occur when $A$ and $B$ are not causally related.
\end{post}

\begin{remark}
Note that we could have made the stronger requirement that entanglement-induced signaling does not occur even for causally related systems, but since these `information flows of the 2\nd kind across time' do not cause any inconsistency with causal structure we ignore them.
\end{remark}

If all bipartite states are \em disconnected\em:
\begin{center}
\epsfig{figure=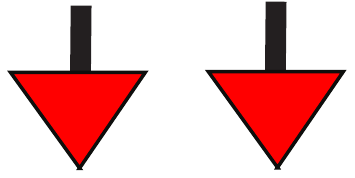,width=36pt}\vspace{2mm}
\end{center}
then, by analogy with the disconnected processes that characterize causal independence, 
there will  be no information flow of the 2\nd kind, hence Postulate \ref{epostulate} is trivially satisfied.
However, the kind of universes that interest us of course do have connected bipartite states,
both in quantum theory and for classical probabilistic states, for example, a perfectly correlated bipartite state.  The following definition asserts the existence of states of this kind: it states that all processes can be faithfully represented by bipartite states. In the context of quantum theory this corresponds to the Choi-Jamio\l{}kowski isomorphism \cite{Choi,Jam}, as described in Example \ref{ex:mixedQT} below. Technically, it weakens the definition of compactness  (see above) which, as we shall in Subsection \ref{sec:Incompatibilities}, cannot be retained.  

\begin{defn}\label{Def:CJuniv}\em
In a \em CJ-universe \em for all systems $A$ there exists another system $A^*$, not causally related to $A$, as well as a bipartite state $\eta_A:{I}\to A^*\otimes A$,  called the \em CJ-state\em, such that for all $B$, 
\[
{\bf C}(A, B)\to {\bf C}({I}, A^*\otimes B)::f\mapsto (1_{A^*}\otimes f)\circ\eta_A
\]
is an injective mapping. 
\end{defn}

Note that this definition implies in particular that for the case $B={I}$, there is an injection from effects $\pi:B\to {I}$ to states $(1_{A^*}\otimes \pi)\circ\eta_A: {I}\to A^*$.

\begin{defn}\label{Def:environment_structure}\em
By an \em environment structure \em we mean a family of effects $\top_A:A\to{I}$, one for each system $A$.  
We call a CJ-universe with an environment structure a \em  CJ$\top$-universe\em.
\end{defn}

We can interpret these processes as `removing that system from our scope'.  In quantum theory this role is played by the partial trace operation.   



\begin{defn}\em
A \em terminal object \em in a category $\mathbf{C}$ is an object $A$ for which, for each object $B\in | \mathbf{C}|$, there is a unique morphism from $B$ to $A$.
\end{defn}

Note that the uniqueness of $1_{I}$ as a scalar is implied by terminality of the tensor unit.

\begin{thm}\label{thm:CJuniv}
A CJ$\top$-universe obeying Postulate \ref{epostulate} has a terminal tensor unit.
\end{thm}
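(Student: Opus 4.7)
The goal is to show that, given the environment structure, the effect $\top_B : B \to I$ is the \emph{unique} morphism $B \to I$. Existence comes for free, so the whole task is uniqueness: for an arbitrary effect $\pi : A \to I$, I want to deduce $\pi = \top_A$. My plan is to play $\pi$ and $\top_A$ off against each other through the CJ-state $\eta_A$, using Postulate \ref{epostulate} to show that they become indistinguishable after being plugged into $\eta_A$, and then using injectivity built into the definition of a CJ-universe to conclude they were equal all along.

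\textbf{Key steps.} First, fix $A$ and apply the CJ-universe axiom to get a system $A^*$, not causally related to $A$, together with the CJ-state $\eta_A : I \to A^* \otimes A$ for which $f \mapsto (1_{A^*} \otimes f) \circ \eta_A$ is injective. Second, for any effect $\pi : A \to I$, form the induced state on $A^*$, namely $(1_{A^*} \otimes \pi) \circ \eta_A$, and likewise $(1_{A^*} \otimes \top_A) \circ \eta_A$. These are precisely the `information flows of the 2\nd kind' obtained from the bipartite state $\eta_A$ by varying the effect applied to the $A$-factor. Third, invoke Postulate \ref{epostulate}: since $A^*$ and $A$ are not causally related, no such 2\nd kind information flow is permitted, so the resulting state on $A^*$ cannot depend on the choice of effect, giving
\[
(1_{A^*} \otimes \pi) \circ \eta_A \;=\; (1_{A^*} \otimes \top_A) \circ \eta_A.
\]
Fourth, invoke the injectivity clause in the definition of a CJ-universe (specialised to $B = I$, exactly the case highlighted just after Definition \ref{Def:CJuniv}) to conclude $\pi = \top_A$. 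Since $\pi$ was an arbitrary effect, $\top_A$ is the unique morphism $A \to I$, which is terminality of $I$.

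\textbf{Where the work lies.} The only real subtlety is making the appeal to Postulate \ref{epostulate} airtight, since `information flow of the 2\nd kind' is introduced informally in the text. I would spell out that in our deterministic setting the content of the postulate, applied to a bipartite state across non-causally-related systems, is precisely that varying an effect on one tensor factor cannot alter the residual state on the other factor; this is the natural dualisation of the characterisation of 1\st kind information flow in Proposition \ref{propo:infoflowpoints}, where constancy was expressed by factoring through $I$. Once this is pinned down, the rest is an essentially one-line argument: the postulate collapses all effects on $A$ to the same $A^*$-state under $\eta_A$, and injectivity of the CJ-embedding then collapses the effects themselves. I expect no technical obstacles beyond this conceptual bookkeeping.
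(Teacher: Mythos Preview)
Your proposal is correct and is essentially the paper's own argument: both use injectivity of the CJ-embedding at $B=I$ together with Postulate~\ref{epostulate} on the bipartite state $\eta_A$ (with $A^*$ and $A$ not causally related) to force uniqueness of effects, with existence supplied by the environment structure. The only cosmetic difference is that the paper phrases it as a proof by contradiction (distinct effects would yield distinct $A^*$-states, contradicting the postulate), whereas you run the same implication directly.
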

\begin{proof}
If $\pi\not=\pi': A\to{I}$, then by Definition~\ref{Def:CJuniv} (with $B:= {I}$) we have 
\[
(1_{A^*}\otimes \pi)\circ\eta_A\not=(1_{A^*}\otimes \pi')\circ\eta_A\,, 
\]
which contradicts Postulate \ref{epostulate}.  Hence there can at most be one effect and its existence is guaranteed by the  environment structure.
\end{proof}

\begin{example}[Classical probability theory]
We define classical probability theory as a subcategory $\mathbf{Stoch}$ of the category of real matrices $\mathbf{Mat}(\mathbb{R})$: morphisms are stochastic maps, i.e. finite-dimensional real matrices with entries $p_{ij}>0$, and whose columns are normalised, i.e.~$\forall j\;\Sigma_i p_{ij}=1$. The monoidal product is the Kronecker product of matrices. States are given by normalised positive-real row vectors, and the environment structure is given by marginalisation of the probability distribution. A CJ-state is then given by a perfectly correlated bipartite probability distribution 
\[
\mathbf{v}_{ij}= \left\{ \begin{array}{ll}
        \frac{1}{n} & \textrm{if } i=j;\\
       0 & \textrm{otherwise}.\end{array} \right.
\]
which can easily be seen to provide an injective mapping from operations to states.
\end{example}

\begin{example}[Mixed quantum theory]\label{ex:mixedQT}
$\mathbf{FdHilb}$ is the motivating example of a $\dagger$-SMC in CQM, and one might attempt to define it as a $CJ\top$-universe, using the Bell state as the CJ state. However this is problematic 
for the following reason. The environment structure provides a unique morphism $\top_A:A\rightarrow I$ for each object $A$, and the interpretation of this family of morphisms $\{\top_A\}_A$ is the partial trace operation (i.e. the operation which sends the system $A$ to the environment). But since tracing out a system in quantum theory  typically leads  to a mixed state when starting with a global pure state (e.g.~a maximally entangled state), we should consider the category of mixed operations rather than $\mathbf{FdHilb}$, whose states are always pure (see also Remark 7 of \cite{CPer}). 

Hence we define a category $\mathbf{Mix}$ whose objects are finite-dimensional Hilbert spaces (i.e. the same objects as $\mathbf{FdHilb}$), and whose morphisms are completely positive maps for the appropriate domain and codomain: denoting the set of linear operators
on $\mathcal{H}$ 
as $L(\mathcal{H})$
we define
\[
\mathbf{Mix}(\mathcal{H}_1,\mathcal{H}_2):=\{f:L(\mathcal{H}_1)\rightarrow L(\mathcal{H}_2)\;|\; f  \textrm{ is completely positive}\}.
\]
Monoidal structure is again given by the tensor product of Hilbert spaces, and the environment structure for $\mathbf{Mix}$ is given by the partial trace. 

Now, we define $|\mathcal{B}\rangle:=\Sigma_i |i\rangle\otimes | i\rangle$ for a fixed orthonormal basis of  $\mathcal{H}_1\otimes\mathcal{H}_2$ (i.e.~the maximally entangled state for $\mathcal{H}_1\otimes\mathcal{H}_2$). Then the CJ state for $\mathbf{Mix}$ is the operator $|\mathcal{B}\rangle\langle \mathcal{B}|$, since it supports the Choi-Jamio\l{}kowski isomorphism $\phi$ from completely positive maps $f:L(\mathcal{H}_1)\rightarrow L(\mathcal{H}_2)$ to positive operators $M$ on $L(\mathcal{H}_1\otimes\mathcal{H}_2)$, given by
\[
\phi:: f\mapsto  (f\otimes 1_{L(\mathcal{H}_2)})\circ|\mathcal{B}\rangle\langle \mathcal{B}|
\]

and whose inverse is
\[
\phi^{-1}::M\mapsto \mathrm{Tr}_{\mathcal{H}_1}[(1_{\mathcal{H}_2}\otimes(-)^T)M].
\]

If we restrict to the subcategory $\mathbf{Mix}_\top$ of $\mathbf{Mix}$ whose morphisms are completely-positive \em trace-preserving \em maps then we obtain a $CJ\top$-universe, i.e.~the tensor unit $\mathbb{C}$ is terminal. Note that the construction of a category of mixed states and operations has been axiomatised in \cite{SelingerCPM}, where the \em CPM construction \em was defined for any $\dagger$-compact SMC.
\end{example}

This has the following trivial consequences.

\begin{cor}
Under the assumptions of Theorem \ref{thm:CJuniv} we have: 
\bit
\item All scalars are equal to $1_{I}$\,;
\item States are `normalized' i.e.~$\top_A\circ\psi=1_{I}$ for all $\psi:{I}\to A$\,;
\item For all $A$, $B$ we have $\top_{A\otimes B}=\top_A\otimes \top_B$\,;
\item All bipartite effects are disconnected.
\eit
\end{cor}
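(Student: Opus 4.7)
The plan is to derive all four bullets as direct consequences of Theorem~\ref{thm:CJuniv}, i.e.~of terminality of $I$. Since terminality asserts that for every object $X$ there is a unique arrow $X \to I$, all four assertions should reduce to applications of this uniqueness statement (plus, in one case, bifunctoriality of $\otimes$).

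First I would dispatch bullet~1: any scalar is by definition a morphism $I \to I$, and terminality of $I$ forces that hom-set to be a singleton, which must contain $1_I$. Bullet~2 then follows immediately because $\top_A \circ \psi : I \to I$ is a scalar, hence equals $1_I$ by bullet~1. For bullet~3, I would observe that $\top_{A\otimes B}$ and $\top_A \otimes \top_B$ are both morphisms of type $A \otimes B \to I$ (using $I \otimes I = I$ in the strict monoidal setting), so terminality gives equality on the nose.

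For bullet~4, let $\pi : A \otimes B \to I$ be any bipartite effect. Terminality forces $\pi = \top_{A\otimes B}$, and bullet~3 then rewrites this as $\top_A \otimes \top_B$. To identify this with a disconnected morphism in the sense of Definition~\ref{def:disconnected}, I would use bifunctoriality to write
\[
\top_A \otimes \top_B \;=\; (\top_A \otimes 1_I)\circ(1_A \otimes \top_B)\,,
\]
which exhibits the bipartite effect as a composite through the object $I$ (after absorbing $A \otimes I = A$), i.e.~as $p \circ e$ with $e : A \otimes B \to I$ and $p : I \to I$, as required. Graphically this is the statement that the unique diagram $A \otimes B \to I$ consists of two disjoint triangles, matching the picture CC1redbis.

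The main (mild) obstacle is purely bookkeeping in bullet~4: one has to be careful about how the definition of ``disconnected'' for a morphism $f:A\to B$ is instantiated when $B = I$, and in particular to recognize that the tensor factorization $\top_A \otimes \top_B$ \emph{is} a factorization through $I$ once one massages it via the strict monoidal identities $I \otimes X = X$. Beyond that, no nontrivial CQM machinery enters: everything is a one-line consequence of the fact that, after Theorem~\ref{thm:CJuniv}, every hom-set of the form $\mathbf{C}(X, I)$ is a singleton.
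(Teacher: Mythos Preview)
Your proof is correct and matches the paper, which gives no detailed argument and simply labels these as ``trivial consequences'' of terminality of $I$. One small wobble in bullet~4: your bifunctorial rewriting $(\top_A\otimes 1_I)\circ(1_A\otimes\top_B)$ factors through $A\otimes I=A$, not through $I$, so it does not literally yield $e:A\otimes B\to I$ and $p:I\to I$ as you claim; but this is harmless, since any effect $\pi:X\to I$ is trivially disconnected via $\pi=1_I\circ\pi$, and the substantive content---that every bipartite effect equals $\top_A\otimes\top_B$ and hence splits graphically into two separate triangles---is already delivered by bullet~3.
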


Within this framework we can now show that teleportation without classical communication cannot generate any information flow.

\begin{cor}\label{cor:notele}
Under the assumptions of Theorem \ref{thm:CJuniv}, the composite of the protocol
\[
\raisebox{0mm}{\epsfig{figure=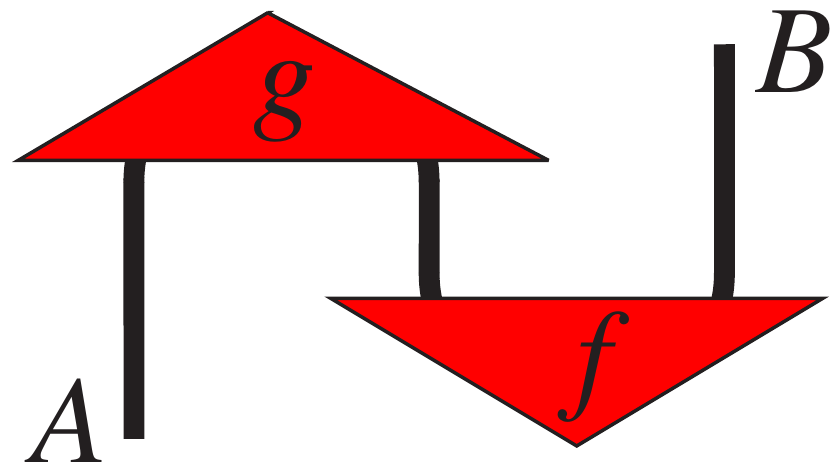,width=80pt}}
\]
must be disconnected, that is, it is of the form:
\[
\raisebox{-7mm}{\epsfig{figure=CC1.pdf,width=16pt}}: \ 
A\to {I}  \to  B
\]
\end{cor}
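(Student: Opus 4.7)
The plan is to read off the symbolic form of the depicted protocol as
\[
f \;=\; (e \otimes 1_B) \circ (1_A \otimes \eta) \;:\; A \to B,
\]
where $\eta : I \to X \otimes B$ is the ``Bell/CJ state'' wire whose second output is $B$, and $e : A \otimes X \to I$ is the ``Bell effect'' wire joining the input $A$ to the remaining leg of $\eta$. The crucial observation is that terminality of $I$ from Theorem~\ref{thm:CJuniv} forces the effect $e$ to be the unique map $\top_{A \otimes X}$, which by the corollary preceding this statement further factors as $\top_A \otimes \top_X$.

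I would then compute via the bifunctorial interchange law~(\ref{eq:bifunct}):
\[
f \;=\; (\top_A \otimes \top_X \otimes 1_B) \circ (1_A \otimes \eta) \;=\; \top_A \otimes \bigl((\top_X \otimes 1_B) \circ \eta\bigr) \;=\; \top_A \otimes \psi,
\]
where $\psi := (\top_X \otimes 1_B) \circ \eta : I \to B$. In a strict symmetric monoidal category one checks, by a further application of bifunctoriality together with the strict identifications $A \otimes I = A$ and $I \otimes B = B$, that $\top_A \otimes \psi = \psi \circ \top_A$. This exhibits $f$ as factoring through $I$, which is precisely the notion of disconnectedness in Definition~\ref{def:disconnected}.

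The argument is structurally very short; the only part requiring any care is correctly reading the symbolic form of the protocol off its picture, after which the proof is essentially a single invocation of terminality. I do not anticipate a serious obstacle: in a CJ$\top$-universe the uniqueness of effects leaves no room for a Bell-style effect to carry any information, so any composite whose only ``contracting'' piece is an effect must collapse to something that factors through the tensor unit. If one wished to abstract slightly, the same argument shows that \emph{any} scenario of the shape ``state, then effect, then state'' running from $A$ to $B$ is forced to be disconnected, so the teleportation composite is merely the simplest instance.
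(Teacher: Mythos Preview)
Your proposal is correct and is essentially the symbolic transcription of the paper's own diagrammatic proof: both replace the bipartite effect by $\top_A\otimes\top_X$ via terminality of $I$, then use bifunctoriality to peel off $\top_A$ and leave a state $\psi:I\to B$, exhibiting the composite as $\psi\circ\top_A$. The only difference is presentational---the paper draws the rewrite steps, you write them as equations.
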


The diagrammatic proof is as follows:
\[
\epsfig{figure=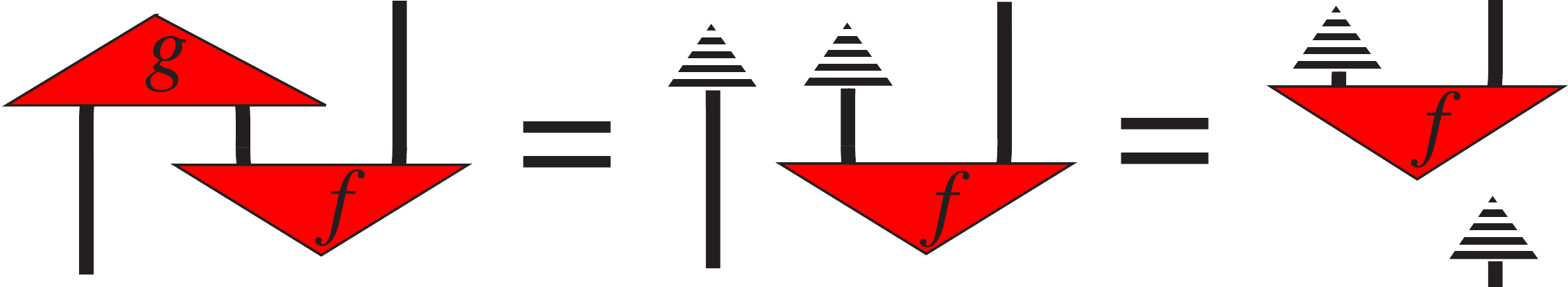,width=246pt}
\]
Similarly we have:
\[
\epsfig{figure=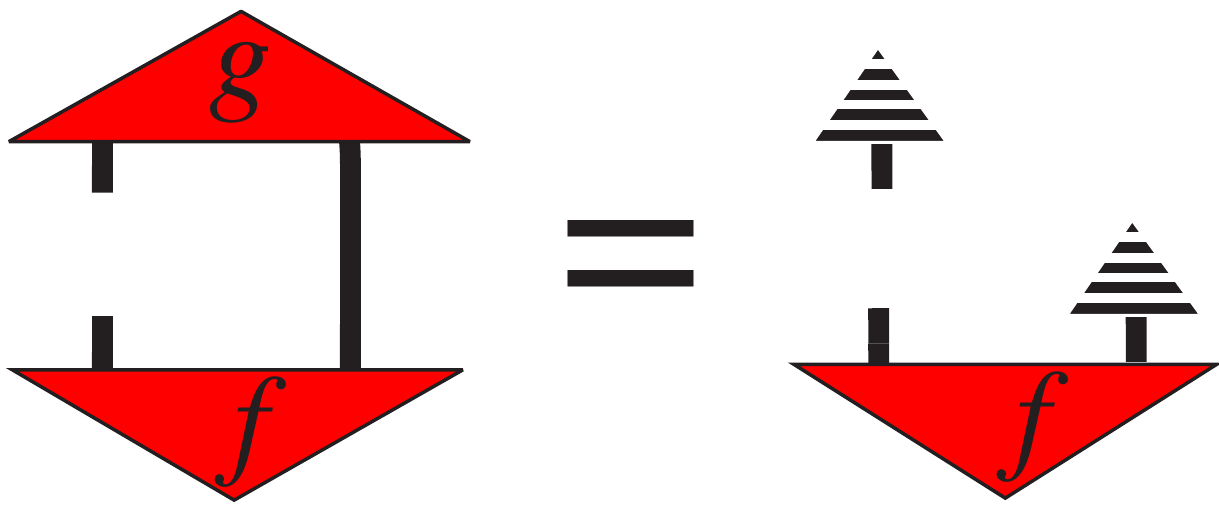,width=124pt}
\]

\begin{cor}\label{cor:disc}
Under the assumptions of Theorem \ref{thm:CJuniv}, in a compact category all identities must be disconnected and hence trivial, and in a dagger category all bipartite states must be disconnected, and each object has at most one state.
\end{cor}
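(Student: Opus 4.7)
The plan is to feed the consequences of Theorem \ref{thm:CJuniv} (existence of a unique effect $\top_A$, multiplicativity $\top_{A\otimes B}=\top_A\otimes\top_B$, and disconnectedness of all bipartite effects) together with Corollary \ref{cor:notele} through whatever extra structure (compactness or dagger) is available, and to read off the stated triviality as a corollary.

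For the compact case, I would exploit the snake identity
\[
1_A \;=\; (\epsilon_A\otimes 1_A)\circ(1_A\otimes\eta_A),
\]
which expresses $1_A$ as precisely a post-selected teleportation composite of the type treated in Corollary \ref{cor:notele}. That corollary then forces $1_A$ to be disconnected, so $1_A=\psi_A\circ\top_A$ for some state $\psi_A:I\to A$. Triviality follows at once: for any $f:A\to B$ one has $f=f\circ 1_A=(f\circ\psi_A)\circ\top_A$, so every morphism out of $A$ factors through $I$; and for any state $\phi:I\to A$ one has $\phi=1_A\circ\phi=\psi_A\circ(\top_A\circ\phi)=\psi_A$, since $\top_A\circ\phi$ is the unique scalar $1_I$. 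Thus in a compact category satisfying our assumptions the identities are not merely disconnected but all of ${\bf C}(A,B)$ collapses through $I$.

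For the dagger case, I would simply note that $\dagger$ exchanges states and effects. Given a bipartite state $\xi:I\to A\otimes B$, the dagger $\xi^\dagger:A\otimes B\to I$ is a bipartite effect, and by terminality plus multiplicativity of $\top$ it is forced to equal $\top_A\otimes\top_B$. Applying $\dagger$ once more gives $\xi=\top_A^{\dagger}\otimes\top_B^{\dagger}$, which is manifestly disconnected. The identical argument applied to a single state $\psi:I\to A$ yields $\psi^\dagger=\top_A$ by terminality, hence $\psi=\top_A^{\dagger}$, so at most one state exists per object.

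The only real obstacle is bookkeeping: one has to recognise the snake composite built from $\eta_A$ and $\epsilon_A$ as an instance of the protocol whose disconnectedness is supplied by Corollary \ref{cor:notele}. Since that corollary is itself proved by a snake-style planar deformation (the teleportation picture quoted after its statement), this identification is routine, and no further ingredients beyond terminality of $I$, the $\dagger$-axioms, and the compact axioms enter the argument.
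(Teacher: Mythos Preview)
Your proposal is correct and follows essentially the same route as the paper. The paper's own proof is a one-liner (``The first part is a direct mathematical analogue to Corollary~\ref{cor:notele} and the second part is straightforward''), and your argument is precisely the natural unpacking of that sentence: the snake identity is recognised as an instance of the teleportation composite of Corollary~\ref{cor:notele}, and the dagger case is handled by the bijection between states and effects forced by terminality.
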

\begin{proof}
The first part is a  direct mathematical analogue to Corollary \ref{cor:notele} and the second part is straightforward.
\end{proof}

A formal account of this is in Section \ref{sec:Incompatibilities}; in Section \ref{sec:constructCC} we show how we can retain the full power of CQM.
 
 \begin{remark}[Time-symmetric quantum mechanics]
The passage from dagger compact categories to causal categories can also be seen as an abstract counterpart to the passage from time-symmetric quantum mechanics (TSQM) \cite{ABL} to the usual formalism of quantum mechanics.   In the formalism of TSQM, not only do we assume the existence of a pre-selected state (i.e.~one which has been prepared by measurement), but we also assume that measurement outcomes have been post-selected. 
This corresponds to how a dagger symmetric monoidal category  is used in CQM, because in this setting the dagger imposes a formal symmetry between states and effects.
In TSQM, the violation of Postulate \ref{epostulate} has been partially addressed by restricting the formalism to those classes of intial and final (post-selected) states which do not lead to signaling \cite{MRV}; this is ad hoc, and these classes lack an elegant formal characterisation.  
\end{remark}

\begin{remark}
In fact, when exploiting the input-output duality at both at alice's and bob's sites we can identify  two more kinds of information flow:
 \begin{center}
\begin{tabular}{|c|c|}
3\rd kind info-flow & 4\rth kind info-flow \\
\hline
\begin{minipage}{2.4cm}
\vspace{2mm}
\begin{center}
$\ \ \underbrace{\epsfig{figure=pismall.pdf,width=35pt}}\ \ $\vspace{1.5mm}\\
$\epsfig{figure=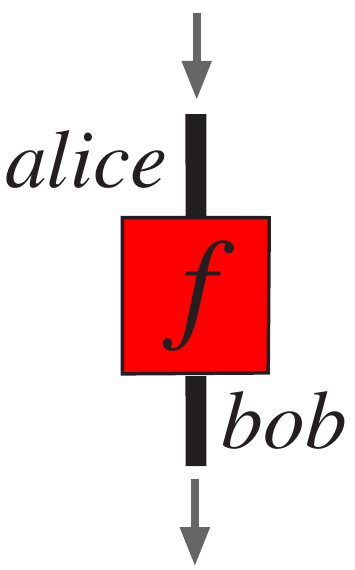,width=40pt}\ \  $\vspace{1.5mm}\\
$\  \ \overbrace{\epsfig{figure=pismall.pdf,width=35pt}}\ \  $\vspace{2mm} \\
\end{center}
\end{minipage}
&
\begin{minipage}{4.0cm}
\begin{center}
$\epsfig{figure=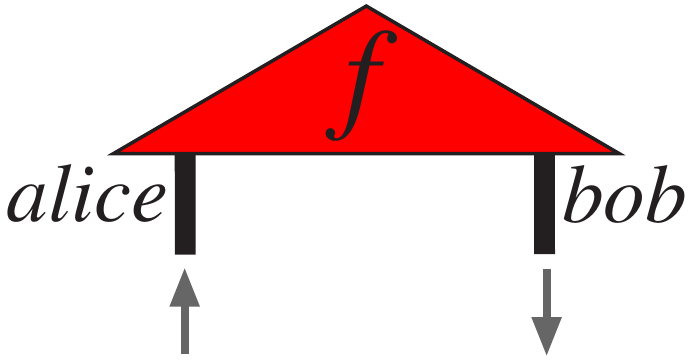,width=80pt}$\vspace{1.5mm}\\
$\overbrace{\epsfig{figure=psismall.pdf,width=35pt}}\ \ \overbrace{\epsfig{figure=pismall.pdf,width=35pt}}\!\!\!$\\
\end{center}
\end{minipage}\\
\hline
\end{tabular}
\end{center}
Each of these is however excluded by terminality; in the first case since there cannot be two non-equal effects and in the second case since the bipartite effect must itself be $\top_A\otimes \top_B$.
\end{remark}

\paragraph{Earlier work: Chiribella, D'Ariano and Perimotti.}   In \cite{CDP1}, Chiribella, D'Ariano and Perimotti use the existence of a unique deterministic effect, which they call the \em causality axiom \em \cite[Definition 25 \& Lemma 3]{CDP1}, to derive information-theoretic features of quantum theory. In that work, much use is made of the probabilistic structure of measurements and classical outcomes. However, in our framework, we can already derive such features without assuming probabilistic structure, e.g.~Corollary~\ref{cor:notele}. By using compositionality, we expose the structural---as opposed to probabilistic---aspects of information flow that follow from requiring causality.

\section{Partiality of the tensor from global state}\label{sec:partial}

In a monoidal category the tensor product exists for every pair of objects, in particular, a system $A$ can be tensored with itself to produce $A\otimes A$.  In that case, ${\bf C}({I}, A\otimes A)$ consists of all $\psi\otimes\phi$ for $\psi, \phi\in{\bf C}({I}, A)$. The independence of $\psi$ and $\phi$  lacks physical meaning, since if the two $A$s in $A\otimes A$ refer to one and the same system (spatiotemporally) then obviously we have $\psi=\phi$.  More generally, the freedom of composing arbitrary states of a system $A$ and a system $B$ implies that they should be independent in $A\otimes B$.  The contrapositive to this is that for systems that are not independent, we have to \em restrict composition of the states of subsystems\em, which can also be achieved simply by constraining the composition of the objects themselves.  We will derive this feature in this Subsection, where we first investigate to which slices (i.e.~object formulae) we can meaningfully assign states within the context of arbitrary scenarios or protocols.  In this manner we will develop a partial tensor product, which will be the analogue of spacelike hypersurfaces in relativity: we call these objects `spatial slices'.

Now, we will assume that in a scenario or protocol each object occurs only once as an input type and once as an output type.  This can be achieved without loss of generality simply by annotating an object with its `location' within the  scenario or protocol. The reason for this assumption is merely to guarantee that in the following definition an object is considered  only once. Recall that we assume that morphism formulae contain only atomic morphisms.

\begin{defn}\em
Let ${\cal F}$ be a morphism formula.   
A slice ${\cal B}$  \em is included in ${\cal F}$ \em if the objects occurring in ${\cal B}$ is a subset of the input and output types of the atomic morphisms that are in $\mathcal{F}$.
\end{defn}

In terms of the graphical language, a slice is included in a diagram just when it is a subset of the wires in it, which we can denote by putting ticks on the wires:
\beq\label{pic:slice}
\raisebox{-9mm}{\epsfig{figure=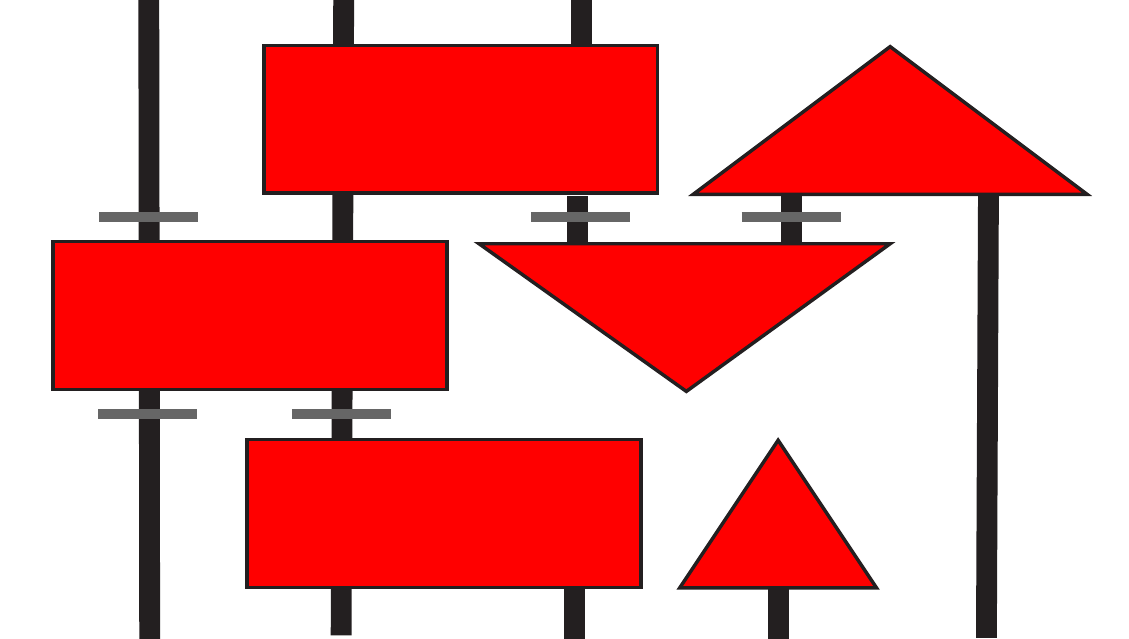,width=115pt}} 
\eeq


\begin{defn}\label{def:disconnectedbis}\em
A \em spatial slice \em is a  slice  $B_1\otimes \ldots\otimes B_n$ for which every pair of objects $(B_i, B_{j\not = i})$ that occur  in it  are \em disconnected objects\em, that is, if $\mathbf{C}(B_i,B_j)$ and $\mathbf{C}(B_j,B_i)$ are disconnected.
\end{defn}

While the slice in picture (\ref{pic:slice}) cannot be spatial, since it involves objects that are explicitly connected within the diagram, the following slice may be spatial:
\beq\label{pic:sliceII}
\raisebox{-9mm}{\epsfig{figure=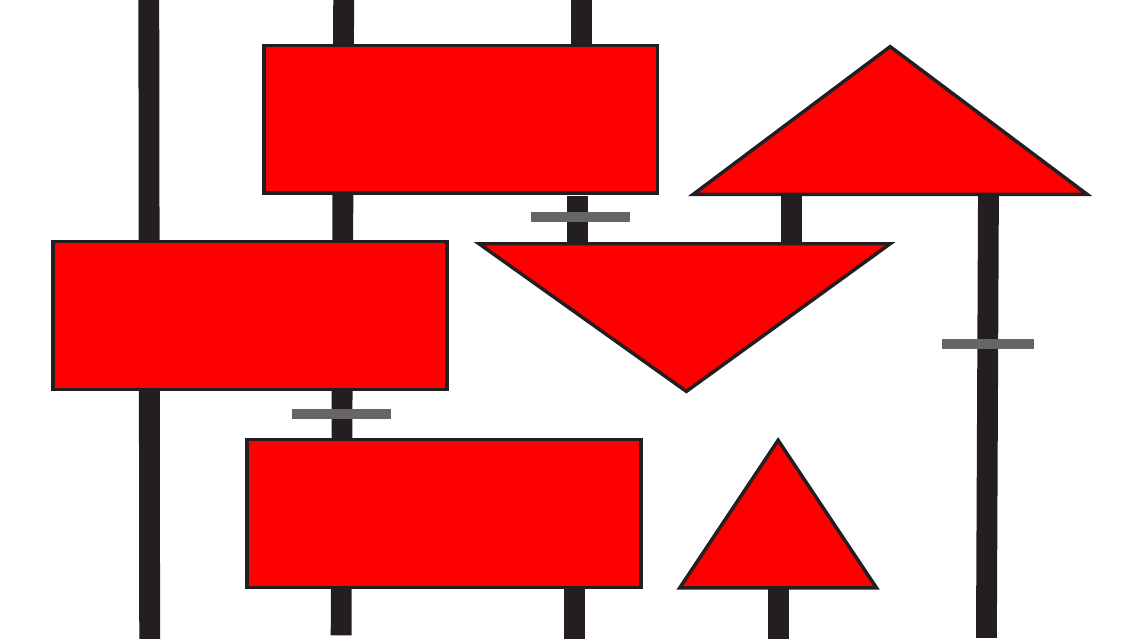,width=115pt}} 
\eeq
provided there are no scenarios for which the ticked objects are connected.

\begin{defn}\em
Let ${\cal F}$ be a morphism formula made up of atomic morphisms.  Another  morphism formula ${\cal G}$ is a \em sub-formula \em of ${\cal F}$, if ${\cal F}$ can be formed from ${\cal G}$ (necessarily included!), $\otimes$, $\circ$ and other morphism formula.  Similarly we define \em sub-scenario \em and \em sub-protocol\em.
\end{defn}

In the graphical language a sub-formula is simply part of a diagram:
\[
\raisebox{0mm}{\epsfig{figure=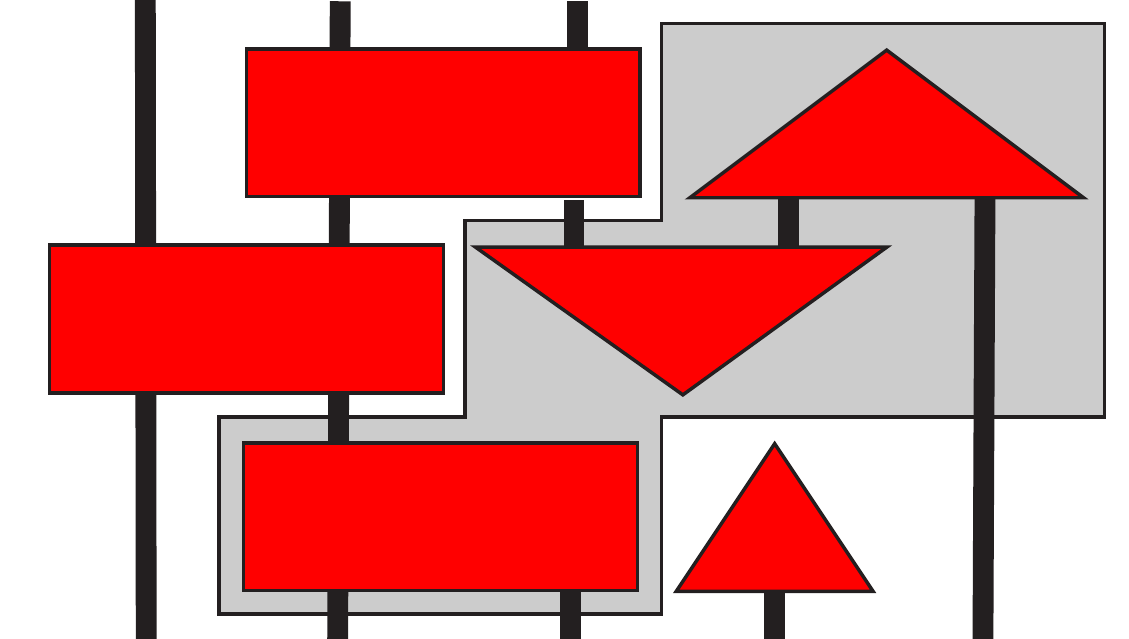,width=115pt}} 
\]
The following definition states the conditions under which we can assign a state to a slice included within a protocol or scenario, when an initial state is specified (cf.~`initial condition'). We denote by $\sigma$ the appropriate  composite of symmetry isomorphisms that realizes the stated type.

\begin{defn}\label{def:localstate}\em
Let ${\cal F}:A\to C$ be a scenario (or protocol), let ${\cal B}$ be a slice included in it with $B:={\cal B}$, and let $\psi: I\to A$ be a state.   The  \em local state at ${\cal B}$ relative to ${\cal G}$\em, where $\sigma\circ{\cal G}: I\to B\otimes B'$  with $g:={\cal G}$ is a sub-scenario   of ${\cal F}\circ \psi$, is the state $(1_B\otimes \top_{B'})\circ\sigma\circ g$.  
\end{defn}

With ${\cal F}$ and ${\cal B}$ as in picture (\ref{pic:sliceII}), for subscenario ${\cal G}$ the annotated part of ${\cal F}\circ \psi$\,: 
\[
\raisebox{-12mm}{\epsfig{figure=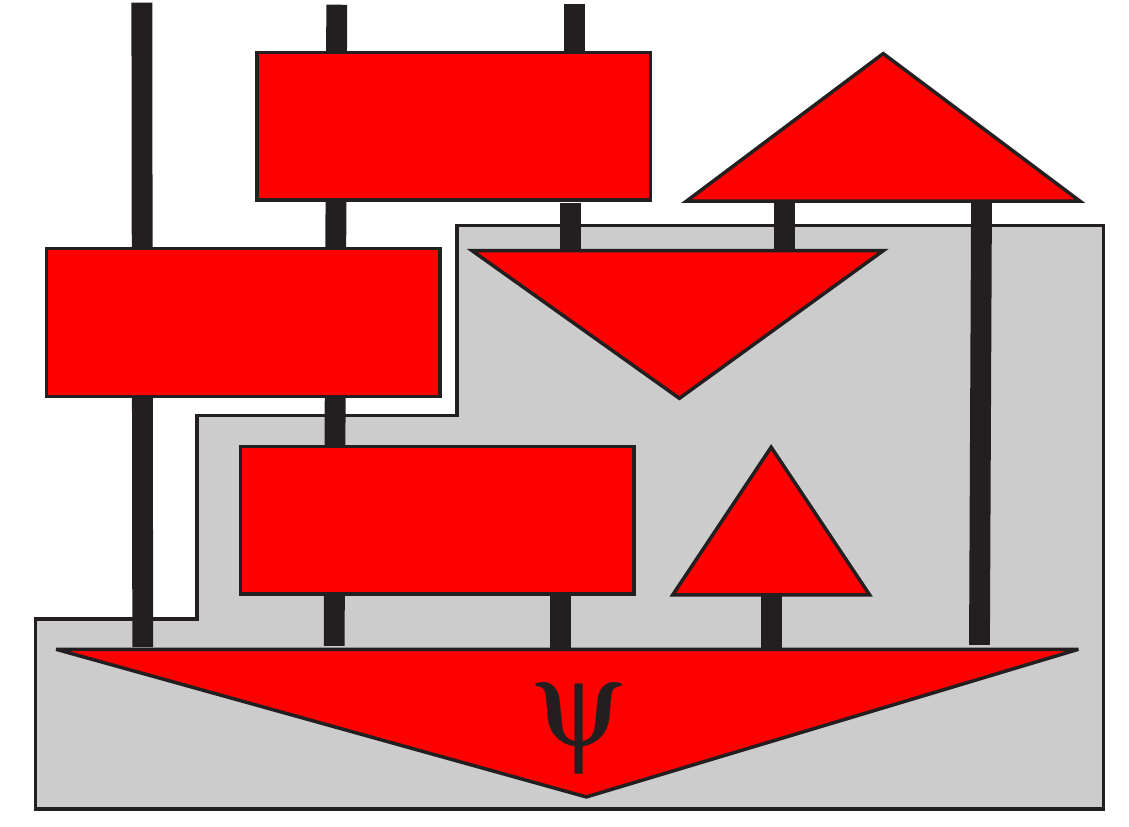,width=115pt}} \ ,
\]
the local state is:
\[
 (1_B\otimes \top_{B'})\circ g = \raisebox{-12mm}{\epsfig{figure=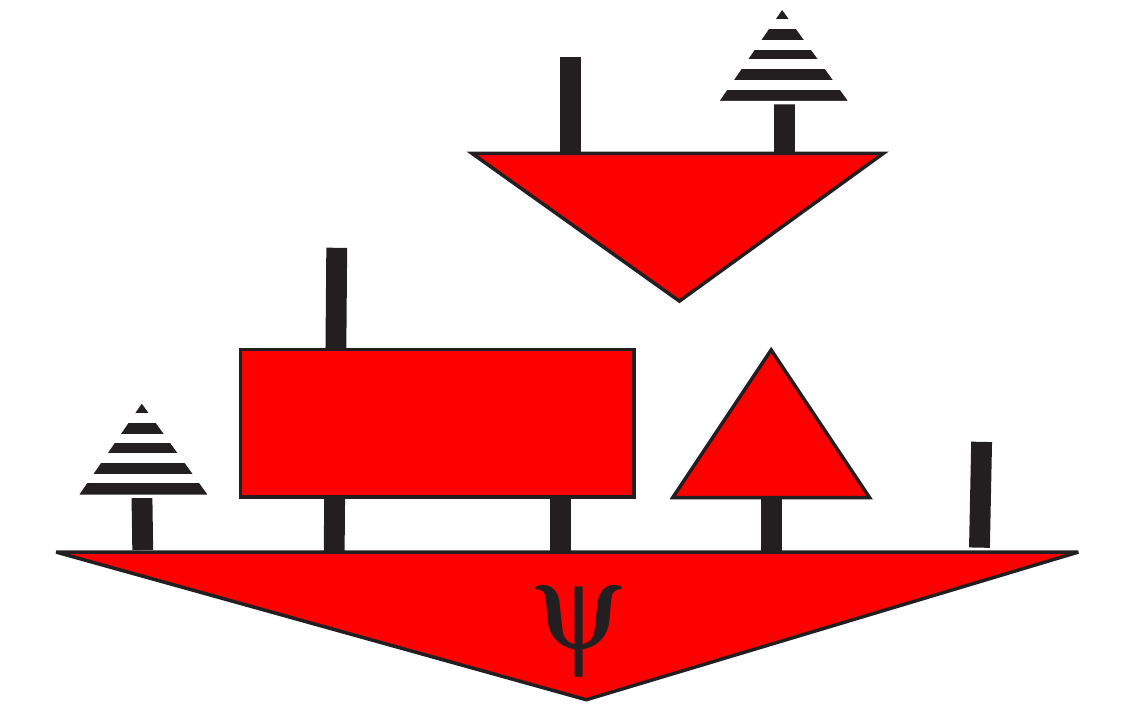,width=115pt}} 
\]
where we relied on eq.~(\ref{eq:symnat2}) to eliminate the symmetry isomorphisms. 

\begin{thm}[Existence and uniqueness of local states]\label{thm:LocalState}
For an SMC with terminal unit object, a slice admits a local state for any scenario in which it is included if and only if it is a spatial slice.  
Moreover, local states, whenever they exist,  do not depend on the choice of the subscenario ${\cal G}$ of Definition \ref{def:localstate}. \em
\end{thm}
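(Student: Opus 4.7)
The plan is to prove both directions of the equivalence together with the independence clause. The only-if direction, done by contrapositive, is the easier one. The if direction constructs ${\cal G}$ as the causal past of ${\cal B}$ inside ${\cal F}\circ\psi$, and uses the spatial assumption to guarantee well-formedness of this subscenario. Independence from the choice of ${\cal G}$ follows from terminality via a common-refinement reduction. Throughout we use freely that each atomic morphism is topologically connected (a ``disconnected atom'' would factor as $\phi\otimes\pi$ with $\pi$ an effect and $\phi$ a state, contradicting the non-trivial-parallel-composition clause of atomicity).

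For the only-if direction, suppose ${\cal B}$ is not spatial, so WLOG there is a connected $h\in\mathbf{C}(B_i,B_j)$, i.e.\ $h$ does not factor through $I$. Write $h$ as a morphism formula of atomic pieces; because $h$ is connected, the atoms form a chain whose first member has $B_i$ in its input type and whose last member has $B_j$ in its output type. Build a scenario ${\cal F}$ containing this chain and also exposing the remaining $B_k$'s (e.g.~by adjoining atomic morphisms producing them). Under the convention that each object appears once as input and once as output, the $B_i$-wire is consumed by the first atom of the chain while the $B_j$-wire is produced by the last. Any subscenario ${\cal G}:I\to B\otimes B'$ exposing all of ${\cal B}$ would require the cut to sit both above and below the chain, which is impossible; hence no local state exists for this ${\cal F}$.

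For the if direction, assume ${\cal B}$ is spatial, and fix ${\cal F}:A\to C$ with $\psi:I\to A$. Take ${\cal G}$ to be the subscenario of ${\cal F}\circ\psi$ consisting of $\psi$ together with every atomic morphism of ${\cal F}$ that is a causal ancestor (in the diagram) of at least one $B_i$-wire; this composes to some ${\cal G}:I\to B\otimes B'$. The crucial check is that no $B_i$-wire is internal to ${\cal G}$: if some $B_i$ were consumed by another atom inside ${\cal G}$, one could trace a chain of atomic morphisms from this $B_i$-wire to another $B_j$-wire, and plugging states into its side-inputs and $\top$-effects into its side-outputs would yield a topologically connected diagram of type $B_i\to B_j$, hence a connected morphism in $\mathbf{C}(B_i,B_j)$, contradicting spatiality. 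Define the local state as $(1_B\otimes\top_{B'})\circ{\cal G}$. For independence from ${\cal G}$, given two valid subscenarios ${\cal G}_1:I\to B\otimes B_1'$ and ${\cal G}_2:I\to B\otimes B_2'$, pass to a common refinement to reduce to the case ${\cal G}_2=(1_B\otimes h)\circ{\cal G}_1$ where $h:B_1'\to B_2'$ acts only on the non-${\cal B}$ part (any atom in the difference touching $B$ would prevent ${\cal G}_1$ from exposing ${\cal B}$). Then
\[
(1_B\otimes\top_{B_2'})\circ{\cal G}_2 \;=\; (1_B\otimes(\top_{B_2'}\circ h))\circ{\cal G}_1 \;=\; (1_B\otimes\top_{B_1'})\circ{\cal G}_1,
\]
where the last step uses terminality of $I$ to force $\top_{B_2'}\circ h=\top_{B_1'}$.

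The main obstacle is the internality check in the if direction, where one has to translate cleanly between topological connectedness in the graphical calculus and algebraic connectedness (non-factoring through $I$) in $\mathbf{C}$. The translation relies on atoms being topologically connected and on connectedness being preserved when pasting atoms along a chain whose side wires are closed off by states and $\top$-effects; the precise bookkeeping between symbolic formulae and their diagrams is the content of Appendix A, on which this step ultimately leans.
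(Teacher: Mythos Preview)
Your proof is correct and follows essentially the same strategy as the paper: a direct counterexample scenario for the non-spatial case, a causal-past construction for the spatial case, and a terminality-based cancellation for independence. The only presentational differences are that you decompose the connected $h$ into an atomic chain where the paper simply sets ${\cal F}=f\otimes 1_{\otimes_{k\neq i,j}B_k}$, and you argue independence via a common-refinement reduction where the paper inducts directly on adjoining atoms to the minimal causal past; neither difference is substantive.
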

\begin{proof}
First we show that if a slice is non-spatial, then there exists a scenario ${\cal F}$ for which ${\cal B}$ does not admit any local states.  If ${\cal B}$ is non-spatial then there exists $B_i$ and $B_{j\not = i}$ such that some morphism $f:B_i\to B_j$ is connected.  Setting:
\[
{\cal F}=  f\otimes 1_{\otimes_{k\not=i, j} B_k}=\ \raisebox{-5mm}{\epsfig{figure=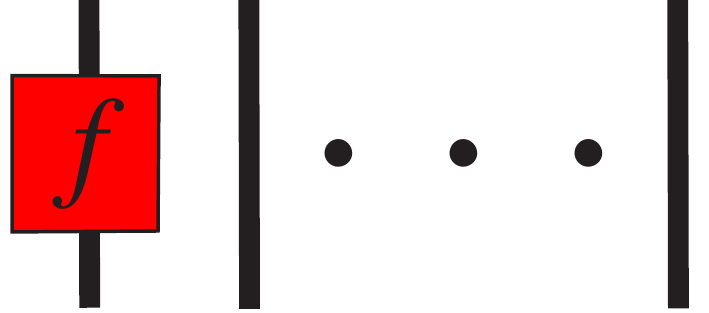,width=75pt}}
\]
one easily sees that  ${\cal F}\circ \psi$ admits no subformula ${\cal G}$ of the type required to yield a local state. 

If ${\cal B}$ is spatial, then for any scenario ${\cal F}$  which includes ${\cal B}$ we can construct the \em causal past \em of ${\cal B}$ as follows. First, consider all morphisms in  ${\cal F}$ for which an object in the outcome type is in ${\cal B}$; if such an object happens to be part of the input type of ${\cal F}$ then we can consider the identity morphism.  Then denote by ${\cal D}$ the slice consisting of all the input types of these morphisms, and now repeat this procedure, with ${\cal D}$ playing the role of ${\cal B}$, until we obtain a slice ${\cal Z}$ consisting of objects in the input type of ${\cal F}$.  The causal past then consists of  all the morphisms that we encountered in this procedure, together with $\sigma^{-1}\circ(1_Z\otimes \top_{Z'})\circ\sigma\circ\psi$
for $Z:={\cal Z}$ and $Z\otimes Z'$ up to symmetry equal to the input type of ${\cal F}$; e.g.:
\[
\raisebox{-12mm}{\epsfig{figure=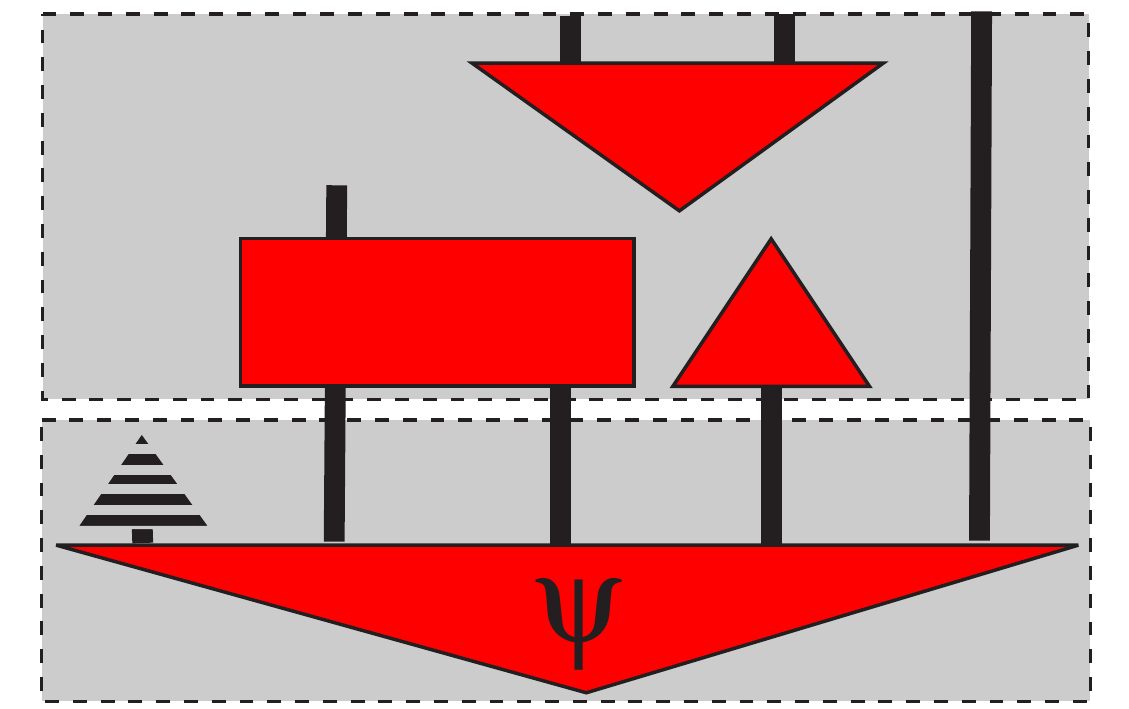,width=115pt}}
\]
in the case of the example in picture (\ref{pic:sliceII}). One then obtains the local state by post-composing  any object $E$ in the output type which is not in ${\cal B}$ with $\top_E$. 

We now show independence of the local state on the choice of the subscenario ${\cal G}$.  Any such ${\cal G}$ will include all the morphisms accounted for in the causal past, precisely by the construction of the causal past. We now proceed by induction on the number of morphisms contained in ${\cal G}$. We can enlarge ${\cal G}$ in two manners, either by means of sequential or parallel composition with some morphism $h:D\to E$, respectively yielding $h\otimes  {\cal G}$ or $(h\otimes 1_{D'})\circ{\cal G}$, where we omitted symmetry isomorphisms.  But since post-composition with $\top_E\otimes 1_{B\otimes B'}$ and $\top_E\otimes 1_{D'}$ respectively yields $(\top_E\otimes 1_{B\otimes B'})\circ (h\otimes  {\cal G})=\top_E\otimes {\cal G}$ and $(\top_E\otimes 1_{D'})\circ ((h\otimes 1_{D'})\circ{\cal G})=(\top_E\otimes 1_{D'})\circ {\cal G}$ the resulting local state will be the same as for ${\cal G}$.
\end{proof}

Theorem \ref{thm:LocalState} can be understood as arising from the way in which the object and morphism languages interact. Roughly speaking, the
morphism-language  defines how processes can be \em composed\em; the
interaction of the morphism-language with the object-language defines
how processes or scenarios can be \em decomposed \em using a slice; and for this latter structure to allow local states to be defined for each slice, we require a partial monoidal structure.

\begin{remark}
If we were to allow $f:B_i\rightarrow B_j$ to be a disconnected morphism in Theorem \ref{thm:LocalState} then the proof would break down, i.e. we can indeed define a local state. This is possible because, as per our assumption that morphism formulae are equivalent if they correspond to equivalent diagrams in the graphical language, a disconnected morphism $p\circ\top_{B_j}$ can be written as $\top_{B_j}\otimes p$, since this is in the same diagrammatic equivalence class. But this would then provide a subformula $\mathcal{G}:I\rightarrow B_i\otimes B_j\otimes B'$, which ensures a local state can be given at the spatial slice $B_i\otimes B_j$. In constrast, in the connected case we were not able to `push' $B_1$ next to $B_2$ to form $B_i \otimes B_j$---as can be done for the disconnected case---to be part of the codomain of $\mathcal{G}$.
\end{remark}

\begin{remark}
Note that from the proof of Theorem \ref{thm:LocalState} it also follows that in Definition \ref{def:localstate} we do not always need to specify  an initial state for the entire slice ${\cal A}$, but only for the slice ${\cal Z}$ which is included in the causal past of ${\cal B}$.
\end{remark}


\begin{defn}\em
A spatial slice ${\cal B}$ with $B:={\cal B}$ is \em total \em for a scenario ${\cal F}:A\to C$ in which it is included  
if ${\cal F}$ decomposes into two subscenarios ${\cal F}_1:A\to B$ and ${\cal F}_2:B\to C$.
\end{defn}

Total slices allow one to model evolution of a state through a scenario, when considering local states for `propagating' family total slices e.g.:
\[
\raisebox{-9mm}{\epsfig{figure=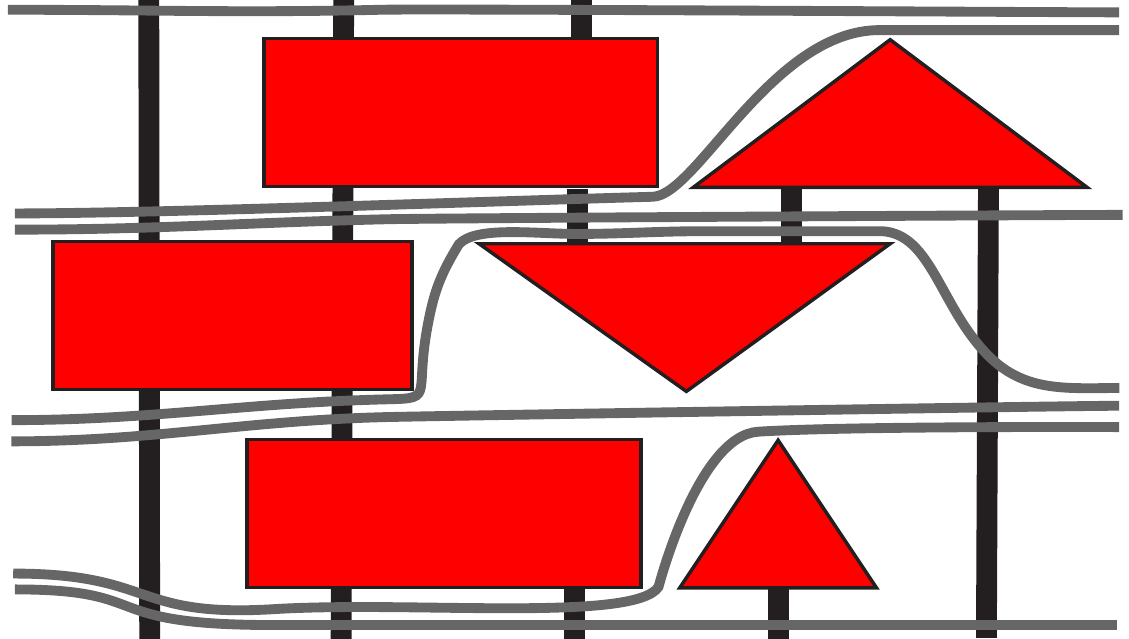,width=115pt}} 
\]
In this context, a \em general covariance theorem \em is one which states that the state of a system does not depend on the  particular choice of \em foliation\em, i.e.~the slice it belongs to.

\begin{cor}[general covariance]
Local states do not depend on the choice of foliation.
\end{cor}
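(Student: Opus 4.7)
The plan is to observe that this corollary is essentially a direct repackaging of the uniqueness clause already established in Theorem \ref{thm:LocalState}. Recall that a foliation is a family of total spatial slices that partitions the scenario $\mathcal{F}$ into successive sub-scenarios, and the local state at a total slice $\mathcal{B}$ appearing in some foliation is computed by applying Definition \ref{def:localstate} to the sub-scenario $\mathcal{G}$ lying causally below $\mathcal{B}$ within that foliation. Two different foliations passing through the same total slice $\mathcal{B}$ would in general assign different such sub-scenarios $\mathcal{G}$ and $\mathcal{G}'$ to it.

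First I would fix a scenario $\mathcal{F}:A\to C$, an initial state $\psi:I\to A$, and a total spatial slice $\mathcal{B}$ which appears in two distinct foliations $\mathfrak{F}_1$ and $\mathfrak{F}_2$. Each foliation singles out a sub-scenario $\mathcal{G}_i$ of $\mathcal{F}\circ\psi$ of the required type $I\to B\otimes B'$ (namely, the composite of everything causally below $\mathcal{B}$ according to that foliation). Both $\mathcal{G}_1$ and $\mathcal{G}_2$ will contain the full causal past of $\mathcal{B}$, as constructed in the proof of Theorem \ref{thm:LocalState}, and hence both are valid candidates in the sense of Definition \ref{def:localstate}. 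The uniqueness statement in Theorem \ref{thm:LocalState} then yields
\[
(1_B\otimes \top_{B'})\circ\sigma\circ g_1 \;=\; (1_B\otimes \top_{B'})\circ\sigma\circ g_2,
\]
which is exactly the claim that the local state at $\mathcal{B}$ is foliation-independent.

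There is no substantive obstacle here: the hard work has already been absorbed into the induction on the number of atomic morphisms outside the causal past carried out in the proof of Theorem \ref{thm:LocalState}, where post-composition with the appropriate $\top_E$ collapses any additional morphism added on top of the causal past. The only thing to be careful about is phrasing: one should make explicit that a ``foliation'' is operationally just a particular organized family of valid choices of $\mathcal{G}$, so that covariance under change of foliation is a strictly weaker invariance than the invariance already proved. Consequently the proof can be rendered in a single line once this observation is made.
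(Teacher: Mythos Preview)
Your proposal is correct and takes essentially the same approach as the paper: both treat the corollary as an immediate consequence of the uniqueness clause in Theorem~\ref{thm:LocalState}, since a choice of foliation merely selects one particular sub-scenario $\mathcal{G}$ among many valid ones. The paper does not even spell out a formal argument---it simply illustrates the mechanism with a small diagrammatic example in which terminality of $I$ collapses the extra morphisms; your write-up makes the reduction to Theorem~\ref{thm:LocalState} explicit, which is if anything clearer.
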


We provide a simple example:  for the scenario and total spatial slices:
 \[
\raisebox{0mm}{\epsfig{figure=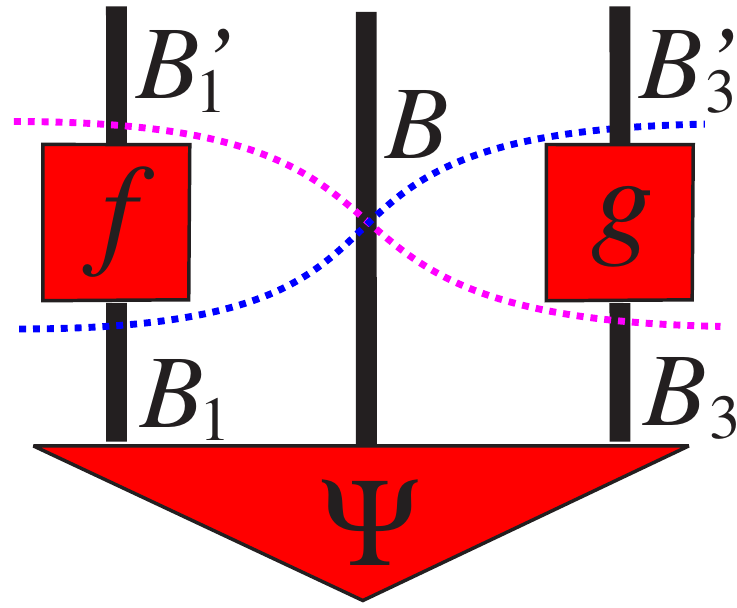,width=75pt}}
\]
we have:
 \[
\raisebox{0mm}{\epsfig{figure=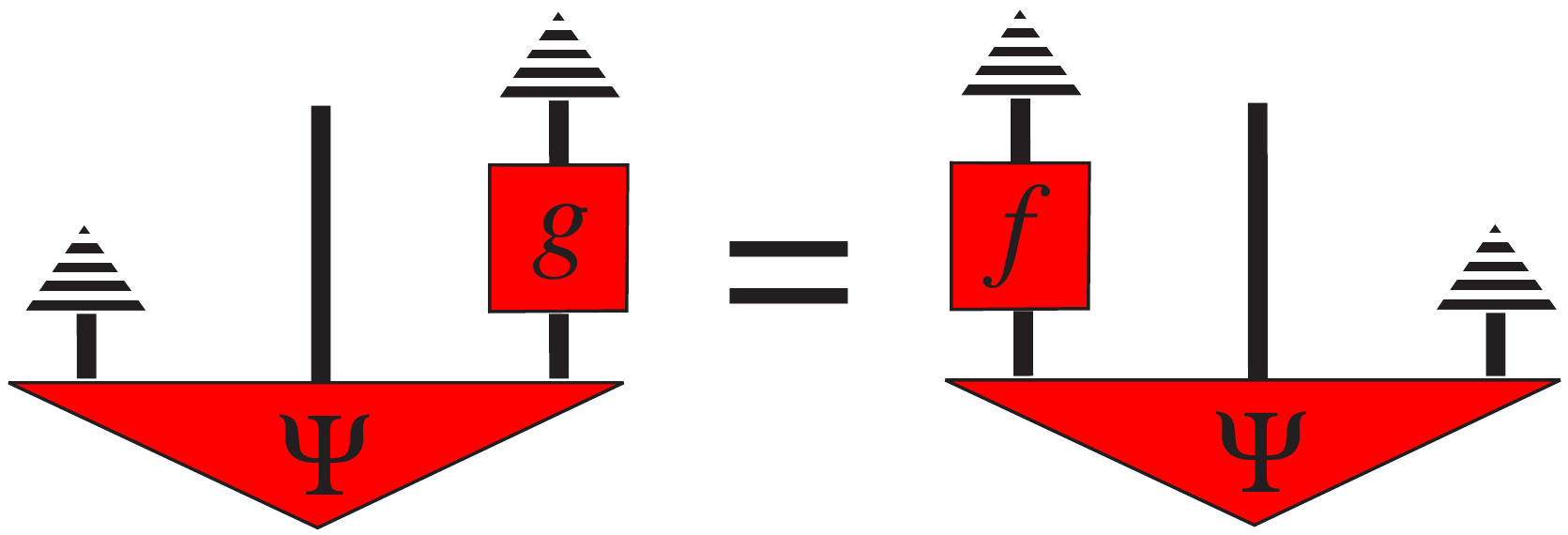,width=170pt}}
\]
since by terminality:
 \[
\raisebox{0mm}{\epsfig{figure=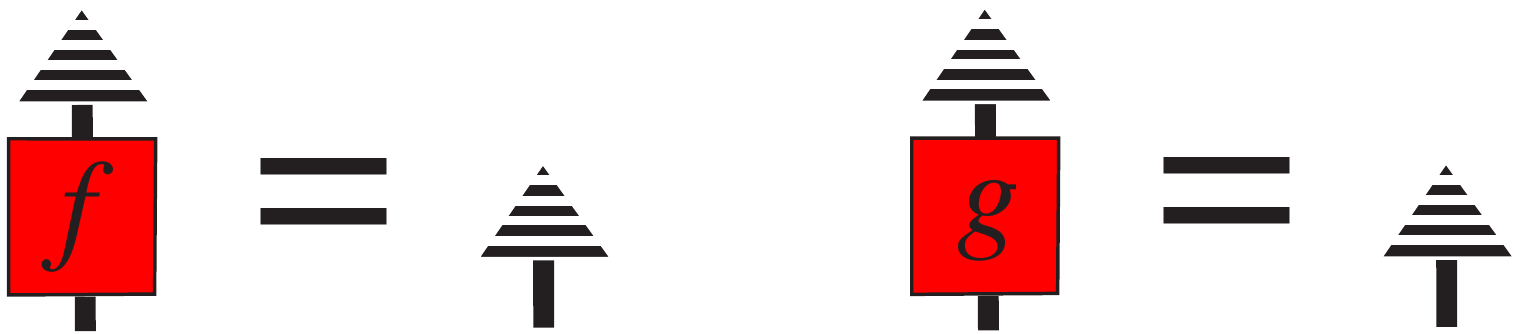,width=155pt}}
\]

\paragraph{Earlier work: BIP; Markopoulou; and Hardy.}   There are three strands of work which are directly relevant to
this Section.
\ben
 
\item Markopoulou \cite{Markopoulou}  defined a \em quantum causal history (QCH) \em as a mapping that assigns a Hilbert space $\mathcal{H}(x)$ to 
each element $x$ of a causal set $\mathcal{C}$ \cite{Sorkin}; it assigns tensor products $\mathcal{H}(\xi):=\mathcal{H}(x)\otimes\mathcal{H}(y)$ for an antichain $\xi=\{x,z\}$; and it assigns unitary mappings between antichains. This structure is clearly very similar to the spatial slices we have defined in monoidal categories. However, the assignment of unitaries between antichains $\xi$ and $\varsigma$ does not take into account the causal structure between elements $x\in\xi$ and $w\in\varsigma$. Hence a unitary $U:\mathcal{H}(\xi)\rightarrow \mathcal{H}(\varsigma)$ for which the state at $w$ is not a constant function of the state at $x$ is allowed even if $x$ and $w$ are not causally related. Therefore a QCH cannot enforce the fact that teleportation without classical communication cannot provide information flow.

\item Blute, Ivanov and Panangaden (BIP) developed a mathematical framework for describing the evolution of open quantum systems in  \cite{BIP}, that is related to Markopoulou's work and the causal sets programme. A similar notion of spatial slice exists for that framework, and it is shown there that the states are covariant. However, whereas in a causal category covariance follows immediately from the definition of the category, demonstrating covariance in \cite{BIP}  is much more involved, relying on properties of the concrete model of density matrices on Hilbert spaces.
\item 
Hardy has developed an operational framework for describing general probabilistic theories \cite{HardyPicturalism}. His framework also uses diagrammatic methods, although its formalization is not explicit. His work uses the way in which inputs and outputs of boxes are connected to define the causal structure of a scenario. This is similar to our work, because we define causal structure (see Subsection \ref{sec:CausalityInSMC}) using the \em processes \em available between two slices, i.e.~whether or not they are disconnected. However in a causal category the causal structure can be defined globally, whereas in Hardy's approach the causal structure is defined only for the boxes for which connections are made.
\een
\medskip


 Above we showed that terminality of the tensor unit implies covariance; however, in a CJ-universe the converse is also true, as we now show.

\begin{thm}
In a CJ$\top$-universe, general covariance implies terminality of the tensor unit.
\end{thm}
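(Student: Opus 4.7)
The plan is to upgrade the existence of the environment effects $\top_A : A \to I$ (already built into the CJ$\top$-universe structure) to uniqueness of such effects for every object $A$, by combining general covariance with the CJ injectivity built into Definition \ref{Def:CJuniv}. Given any putative second effect $\pi : A \to I$, the idea is to embed it in a small scenario built from the CJ-state $\eta_A$, use general covariance to force the local state on the $A^{*}$ strand to coincide with what the environment map $\top_A$ produces, and then cancel $\eta_A$ by appealing to injectivity.

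The scenario is $\mathcal{F} = (1_{A^*} \otimes \pi) \circ \eta_A : I \to A^*$. The $A^{*}$-wire runs unbroken from $\eta_A$ to the output, and appears as a sub-slice of two distinct total spatial slices of $\mathcal{F}$: the intermediate slice $A^* \otimes A$ (which is genuinely spatial, since $A$ and $A^{*}$ are not causally related by Definition \ref{Def:CJuniv}), and the final slice $A^{*}$. Using Definition \ref{def:localstate} with the subscenario $\mathcal{G}_1 = \eta_A$ and marginalising the $A$ subsystem, the local state of $A^{*}$ at the intermediate slice is
\[
(1_{A^*} \otimes \top_A) \circ \eta_A,
\]
while with the subscenario $\mathcal{G}_2 = \mathcal{F}$ itself, the local state of $A^{*}$ at the final slice is
\[
(1_{A^*} \otimes \pi) \circ \eta_A.
\]
General covariance asserts that the local state of $A^{*}$ is independent of where along a foliation it is extracted, so these two states must agree. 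The CJ injection $\mathbf{C}(A, I) \to \mathbf{C}(I, A^{*})$ from Definition \ref{Def:CJuniv} (the case $B = I$) is faithful, so we conclude $\pi = \top_A$. Since $\pi$ was arbitrary, $\top_A$ is the unique effect $A \to I$, and $I$ is terminal.

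The main obstacle is conceptual rather than computational: one must verify that these two computations really do witness the same $A^{*}$-wire at two different cuts of a genuine foliation, rather than being a circular re-expression of covariance. This rests on two transparent facts about $\mathcal{F}$: the intermediate slice $A^* \otimes A$ is spatial precisely because the non-causal-relation clause is part of the CJ-universe axiom, and the $A^{*}$-wire is untouched by any non-trivial process between $\eta_A$ and the output. Once these are in hand, the argument reduces to one invocation each of general covariance and of the CJ injectivity, with no further calculation required.
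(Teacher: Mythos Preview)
Your proof is correct and follows essentially the same approach as the paper's: build the scenario $(1_{A^*}\otimes \pi)\circ\eta_A$, invoke covariance to equate the local state on $A^*$ computed before and after $\pi$ is applied, and then cancel $\eta_A$ via the CJ injectivity of Definition~\ref{Def:CJuniv}. The paper phrases the covariance step for a general $f:A\to B$ (concluding $\top_A=\top_B\circ f$) rather than immediately specializing to $f=\pi:A\to I$, but the argument is otherwise identical.
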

\begin{proof}
When considering
 \[
\raisebox{-8mm}{\epsfig{figure=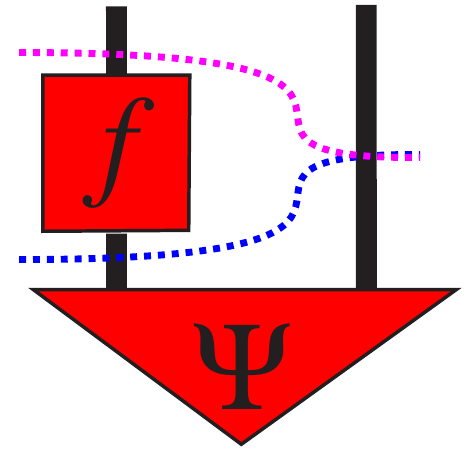,width=46pt}}
\qquad\mbox{covariance means}\qquad
\raisebox{-8mm}{\epsfig{figure=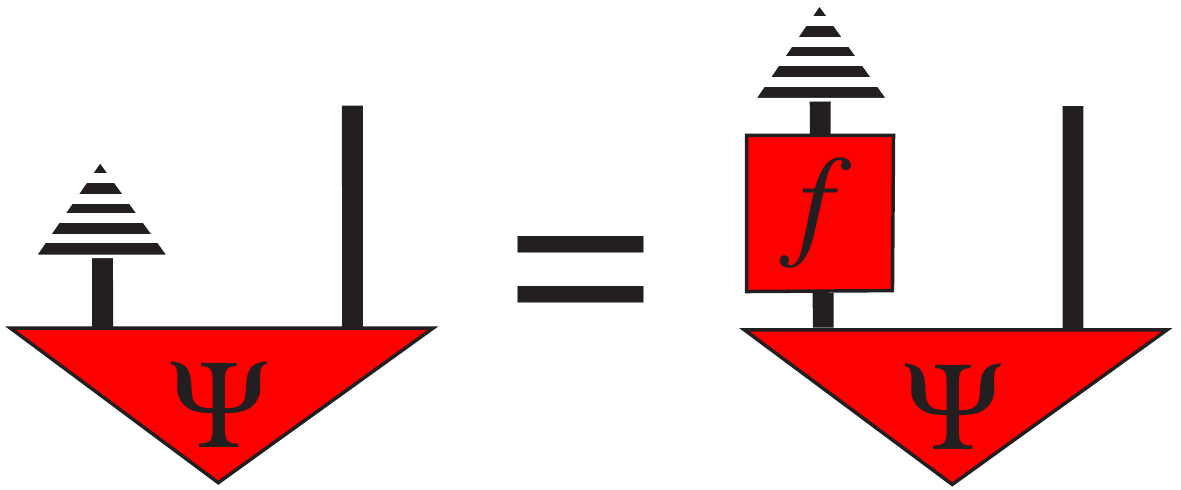,width=120pt}}
\]
which by injectivity in the definition of CJ-universe requires $\top_A=\top_B\circ f$.
\end{proof}

Hence, slices which are not spatial will not allow us to meaningfully describe the local state on some part of the slice. Therefore, to ensure that this is always possible:
\bit
\item \em we will restrict the tensor to causally unrelated (i.e.~disconnected) systems\em.  
\eit
Our rigorous formal definition is in the next Section.  One key consequence of this is that: 
\bit
\item \em all systems (i.e.~objects) in a causal category correspond to spatial slices,\em 
\eit
without any further do, simply by the compositional structure.  

\begin{remark}[Crossing slices]\label{rmk:crossing}
Although we shall restrict tensor composition of objects  we  will not restrict tensor composition of processes.  In contrast to other work in the same vein \cite{Markopoulou, BIP}, this will allow for processes to be defined between `crossing' slices.  For example, for slices $A\otimes B$ and $C\otimes D$, with $A$ causally preceding $C$ while $D$ causaly preceeds $B$, it still makes perfect sense to speak of processes of type $A\otimes B\to C\otimes D$, which will all be 
of the form 
\[
f\otimes (\top_C\circ \psi)=f\otimes \top_C\otimes \psi
\]
with $f$ arbitrary in ${\bf C}(A, B)$ and $\psi$ arbitrary in ${\bf C}({I}, D)$, graphically:
\[
\raisebox{0mm}{\epsfig{figure=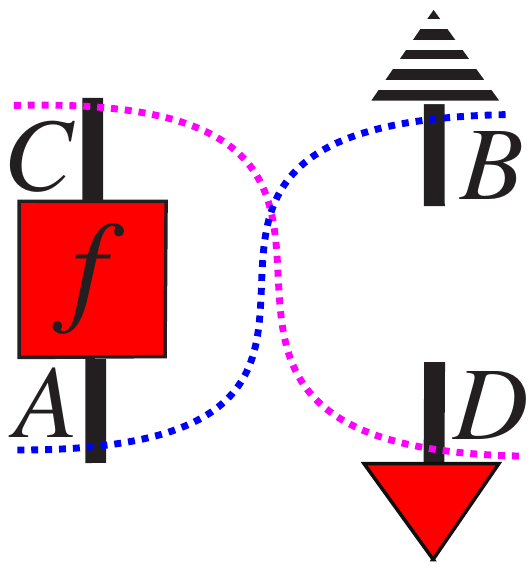,width=54pt}} 
\]
\end{remark}


%
%



\section{Definition and analysis of causal categories}\label{sec:Definition}

Given that, in a category ${\bf C}$, the absence of first-kind information-flow between objects $A$ and $B$ is implemented by making the hom-set $\mathbf{C}(A,B)$ disconnected, we can summarise the results of the previous Section in the table  below; it shows  the mathematical structure corresponding to the physical properties that we aim to axiomatize. 

\begin{center}
\begin{tabular}{|l|l|p{4cm}|}
Physical property & Mathematical structure & Assumptions \\
\hline

No second-kind info-flow & Terminality of tensor unit & Existence of CJ states \\
Unique local state for each slice  & Partial monoidal structure &  Terminality of tensor unit 
\end{tabular}
\end{center}

Note that our concrete models will typically be quantum theory or classical probability theory, so the assumption of the existence of CJ states will be satisfied. Since this assumption leads to teminality, 
we note that terminality is actually not an extra assumption given that we assume the existence of CJ states. This leads to the formal definition of a caucat which we now introduce.

\subsection{Definition of a causal category}

We use $[-]$ to denote pointwise application.  As already indicated above, we will  take natural isomorphisms of  monoidal structure to  be strict. 



\begin{defn}\em
A \em partial functor \em $F:\mathbf{B}\rightarrow \mathbf{C}$ is a functor $\hat{F}:\mathbf{A}\rightarrow \mathbf{C}$, where $\mathbf{A}$ is a subcategory of $\mathbf{B}$; $\mathbf{A}$ is called the \em domain of definition \em of $F$, written $\textrm{dd}(F)=\mathbf{A}$, and $\mathbf{B}$ is called the \em domain \em of $F$, written $\textrm{dom}(F)=\mathbf{B}$. A \em partial bifunctor \em is a partial functor whose domain is a product category.
\end{defn}


\begin{defn}\label{def:spmc}\em
A \emph{symmetric strict partial monoidal category} is a category $\mathbf{C}$, together with a partial bifunctor $\otimes:\mathbf{C}\times\mathbf{C}\rightarrow \mathbf{C}$, for which $\textrm{dd}(\otimes)$ is a full subcategory of $\textrm{dom}(\otimes)$, and such that there exists a \em unit object \em $I$, which is the unit of a partial monoid $(|\mathbf{C}|,\otimes,I)$:
\bit 
\item[(u1)] For all $ A\in |\mathbf{C}|$, both $(A,I) \in \textrm{dd}(\otimes)$ and $(I,A) \in \textrm{dd}(\otimes)$, and 
\item[(u2)]  $A\otimes I=A=I\otimes A$; 
\item[(a1)] $\forall A,B,C\in |\mathbf{C}|$, $(A,B),(A\otimes B,C)\in \textrm{dd} (\otimes)$ iff $(B,C),(A,B\otimes C)\in \textrm{dd} (\otimes)$,
\item[(a2)] when they exist, $A\otimes (B\otimes C)=(A\otimes B)\otimes C$, and 
\item[(a3)]  for any morphisms $f,g,h$ in $\mathbf{C}$, $(f\otimes g)\otimes h=f\otimes (g \otimes h)$ when they exist.

\item[(s1)] for all $ A,B\in |\mathbf{C}|, (A,B),(B,A)\in\textrm{dd}$, there exists a \em symmetry morphism \em
 
\[
\sigma_{A,B}:A\otimes B\rightarrow B\otimes A
\] 
such that $\sigma_{A,B}\circ\sigma_{B,A}=1_{A\otimes B}$.
\eit
\end{defn}

\begin{remark}[Associativity of parallel composition for morphisms]
 Since $\textrm{dd}(\otimes)$ is a full subcategory of $\textrm{dom}(\otimes)$, the partial monoidal product $f\otimes g$ of morphisms $f:A\rightarrow D$ and $g:B\rightarrow E$ exists iff $A\otimes B$ and $D\otimes E$ exist. Then, given a morphism $h:C\rightarrow F$, since $(A\otimes B)\otimes C$ exists iff $A\otimes (B \otimes C)$ exists, we also have that $(f\otimes g)\otimes h$ exists iff $f\otimes (g \otimes h)$ exists.
\end{remark}

\begin{remark}[Bifuncoriality]
For a partial monoidal category bifunctoriality holds just as for a (full) monoidal category, i.e.
\[
(h\otimes k)\circ(f\otimes g)=(h\circ f)\otimes(k\circ g).
\]
This is guaranteed by the fact that the domain of definition for a partial bifunctor is a full subcategory of its domain, so the monoidal product of the composites $h\circ f$ and $k\circ g$ always exists.
\end{remark}

\begin{example} Any strict monoidal category is a strict partial monoidal category, where $\textrm{dd}(\otimes)=\textrm{dom}(\otimes)$, and any category that contains a strict monoidal category as a full subcategory is a strict partial monoidal category.  \end{example}

\begin{remark} 
The symmetry morphism can be seen as a `kinematic' feature of a caucat, analogous to inversion of a spatial axis in a conventionally formulated physical theory.
\end{remark}



\begin{defn}\label{def:caucat}\em
A \emph{causal category} (or \em caucat\em) $\mathbf{CC}$ is a symmetric strict
partial  monoidal category whose unit object $I$ is terminal, i.e.~for each object $A\in|\mathbf{CC}|$ there is a unique morphism $\top_A:A\to I$,
and for which the monoidal product, $A\otimes B$, exists iff 
\beq\label{eq:defcaucat}
\mathbf{CC}(A,B)=[\mathbf{CC}(I,B)]\circ\top_{A}\quad\mbox{and}\quad\mathbf{CC}(B,A)=[\mathbf{CC}(I,A)]\circ\top_{B}\,.
\eeq
We also require that each object has at least one element, i.e.~$\forall A\in|{\bf CC}|: {\bf CC}(I, A)\not=\emptyset$.
\end{defn}

\begin{prop}
{\bf (i)} In a caucat, morphisms $f:A\to B$ are `normalized', i.e.~$\top_B\circ f = \top_A$.
{\bf (ii)} In a caucat $\top_I=1_I$ and $\top_{A\otimes B}=\top_A \otimes \top_B$ whenever $A\otimes B$ exists.
\end{prop}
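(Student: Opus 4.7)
The plan is to derive both claims directly from the terminality of the tensor unit $I$, which is built into the definition of a caucat. Terminality asserts that for every object $X$ there is a \emph{unique} morphism $X \to I$, namely $\top_X$. Both (i) and (ii) will reduce to producing a candidate morphism with codomain $I$ and invoking this uniqueness.

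For (i), given $f:A \to B$, I would form the composite $\top_B \circ f : A \to I$. This is a morphism from $A$ to $I$, and $\top_A$ is by definition the \emph{unique} such morphism, so $\top_B \circ f = \top_A$. No further work is needed.

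For (ii), the identity $\top_I = 1_I$ follows because $1_I : I \to I$ is a morphism from $I$ to $I$, and $\top_I$ is the unique such morphism, so $\top_I = 1_I$. For $\top_{A\otimes B} = \top_A \otimes \top_B$, assume $A \otimes B$ exists. Then, using condition (u1) on the tensor unit (which guarantees $(I,I) \in \mathrm{dd}(\otimes)$ and $I \otimes I = I$), together with the fact that $\mathrm{dd}(\otimes)$ is a full subcategory of $\mathrm{dom}(\otimes)$, the morphism $\top_A \otimes \top_B : A \otimes B \to I \otimes I = I$ is well-defined. But $\top_{A\otimes B}$ is the unique morphism $A\otimes B \to I$, so $\top_{A\otimes B} = \top_A \otimes \top_B$.

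There is no real obstacle here: both statements are essentially restatements of terminality, and the only subtlety is checking that the partial tensor $\top_A \otimes \top_B$ is defined, which is ensured precisely by the standing assumption that $A \otimes B$ exists together with the fullness clause in Definition \ref{def:spmc}.
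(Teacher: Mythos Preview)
Your proof is correct and follows exactly the approach the paper takes: the paper's own proof consists of the single sentence ``Both {\bf (i)} and {\bf (ii)} follow straightforwardly by terminality of $I$,'' and your proposal simply unpacks this, including the check that $\top_A\otimes\top_B$ is well-defined in the partial monoidal setting.
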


\begin{proof}
Both {\bf (i)} and {\bf (ii)} follow straightforwardly by terminality of $I$.
\end{proof}

We shall refer to a monoidal or partial monoidal category in which the unit object is terminal, and for which each object $A$ has at least one element, as a \em normalized category \em (or \em normcat\em). The definition of disconnectedness for a partial monoidal category is the same as for a full monoidal category (cf.~Defn.~\ref{def:disconnected} and Defn.~\ref{def:disconnectedbis}), but we state it explicitly for the reader's convenience.

\begin{defn}[Disconnectedness for partial monoidal category]\em
In a partial monoidal category, a morphism  $f:A\to B$ is  \em disconnected \em if it decomposes as $f=p\circ e$ for some $e:A\to I$ and $p:I\to A$, and a  hom-set $\mathbf{C}(A,B)$ is  \em disconnected \em     if it contains only disconnected morphisms.  If both $\mathbf{C}(A,B)$ and $\mathbf{C}(B,A)$ are disconnected then we say that the objects $A$ and $B$ are disconnected.
\end{defn}

\begin{prop}\label{prop:properties}
{\bf(i)} In Definition \ref{def:caucat}, Eq.~(\ref{eq:defcaucat})  is equivalent to both $\mathbf{CC}(A,B)$ and $\mathbf{CC}(B,A)$  being disconnected. {\bf(ii)} In a caucat, $A\otimes B$ exists iff $B\otimes A$ exists. {\bf(iii)} Conditions (u1) and (a1) in the definition of partial monoidal category are implied by Eq.~(\ref{eq:defcaucat}) together with the condition that if $A\otimes B$, $A\otimes C$, $B\otimes C$ exist then also $A\otimes(B\otimes C)$ exists.

\end{prop}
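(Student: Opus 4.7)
I would dispatch (i) and (ii) as immediate consequences of terminality, then devote the work to (iii), in which (a1) is the substantive piece.

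For (i), if $f:A\to B$ is disconnected then $f=p\circ e$ with $e:A\to I$, and terminality of $I$ forces $e=\top_A$, so $f\in[\mathbf{CC}(I,B)]\circ\top_A$; the converse is tautological since every morphism of the form $\psi\circ\top_A$ factors through $I$ by definition. The same applies to $\mathbf{CC}(B,A)$, giving the claimed equivalence. Part (ii) is then immediate: Eq.~\eqref{eq:defcaucat} has just been rephrased as the symmetric condition ``both $\mathbf{CC}(A,B)$ and $\mathbf{CC}(B,A)$ are disconnected.'' For (u1), note that $\mathbf{CC}(A,I)=\{\top_A\}=\{1_I\circ\top_A\}$ and every state $\psi:I\to A$ factors as $\psi\circ\top_I$, so both hom-sets are disconnected, whence $(A,I),(I,A)\in\textrm{dd}(\otimes)$ by (i).

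The substantive step is (a1), an iff. I would prove the forward direction in detail; the converse is dual. Assuming $(A,B),(A\otimes B,C)\in\textrm{dd}(\otimes)$, my strategy is first to derive $(B,C)\in\textrm{dd}(\otimes)$ and $(A,C)\in\textrm{dd}(\otimes)$, after which the supplementary hypothesis supplies $A\otimes(B\otimes C)$ and hence $(A,B\otimes C)\in\textrm{dd}(\otimes)$ by (i). To show that $\mathbf{CC}(B,C)$ is disconnected: for $f:B\to C$, form the legal morphism $\top_A\otimes f:A\otimes B\to C$, use disconnectedness of $\mathbf{CC}(A\otimes B,C)$ to write it as $p\circ(\top_A\otimes\top_B)$, and pre-compose with $\psi_A\otimes 1_B$, where $\psi_A:I\to A$ is furnished by the element axiom; bifunctoriality then collapses both sides to $f=p\circ\top_B$. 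The hom-set $\mathbf{CC}(C,B)$ is handled dually via $\psi_A\otimes g:C\to A\otimes B$ with post-composition by $\top_A\otimes 1_B$. The case $(A,C)\in\textrm{dd}(\otimes)$ is symmetric, trading $A$ for $B$ and using $\psi_B$ and $\top_B$.

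For the converse, assuming $(B,C),(A,B\otimes C)\in\textrm{dd}(\otimes)$, the same pattern yields $(A,B)\in\textrm{dd}(\otimes)$ via $f\otimes\psi_C:A\to B\otimes C$ and the disconnectedness of $\mathbf{CC}(A,B\otimes C)$, and similarly $(A,C)\in\textrm{dd}(\otimes)$; all three pairwise tensors then exist. The extra hypothesis applied to the triple $(A,B,C)$ re-gives $A\otimes(B\otimes C)$, and applied to the reordered triple $(C,A,B)$ gives $C\otimes(A\otimes B)$, whence by (ii) we obtain $(A\otimes B)\otimes C$, completing $(A\otimes B,C)\in\textrm{dd}(\otimes)$. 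The main obstacle throughout (a1) is the bookkeeping: we cannot invoke any tensor product that has not yet been shown to exist, so every auxiliary morphism must slot the missing object in via some $\top_{-}$ together with a state $\psi_{-}$, and each composite has to be checked to be legal at both source and target. The element axiom is precisely what allows the pre- and post-compositions to collapse ($\top_A\circ\psi_A=1_I$) and expose the desired factorisation through $I$.
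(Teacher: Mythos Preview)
Your proof is correct and follows essentially the same approach as the paper: terminality forces $e=\top_A$ for (i), symmetry of the disconnectedness criterion gives (ii), and for (a1) one tensors a morphism with a state or with $\top$ to embed it into the known disconnected hom-set involving $A\otimes B$, then collapses the auxiliary factor via $\top\circ\psi=1_I$. You are in fact more thorough than the paper's own proof, which only treats the forward direction of (a1) and only one of the two hom-sets explicitly, leaving the reverse hom-set and the converse implication (including your permutation trick for the supplementary hypothesis) implicit.
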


 \begin{proof}
{\bf(i)} follows from the fact that by terminality of $I$ any disconnected morphism $f:A\to B$ in a caucat is of the form $p\circ \top_A$.  {\bf(ii)} follows straightforwardly from the symmetry of Eq.~(\ref{eq:defcaucat}).
Concerning {\bf(iii)}, since $I$ is terminal we have:
\bit
\item $\mathbf{CC}(I,A)=[\mathbf{CC}(I,A)]\circ 1_{I}=[\mathbf{CC}(I,A)]\circ\top_{I}$, and
\item $\mathbf{CC}(A,I)=\{\top_A\}=\{1_I\circ\top_A\}= [\mathbf{CC}(I,I)]\circ\top_{A}$,
\eit
so $I\otimes A$ and $A\otimes I$ always exist. If $(A\otimes B)\otimes C$ exists, then for $f\in{\bf CC}(A, C)$,  $p_B\in {\bf CC}(I, B)$:
\[ 
f= f\otimes (\top_B\circ p_B)=(f\otimes \top_B)\circ(1_A\otimes p_B)=(p_C\circ \top_{A\otimes B})\circ(1_A\otimes p_B)
=p_C\circ \top_A
\]
for some $p_C\in {\bf C}(I, C)$, so it follows that $A\otimes C$ exists, and by symmetry also $B\otimes C$ exists, and hence by our additional assumption $A\otimes (B\otimes C)$ also exists.
\end{proof}

\begin{remark}
Item (ii) in Proposition \ref{prop:properties} shows that the symmetry morphism can be consistently defined for a caucat.
\end{remark}

\begin{defn}\em
In a causal category:
\bit
\item If $A\otimes B$ exists then we call $A$ and $B$ \em space-like separated\em.
\item If $\mathbf{CC}(A,B)$ is connected while $\mathbf{CC}(B,A)$ is disconnected then $A$ \em  causally precedes \em $B$. 
\item  If $\mathbf{CC}(A,B)$ and $\mathbf{CC}(B,A)$ are connected then $A$ and $B$ are \em causally intertwined\em. 

\eit
\end{defn}

\begin{remark}
A pair of objects which are causally intertwined each correspond to a spatial slice, but they are the  `crossed' spatial slices that we discussed in Remark  \ref{rmk:crossing}.
\end{remark}

\begin{example}
Each category induces a caucat by freely adjoining a monoidal unit (and a state for each object); we could call such a degenerate caucat \em purely temporal\em.  Each monoid induces also another caucat with the monoid as the tensor by freely adoining a unique morphism for each ordered pair of objects; we could call such a degenerate caucat \em purely spatial\em. 
\end{example}

\begin{remark}
Since in non-degenerate situations identities are  connected, the tensor of $A$ with itself will typically not exist (see  Lemma~\ref{thm:factors}  below).
\end{remark}


\subsection{Relation to dagger compact categories}\label{sec:Incompatibilities}
 
We now show that some basic aspects of CQM, involving identical or isomorphic objects (which will allow us to identify systems of the same kind), in particular the use of compactness and dagger structure, are incompatible with the caucat structure! In Subsection \ref{sec:recover} we shall describe how these aspects can be reinstated.

We first show that isomorphisms cannot be used to represent the property that two systems at different spatiotemporal locations are of the same type (e.g. a qubit).

\begin{prop}\label{thm:isos}
Given a caucat $\mathbf{CC}$, suppose that $A$ causally precedes $B$, or that $A$ and $B$ are causally unrelated. Then if $A\cong B$, it follows that $A\cong I\cong B$. 
\end{prop}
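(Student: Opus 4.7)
The plan is to reduce both cases to a single observation: under either hypothesis, the hom-set $\mathbf{CC}(B,A)$ is disconnected. In the "causally precedes" case this is immediate from the definition. In the "causally unrelated" case it follows because $A \otimes B$ exists, and by Definition \ref{def:caucat} (or Proposition \ref{prop:properties}(i)) this forces both $\mathbf{CC}(A,B)$ and $\mathbf{CC}(B,A)$ to be disconnected.

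Next, I would pick an isomorphism $f:A\to B$ with inverse $f^{-1}:B\to A$, and use disconnectedness of $\mathbf{CC}(B,A)$ to factor
\[
f^{-1} \;=\; q\circ \top_B\qquad\text{for some}\qquad q:I\to A.
\]
Now I would build the two candidate isomorphisms to $I$. For $A$: using $f^{-1}\circ f = 1_A$ together with normalization of morphisms ($\top_B\circ f = \top_A$, from Proposition after Definition \ref{def:caucat}),
\[
1_A \;=\; f^{-1}\circ f \;=\; q\circ \top_B\circ f \;=\; q\circ \top_A.
\]
Combined with $\top_A\circ q = \top_I = 1_I$, which holds by terminality of $I$, this exhibits $A \cong I$ with mutually inverse morphisms $q$ and $\top_A$. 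For $B$: setting $p := f\circ q : I \to B$ and using $f\circ f^{-1} = 1_B$,
\[
1_B \;=\; f\circ f^{-1} \;=\; f\circ q\circ \top_B \;=\; p\circ \top_B,
\]
and again $\top_B\circ p = 1_I$ by terminality, so $B\cong I$ with inverses $p$ and $\top_B$.

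There is no real obstacle here; the only delicate point is noticing that a single hom-disconnectedness hypothesis suffices (we never need $\mathbf{CC}(A,B)$ to be disconnected), which is what lets the two cases be handled uniformly. The argument is essentially the abstract version of the familiar fact that in a category with a terminal unit, any morphism that factors through $I$ and admits a one-sided inverse must itself be the unique arrow to $I$ composed with a state, so its domain is isomorphic to $I$.
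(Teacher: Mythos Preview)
Your proof is correct and follows essentially the same approach as the paper: reduce both hypotheses to disconnectedness of $\mathbf{CC}(B,A)$, factor $f^{-1}$ through $I$ via $\top_B$, and conclude that $1_A$ (and then $1_B$) factors through $I$, witnessing the isomorphisms with $I$. The only difference is cosmetic: the paper writes out the $A\cong I$ step and dismisses $B\cong I$ with ``similarly'', whereas you spell out both.
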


\begin{proof}
If either $A$ causally precedes $B$, or $A$ and $B$ are space-like separated, then  $\mathbf{CC}(B,A)$ is disconnected. Hence for the iso $f:A\rightarrow B$ we have, for some $\psi:I\to A$,
\[
1_A = f^{-1}\circ f = \psi\circ\top_B\circ f = \psi\circ \top_A
\]
Since by terminality of $I$ we have $\top_A\circ \psi=1_I$, we obtain $A\cong I$, and $B\cong I$ similarly.
\end{proof}

Hence the fact that systems at different spacetime locations are of identical types cannot be witnessed in the caucat, but instead in the $\dagger$-SMC that will be used to construct the caucat---we describe how this can be done in Subsection \ref{sec:recover}.

As discussed in Section \ref{sec:partial}, the tensor product of a system with itself is not meaningful in a causal setting. This can be formalised as follows. 

\begin{lem}\label{thm:factors}
If the identity morphism $1_A$ on an object $A$ in a caucat $\mathbf{CC}$ is disconnected, then $A$ has only one state. Any morphism between objects for which the identity is disconnected  on either the domain or codomain  is also disconnected.
\end{lem}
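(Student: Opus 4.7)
The plan is to unfold the definition of disconnectedness and then leverage terminality of $I$ at every step. Suppose $1_A$ is disconnected, so by definition there exist $e:A\to I$ and $p:I\to A$ with $1_A = p\circ e$. Terminality of $I$ forces $e=\top_A$, so $1_A=p\circ\top_A$ for some state $p:I\to A$. The existence of such a state is consistent with the caucat requirement that $\mathbf{CC}(I,A)\neq\emptyset$.

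For the uniqueness of states, take any $\psi:I\to A$ and compute
\[
\psi \;=\; 1_A\circ\psi \;=\; p\circ\top_A\circ\psi \;=\; p\circ 1_I \;=\; p,
\]
where the third equality is the already-established normalization $\top_A\circ\psi=1_I$ (scalars collapse to $1_I$ by terminality). Hence every state equals $p$, so $A$ has a unique state.

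For the second claim, let $f:A\to B$ be arbitrary. If $1_A=p\circ\top_A$, then
\[
f \;=\; f\circ 1_A \;=\; f\circ p\circ \top_A \;=\; (f\circ p)\circ \top_A,
\]
which exhibits $f$ as disconnected (factoring through $I$ via $\top_A$ followed by the state $f\circ p:I\to B$). If instead $1_B=q\circ\top_B$ for some state $q:I\to B$, then using normalization of morphisms ($\top_B\circ f=\top_A$) we get
\[
f \;=\; 1_B\circ f \;=\; q\circ\top_B\circ f \;=\; q\circ\top_A,
\]
so $f$ is again disconnected.

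There is no real obstacle; the only subtlety is being careful to invoke terminality of $I$ in two forms, namely $e=\top_A$ (uniqueness of effects to $I$) and $\top_A\circ\psi=1_I$ (uniqueness of scalars), both of which are immediate consequences of the caucat axioms and have been recorded earlier in the paper.
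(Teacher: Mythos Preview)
Your proof is correct and follows essentially the same approach as the paper's own proof: both decompose the disconnected identity as $p\circ\top_A$, derive uniqueness of states via $\phi=1_A\circ\phi=p\circ\top_A\circ\phi=p$, and show disconnectedness of $f$ by composing with the disconnected identity. You are slightly more explicit in justifying $e=\top_A$ via terminality and in spelling out the codomain case (where the paper just says ``similarly''), but the argument is the same.
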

\begin{proof}
If  $1_{A}=\psi\circ\top_{A}$ for some $\psi:I\rightarrow A$,
then for any other state $\phi:I\rightarrow A$ we have 
\[
\phi=1_{A}\circ\phi=\psi\circ\top_{A}\circ\phi=\psi\circ1_{I}=\psi.
\]
Consider a morphism $f:A\rightarrow B$ for which $1_A=\psi\circ\top_A$. Then 
\[
f=f\circ 1_A=f\circ \psi \circ \top_A=\phi\circ \top_A
\]
for some morphism $\phi:I\rightarrow B$.  The codomain case proceeds similarly.
\end{proof}

\begin{prop}
 If $A\otimes A$  exists then $1_A$ is disconnected.
\end{prop}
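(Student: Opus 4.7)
The plan is to read off the conclusion directly from the defining equation of caucat applied to the pair $(A,A)$. By Definition \ref{def:caucat}, the existence of $A\otimes A$ is equivalent to
\[
\mathbf{CC}(A,A)=[\mathbf{CC}(I,A)]\circ \top_A,
\]
that is, every morphism in $\mathbf{CC}(A,A)$ factors through $I$. Since $1_A\in\mathbf{CC}(A,A)$, this immediately forces $1_A=\psi\circ \top_A$ for some $\psi:I\to A$, which is exactly the definition of disconnectedness.

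So the proof is essentially one line; the only thing worth spelling out is that the hom-set characterisation actually applies. First I would recall the equivalent formulation from Proposition \ref{prop:properties}\textbf{(i)}: existence of $A\otimes B$ is equivalent to $\mathbf{CC}(A,B)$ and $\mathbf{CC}(B,A)$ both being disconnected. Applied with $B=A$, this says that $\mathbf{CC}(A,A)$ is disconnected, and then I would invoke disconnectedness on the specific morphism $1_A$.

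The main (and only mild) obstacle is making sure the witness state $\psi$ is well-defined, but this is guaranteed by the requirement in Definition \ref{def:caucat} that $\mathbf{CC}(I,A)\neq\emptyset$ for every object $A$. With that in hand the argument is complete, and moreover Lemma \ref{thm:factors} then tells us such an $A$ has a unique state and that any morphism into or out of $A$ is itself disconnected — so the proposition, though one line long, is what ultimately rules out non-trivial self-tensors in a caucat.
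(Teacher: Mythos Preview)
Your argument is correct and is essentially identical to the paper's own proof: both simply apply the defining condition $\mathbf{CC}(A,A)=[\mathbf{CC}(I,A)]\circ\top_A$ from Definition~\ref{def:caucat} to the element $1_A$. Your additional remarks about Proposition~\ref{prop:properties}\textbf{(i)} and the non-emptiness of $\mathbf{CC}(I,A)$ are fine elaborations but not needed for the core argument.
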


\begin{proof}
If $A\otimes A$  exists then $\mathbf{CC}(A,A)=[\mathbf{CC}(I,A)]\circ\top_{A}$, and $1_A$ is disconnected.
\end{proof}

We restate Corollary \ref{cor:disc} for completeness, since it shows that there are no non-trivial objects with compact structures.

\begin{prop}\label{thm:collapse}
For an object  $A$ in a caucat  with a compact structure $1_A$ is disconnected, and morphisms between objects with a compact structure are disconnected. Hence for a compact subcategory of a caucat all morphisms are disconnected.
\end{prop}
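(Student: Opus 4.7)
The plan is to combine the snake equation of compact structure with terminality of the tensor unit $I$ in the caucat, which forces the counit $\epsilon$ to coincide with $\top_{A\otimes A^{*}}$ and thereby collapses the snake through $I$.

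First I would observe that the existence of a compact structure on $A$ inside the caucat presupposes that $\epsilon:A\otimes A^{*}\to I$ and $\eta:I\to A^{*}\otimes A$ are morphisms of the caucat; in particular the objects $A\otimes A^{*}$ and $A\otimes A^{*}\otimes A$ exist in $\mathbf{CC}$. Since $I$ is terminal we get, without any further work,
$$\epsilon \;=\; \top_{A\otimes A^{*}} \;=\; \top_{A}\otimes \top_{A^{*}}.$$

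Next I would substitute this into the left-hand snake identity
$$1_{A} \;=\; (\epsilon\otimes 1_{A})\circ(1_{A}\otimes \eta)$$
and, using bifunctoriality, rewrite the right-hand side as $\top_{A}\otimes \psi$, where
$$\psi \;:=\; (\top_{A^{*}}\otimes 1_{A})\circ\eta \,:\, I\to A.$$
In a strict (partial) monoidal category with $A\otimes I=I\otimes A=A$, bifunctoriality gives $\top_{A}\otimes \psi = \psi\circ\top_{A}$, so
$$1_{A} \;=\; \psi\circ\top_{A},$$
establishing that $1_{A}$ is disconnected. For the second statement I would invoke Lemma~\ref{thm:factors}: any morphism whose domain or codomain has a disconnected identity is itself disconnected. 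Applying this to any $f:A\to B$ between objects carrying compact structures gives the second assertion, and the compact-subcategory statement follows immediately, since then every object in sight carries a compact structure.

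The main (and rather mild) obstacle is purely bookkeeping with the partiality of $\otimes$: one has to verify that each tensor product written down — notably $A\otimes A^{*}\otimes A$ and the combinations appearing in the bifunctoriality step — actually exists in the caucat. This is automatic once we grant that the compact structure itself is formulated inside the caucat, so the snake composite is well-typed by assumption. Conceptually the argument is just the abstract shadow of \emph{tracing out both legs of the snake collapses it through the environment}, which is exactly what terminality of $I$ encodes.
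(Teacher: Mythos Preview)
Your argument is correct and is essentially the same as the paper's: the paper treats this proposition as a restatement of Corollary~\ref{cor:disc}, whose diagrammatic proof replaces the counit in the snake equation by $\top_A\otimes\top_{A^*}$ via terminality, yielding a disconnected identity, and then invokes Lemma~\ref{thm:factors} for the second clause --- exactly what you have written out symbolically.
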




Hence compact structure is incompatible with the structure of a caucat. Dagger structure can also not be maintained.

\begin{prop}\label{thm:dagger}
In a caucat with a dagger functor every object has only one state, and hence compound objects $A\otimes B$ only have disconnected states.  
\end{prop}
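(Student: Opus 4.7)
The dagger functor together with terminality of $I$ essentially collapses states into effects, and since there is only one effect $\top_A: A \to I$ per object, there can only be one state. The strategy is to make this precise in two steps, first the uniqueness-of-states claim, and then deduce the disconnectedness of bipartite states as a near-trivial corollary.

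\textbf{Step 1 (unique state on each object).} Let $\psi, \phi : I \to A$ be any two states of an object $A$ in the caucat. Applying the dagger functor (which is contravariant and identity-on-objects) produces morphisms $\psi^\dagger, \phi^\dagger : A \to I$. But $I$ is terminal in a caucat, so there is only one such morphism, namely $\top_A$; hence $\psi^\dagger = \top_A = \phi^\dagger$. Involutivity of $\dagger$ then gives
\[
\psi = (\psi^\dagger)^\dagger = \top_A^\dagger = (\phi^\dagger)^\dagger = \phi,
\]
so $A$ has at most one state. The caucat axiom $\mathbf{CC}(I,A) \neq \emptyset$ guarantees at least one, so exactly one.

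\textbf{Step 2 (bipartite states are disconnected).} Suppose $A \otimes B$ exists in the caucat, and let $\alpha_A : I \to A$ and $\alpha_B : I \to B$ be the unique states from Step~1. Since $(A,B) \in \mathrm{dd}(\otimes)$ by assumption and $(I,I) \in \mathrm{dd}(\otimes)$ by axiom (u1), and $\mathrm{dd}(\otimes)$ is a full subcategory of $\mathbf{C} \times \mathbf{C}$, the morphism pair $(\alpha_A, \alpha_B)$ lies in $\mathrm{dd}(\otimes)$, so the product state $\alpha_A \otimes \alpha_B : I \to A \otimes B$ is well-defined. By Step~1 applied to $A \otimes B$, this \emph{is} the unique state of $A \otimes B$. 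Being a tensor of two states with $I$ factors, it is topologically disconnected in the graphical calculus (equivalently, it is a pure product state), which is exactly the disconnectedness claim.

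\textbf{Main obstacle.} The only subtlety is justifying that $\alpha_A \otimes \alpha_B$ is defined in the partial monoidal structure; this is where the hypothesis that $\mathrm{dd}(\otimes)$ is a \emph{full} subcategory of the domain $\mathbf{C} \times \mathbf{C}$ does the work, because it upgrades the existence of $A \otimes B$ (on objects) to the existence of the tensor on morphisms between $(I,I)$ and $(A,B)$. Everything else is immediate from terminality and dagger involutivity.
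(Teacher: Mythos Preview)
Your proof is correct and follows essentially the same approach as the paper: the dagger functor gives a bijection $\mathbf{CC}(I,A)\cong\mathbf{CC}(A,I)$, and terminality of $I$ forces the latter to be a singleton, hence so is the former. Your Step~2 and the accompanying remark about $\mathrm{dd}(\otimes)$ being full are more detailed than the paper, which simply states the disconnectedness consequence without further argument, but the underlying reasoning is identical.
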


\begin{proof}
For a given object $A$, a dagger functor provides a bijection 
\[
 \mathbf{CC}(I, A)\cong \mathbf{CC}(A, I)
\]
so  since $I$ is terminal this can only occur if each object has only one state. 
\end{proof}

\section{Constructing caucats}\label{sec:constructCC}

In this Section we shall describe methods for constructing caucats. The first step consists of normalizing a ($\dagger$-compact)  SMC. Then there are two options, either to   `carve out' an appropriate subcategory  of a monoidal category---this will represent discarding the unphysical objects in the category (when a causal structure is already given in the category)---or to combine it with a causal structure, resulting in a partial monoidal product that exists for pairs of objects which are not causally related. 

\[
\xymatrix{ & \dagger\textrm{-compact category}\ar[d]^{\textrm{normalize}} & \\
 & \textrm{Normalized category}\ar[dl]^{ \textrm{carving}}\ar[dr]_{\textrm{pairing}}  & \\
 \textrm{Caucat}&  &  \textrm{Caucat}}
\]    

Finally in Subsection \ref{sec:recover} we shall describe how to reinstate the power of CQM, given that we showed above that structures such as compactness cannot be retained in caucats. 

\subsection{Normalizing}\label{sec:normalising}

\begin{defn}[Normalization]\em
Given a ($\dagger$-compact)  SMC $\mathbf{C}$ with environment structure, we define a new SMC $\mathbf{C}_\top$ with the same objects as $\mathbf{C}$ but with morphisms restricted to normalized ones, and a corresponding inclusion functor $F_\top:\mathbf{C}_\top\hookrightarrow \mathbf{C}$.
\end{defn}

Evidently, if $\mathbf{C}$ is $\dagger$-compact, $\mathbf{C}_\top$ will not be, therefore we retain its connection with the given $\dagger$-compact  SMC via the strict monoidal functor  $F_\top$.  But while  we cannot retain dagger structure and compact structure, we can retain a \em conjugate functor\em, which can be constructed from the  dagger structure and the compact structure.  

Firstly, given a compact category $\mathbf{C}$ (see Definition \ref{def:compact}) we can define a contravariant functor:
\[
(-)^*: \mathbf{C}\to \mathbf{C}:: f\mapsto (1_{A^*}\otimes \eta_B)\circ (1_{A^*}\otimes f \otimes 1_{B^*})\circ(\eta_A\otimes 1_{B^*})\,,
\] 
that is, in diagrams, \vspace{-1.5mm}
\beq\label{pic:transpose}
\raisebox{-7mm}{\epsfig{figure=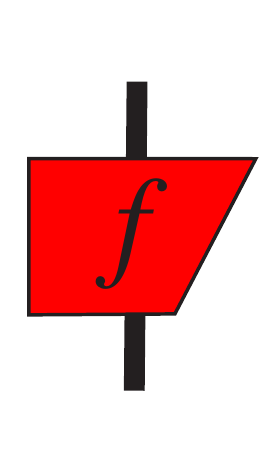,width=26pt}}\ \mapsto\
\raisebox{-7mm}{\epsfig{figure=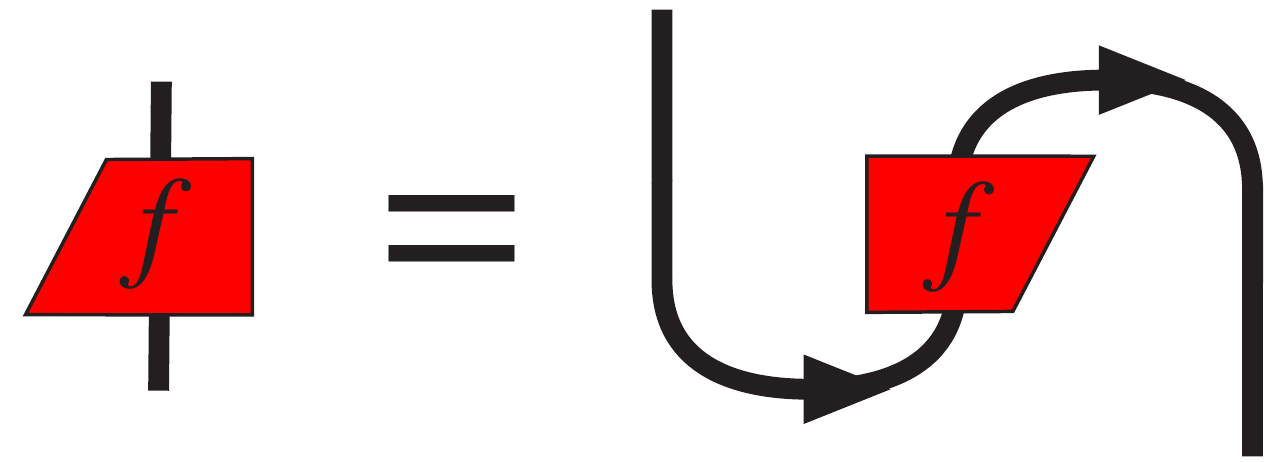,width=120pt}}\vspace{-1.5mm} \,,
\eeq
where we used a 180-degree rotation of the box representing the morphism to denote its \em transpose\em.  If we moreover have a dagger functor we can also define a covariant functor \cite{AC1}:
\[
(-)_*= ((-)^*)^\dagger=((-)^\dagger)^*: \mathbf{C}\to \mathbf{C}:: f\mapsto (1_{B^*}\otimes \eta_A)\circ (1_{B^*}\otimes f^\dagger \otimes 1_{A^*})\circ(\eta_B\otimes 1_{A^*})\,,
\] 
that is, in diagrams, \vspace{-1.5mm}
\beq\label{pic:conjugate}
\raisebox{-7mm}{\epsfig{figure=Transpose0.pdf,width=26pt}}\ \mapsto\
\raisebox{-7mm}{\epsfig{figure=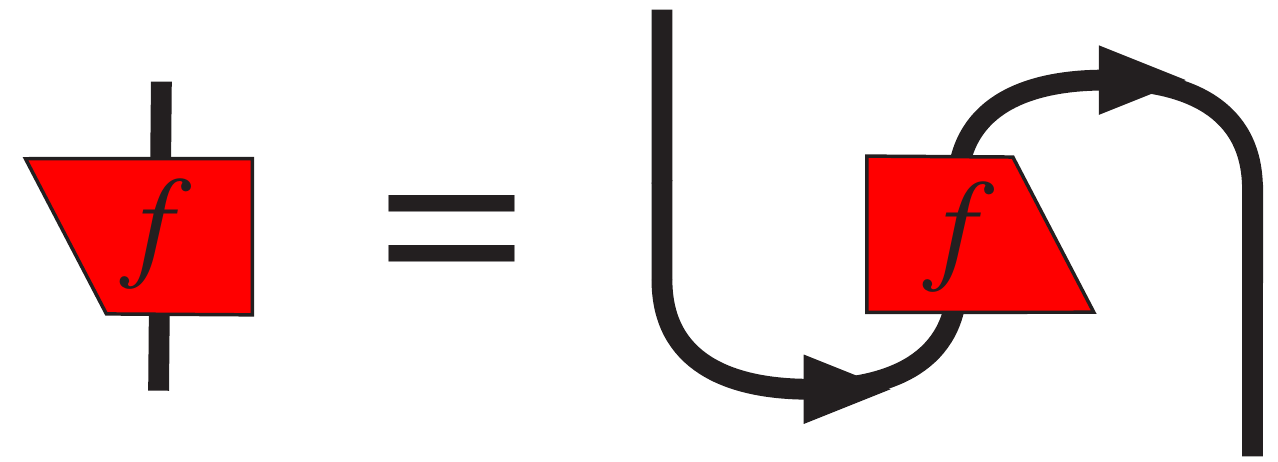,width=120pt}}\vspace{-1.5mm} \,,
\eeq
where we use reflection in the vertical axis to denote the \em conjugate\em. 

Under mild assumptions, also  the existence of a CJ-state is also retained in a caucat.


\begin{thm}\label{ref:thmDagCaucat}
Let $\mathbf{C}$ be a dagger compact category with an environment structure for which, for all $A\in|\mathbf{C}|$:
\bit
\item[{\bf (cc1)}] the scalar $s_A=\top_{A^*\otimes A}\circ\eta_{A}: I\to I$ is invertible, and,
\item[{\bf (cc2)}] $(\top_A)_*=\top_{A^*}$\,,
\eit
 then  in  $\mathbf{C}_\top$ every object has a  CJ-state $\eta_A^{CJ}=(s_A)^{-1}\cdot\eta_{A}$,  and the  conjugation functor from $\mathbf{C}$ is inherited. 
\[
\qquad 
\] 
\end{thm}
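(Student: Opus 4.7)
The statement has two components: for each object $A$, the state $\eta_A^{CJ}:=s_A^{-1}\cdot \eta_A$ is a CJ-state in $\mathbf{C}_\top$, and the conjugation functor $(-)_*$ of $\mathbf{C}$ restricts to a functor on $\mathbf{C}_\top$. I would address these in turn, working in the graphical calculus of $\mathbf{C}$ throughout and invoking the restriction to normalized morphisms only at the end.

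For the CJ-state claim, three items need verification. First, $\eta_A^{CJ}\in \mathbf{C}_\top(I, A^*\otimes A)$, which is immediate from (cc1):
\[
\top_{A^*\otimes A}\circ \eta_A^{CJ} = s_A^{-1}\cdot s_A = 1_I.
\]
Second, the naming map $f\mapsto (1_{A^*}\otimes f)\circ \eta_A^{CJ}$ sends $\mathbf{C}_\top(A,B)$ into $\mathbf{C}_\top(I, A^*\otimes B)$. Assuming multiplicativity of the environment structure ($\top_{X\otimes Y}=\top_X\otimes\top_Y$), bifunctoriality of $\otimes$, and normalization of $f$, this is a direct computation:
\[
\top_{A^*\otimes B}\circ (1_{A^*}\otimes f)\circ \eta_A^{CJ} = (\top_{A^*}\otimes (\top_B\circ f))\circ \eta_A^{CJ} = \top_{A^*\otimes A}\circ \eta_A^{CJ} = 1_I.
\]
Third, injectivity: in the ambient dagger compact category $\mathbf{C}$, the naming map $f\mapsto (1_{A^*}\otimes f)\circ \eta_A$ is a bijection, with inverse $g\mapsto (\epsilon_A\otimes 1_B)\circ (1_A\otimes g)$ where $\epsilon_A=\eta_A^\dagger\circ \sigma_{A,A^*}$, as verified by the triangle equations for the compact structure. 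Multiplication by the invertible scalar $s_A^{-1}$ preserves bijectivity, so restricting the domain to $\mathbf{C}_\top(A,B)$ yields an injection into $\mathbf{C}_\top(I, A^*\otimes B)$.

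For the conjugation functor, since $|\mathbf{C}_\top|=|\mathbf{C}|$ it suffices to check that if $f:A\to B$ is normalized then so is $f_*:A^*\to B^*$. Using functoriality of $(-)_*$ together with hypothesis (cc2):
\[
\top_{B^*}\circ f_* = (\top_B)_*\circ f_* = (\top_B\circ f)_* = (\top_A)_* = \top_{A^*}.
\]
Hence $(-)_*$ restricts to an endofunctor of $\mathbf{C}_\top$, automatically inheriting functoriality from $\mathbf{C}$.

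The principal obstacle is the second item above: it tacitly relies on the environment structure being multiplicative, $\top_{X\otimes Y}=\top_X\otimes \top_Y$. This compatibility is exactly what ensures $\mathbf{C}_\top$ is closed under the monoidal product in the first place, so it is really a prerequisite for the construction of $\mathbf{C}_\top$ rather than something special to the CJ argument; I would therefore either treat it as part of the standing notion of \emph{environment structure} (as is conventional in discussions of discarding maps in dagger-compact settings), or add a preparatory lemma deriving it from (cc2) and dagger-compactness before embarking on the three checks above.
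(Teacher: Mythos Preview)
Your proof is correct and follows essentially the same approach as the paper: normalize $\eta_A$ via (cc1), invoke compactness for injectivity, and use (cc2) plus functoriality of $(-)_*$ for the conjugation claim. You are in fact more thorough than the paper, which omits your second item (that the naming map lands in $\mathbf{C}_\top$) and simply asserts that the CJ-state property ``follows straightforwardly from compactness''; your observation about multiplicativity of $\top$ is well-taken, and is indeed an implicit standing assumption needed already for $\mathbf{C}_\top$ to be an SMC.
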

\begin{proof}
For $A\in | \mathbf{C}_\top|$ the morphism $(s_A)^{-1}\cdot\eta_{A}\in\mathbf{C}(I, A\otimes A)$ is normalized by assumption (cc1), so in $ \mathbf{C}_\top$, and the fact that this is a CJ-state follows straightforwardly from compactness, since the property of being a CJ-state is weakened form of compactness.  Moreover, if $f:A\to B$ is normalized, then
 \[
\begin{array}{rcll}
\top_{B_*}\circ f_* & = & (\top_B)_*\circ f_* & \mathrm{(cc2)}\\
& = & (\top_B\circ f)_* & \mathrm{(functoriality)} \\
& = & (\top_A)_* & \mathrm{(normalization)} \\
& = & \top_{A^*} & \mathrm{(cc2)}.
\end{array}
\]  
So conjugates of normalized morphisms are also normalized, and hence $ \mathbf{C}_\top$ inherits the conjugation
functor from $\mathbf{C}$. 
\end{proof} 

\subsection{Causal structure and carving}\label{sec:CausalStructure}

Causal structure has been studied intensively in relativity, and prominent examples of related work include the  \em Kronheimer-Penrose axioms for space-time \em \cite{KP} and Sorkin's \em causal set \em (causet) program, surveyed in \cite{Sorkin}. The basis of these axiomatizations are Zeeman's theorem \cite{Zeeman}, which states that the Poincar\'{e} group can be defined as the group of automorphisms of Minkowski space that preserve the causal order on points (i.e.~space-time events), and  Malament's theorem \cite{Malament}, which states that  for temporally oriented spacetimes which are past- and future-distinguishing  both the differentiable structure and the conformal metric can be recovered from the causal structure.   Other more recent work includes the identification of a domain-theoretic structure on space-time intervals for an important class of space-times  by Martin and Panangaden \cite{KeyePrakash}. 

We define causal structure in a  monoidal category as follows.

\begin{defn}\label{def:causal}\em
The \emph{causal structure} of a partial (or full) monoidal category is a directed graph $\mathcal{G}=(G,E)$, whose vertices $G$ are the objects of $\mathbf{CC}$, and an edge $(A,B)\in E$ exists iff $\mathbf{CC}(A,B)$ is connected.
\end{defn}



\begin{example}\label{ex:causal}
We can define a caucat $\mathbf{CC}$, shown in the diagram below, whose causal structure is the directed graph of a `3-loop'. The caucat can be obtained from the graph by freely adjoining the monoidal unit.
\[
\xymatrix{ & I\\
A\ar@{-->}[ur]\ar[rr] &  & B\ar[dl]\ar@{-->}[ul]\\
 & C\ar@{-->}[uu]\ar[ul]}
\]
In this caucat, the only pairs of objects for which $\otimes$ exists are $A\otimes I$, $I\otimes A$ and $I\otimes I$. The restrictions on the morphisms are as follows. Firstly, for related pairs $(A,B)$, we have $\mathbf{CC}(A,B)\neq[\mathbf{CC}(I,B)]\circ\top_{A}$. Secondly, we must ensure that any pair of composable connected morphisms $f,g$, the composite $g\circ f$ is disconnected. This is allowed, since there is nothing in the definition of a caucat that forces the composition of connected morphisms to be a connected morphisms. That is, connectedness of hom-sets is not transitive.
\end{example}

Since every (full) monoidal category is also a partial monoidal category, we can apply Definition \ref{def:causal} as follows. We denote a normalized subcategory of $\mathbf{C}$ by $\mathbf{C}_\top$.

\begin{example}[Degenerate causal structure]\label{ex:cpm}
Recall the category of mixed states and CP maps $\mathbf{Mix}$, defined in Example \ref{ex:mixedQT}: this is the setting for mixed-state quantum theory. We shall consider the subcategory of $\mathbf{Mix}$  containing only normalized  morphisms, $\mathbf{Mix}_\top$, i.e. its morphisms are now \em trace-preserving \em complete-positive maps. As described above, requiring each morphism to preserve the trace means that this category has a terminal unit object. Hence $\mathbf{Mix}_\top$ is a normalized category, and therefore satisfies some conditions of being a caucat. However, this category has \em degenerate \em causal structure: each object is connected to one another. For example, in each hom-set $\mathbf{Mix}(\mathbb{C}^{\otimes m},\mathbb{C}^{\otimes n})$ such that $m\neq n$ and $m,n>1$, there is a morphism that does not factor through $\mathbb{C}$. Indeed, this category actually represents the `opposite' structure of a causal category: every pair of objects in $\mathbf{Mix}_\top$ is connected but the tensor also exists for every pair of objects.


\end{example}



The preceding Example leads to the idea that to obtain a caucat from a normalized category, such as $\mathbf{Mix}_\top$, we must discard the unphysical objects. We can do this by defining a caucat as a subcategory of an SMC.

\begin{defn}[Carved category]\label{def:carving}\em
Let $\mathbf{C}$ be an SMC. A \em carved category (of $\mathbf{C}$)\em, denoted  $\mathbf{CC}$, is a normalized subcategory of $\mathbf{C}$ such that for $A,B\in|\mathbf{C}|$, iff $A$ and $B$ are connected then $A\otimes B\notin|\mathbf{CC}|$.
\end{defn}


To construct a caucat from a protocol and a category of physical processes, we proceed as follows. Recall that a directed graph $\mathcal{G}=(G,E)$ is \em transitive \em iff the adjancency relation $E$ is transitive. We want to extract a physical universe $\mathbf{CC}$ from $\mathbf{C}$, e.g.~some systems such as qubits, and define causal relationships between them using $\mathcal{G}$.

\paragraph{Carving construction.}
Let $\mathbf{C}$ be an SMC of possible physical processes, let $\mathbf{C}_\top$ be its normalized subcategory, and let $\mathcal{G}$ be a transitive directed graph. The \em carving construction of $\mathbf{CC}$ \em is, a triple $(\mathbf{C}, \mathcal{G},\kappa:G\rightarrow |\mathbf{C}|)$, where $\kappa$ is an injective mapping on the vertices of $\mathcal{G}$, which yields $\mathbf{CC}$ as a carved category of $\mathbf{C}$ via the following steps. 

\begin{enumerate} 
\item The mapping $\kappa$ identifies the objects of $\mathbf{C}$ which we want to consider in our protocol or universe. These objects will be the objects of our caucat $\mathbf{CC}$, i.e.~$\kappa(G) = |\mathbf{C}|$.  
\item We define the hom-sets of $\mathbf{CC}$  to be connected, and thus identical to the hom-sets of $\mathbf{C}_\top$, only when they are causally related in $\mathcal{G}$ (i.e. only when they are causally related in the intended protocol). Hence for $A=\kappa(g_1)$ and $B=\kappa(g_2)$, we have 
\[
\mathbf{CC}(A,B) := \left\{ 
\begin{array}{ll}
      				\mathbf{C}_\top(A,B) 		& \textrm{if } (g_1,g_2)\in E;\\
				 \left[\mathbf{C}_\top(I,B)\right]\circ\top_A       & \textrm{otherwise}.
				\end{array} 
\right.
\]
Transitivity of $\mathcal{G}$ ensures that the composition law of $\mathbf{CC}$ can be defined as the composition law of $\mathbf{C}$.
\item Finally, we make $\mathbf{CC}$ into a carved category of $\mathbf{C}$ by defining $\mathbf{CC}$  to have a partial monoidal product such that the monoidal product of a pair of objects $(\kappa(g_1),\kappa(g_2))$ exists only when  
$(g_1,g_2)\notin E$.
\end{enumerate}

\begin{remark}
Note that if the digraph $\mathcal{G}$ is not transitive it is always possible to define an $E'\supset E$ such that $\mathcal{G}$ induces a transitive digraph $\mathcal{G'}=(G,E')$.
\end{remark}

An advantage of carving is that there exists a canonical embedding into a $\dagger$-SMC, i.e.~the category out of which the caucat was carved. We discuss this in Subsection \ref{sec:recover}

\begin{example}[Teleportation protocol]
We can construct a causal category $\mathbf{T}$ for the teleportation (with classical communication) protocol using the carving construction. We define a directed graph $\mathcal{T}=(T,E)$ that represents the causal relationships in the protocol:
\[
\xymatrix{  &  a_5 \\
{a_3}\ar[ur] & {a_4}\ar[u] \\
{a_1}\ar[u] &  {a_2}\ar[u]\ar[ul]
}
\] 
This should be compared to the depiction in the graphical calculus:
\[
\raisebox{0mm}{\epsfig{figure=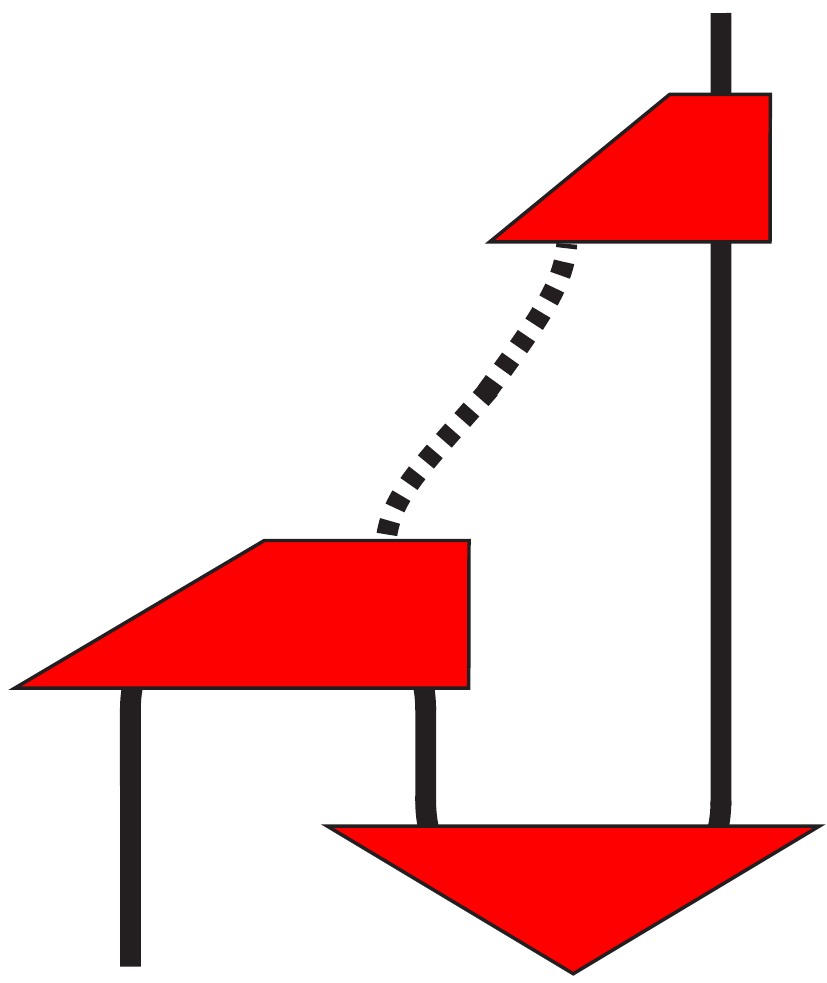,width=80pt}}
\]
We shall carve  $\mathbf{T}$ from $ \mathbf{Mix}$, using the carving construction $(\mathbf{Mix}, \mathcal{T},\kappa:T\rightarrow |\mathbf{C}|)$ as follows:
\begin{enumerate}
\item We take the requisite number of isomorphic copies of $\mathbb{C}^2$, or tensor products thereof, to be the objects of the category $\mathbf{T}$, so that the mapping $\kappa$ yields $\kappa(a_i)\cong \mathbb{C}^2$ for $i\in\{1,4,5\}$, and $\kappa(a_i)\cong \mathbb{C}^2\otimes \mathbb{C}^2$ for $i\in\{2,3\}$.
\item We define the hom-sets using $\mathcal{G}$. For example, alice's system $\kappa(a_3)\cong \mathbb{C}^2$ at $a_3$ has bob's system $\kappa(a_5)\cong \mathbb{C}^2$ at $a_5$ in its future light-cone, so the hom-set $\mathbf{T}(\kappa(a_3),\kappa(a_5))$ will be connected. Note that when $a_i$ and $a_j$ are not causally related
\[
\mathbf{T}(\kappa(a_i), \kappa(a_j))\subsetneq \mathbf{Mix}_\top (\kappa(a_i),\kappa(a_j)). 
\]
This will define a normalized subcategory of $\mathbf{Mix}_\top $. 
\item We define the partial monoidal product for $\mathbf{T}$ to exist only for disconnected systems.
\end{enumerate}

\end{example}

Hence we obtain a causal category $\mathbf{T}$ which formalizes both the resources and the causal relationships of this protocol.

\subsection{A caucat from a causet and an SMC}\label{sec:Construction}

 In this Subsection 
we shall construct a caucat from a given directed graph $\mathcal{G}=(G,\leq)$ (typically a poset) and symmetric strict monoidal category ${\bf C}$. Define $R(\mathcal{G})\subseteq 2^R$ to consist of those subsets $a\subseteq G$ satisfying  
\[
x, y\in a \Rightarrow (x\not\leq y\wedge y\not\leq x).
\] 
When tensoring we need to keep track of the space-time point an object in ${\bf C}$ is assigned to; this is achieved as follows.

\begin{defn}[Pairing construction]\label{def:construction}\em
The  \em pairing of a monoidal category ${\bf C}$ and a directed graph $\mathcal{G}$\em, is a category ${\bf CC}({\bf C}, \mathcal{G})$, defined inductively as follows:
\bit
\item Objects are either sets of pairs $\{(A_i, x_i)\}_{i\in{\cal I}}$ with $A_i\in |{\bf C}|\setminus \{I\}$ for all $i\in {\cal I}$ and $\{x_i\}_{i\in {\cal I}}\in R(G)$,  or $(I, \emptyset)$.  For $\{(A_i, x_i)\}_{i\in {\cal I}}$ and $\{(B_j, y_j)\}_{j\in {\cal J}}$ the tensor is the union and exists provided that:
\[
\{x_i\}_{i\in {\cal I}}\cap\{y_j\}_{j\in {\cal J}}=\emptyset \qquad \{x_i\}_{i\in {\cal I}}\cup\{y_j\}_{j\in {\cal J}}\in R(G)\,,
 \]
and we set $\{(A_i, x_i)\}_{i\in {\cal I}}\otimes (I, \emptyset):=\{(A_i, x_i)\}_{i\in {\cal I}}$.

\item For states we set:
\[
{\bf CC}({\bf C}, \mathcal{G})\bigl((I, \emptyset),\{(A_i, x_i)\}_{i\in {\cal I}}\bigr):=
{\bf C}(I,\otimes_{i\in {\cal I}}A_i)
\]
where due to the fact that ${\bf C}$ is symmetric the order of tensoring is not essential, just a matter of bookkeeping.
\item For general morphisms we set: 
\[
{\bf CC}({\bf C}, \mathcal{G})\bigl(\{(A_i, x_i)\}_{i\in {\cal I}}, \{(B_j, y_j)\}_{j\in {\cal J}}\bigr):=\hspace{5.4cm}
\]
\[
\left\{     
\sigma'\circ \left(  f\otimes (p\circ \top_{\otimes_{i\in {\cal I}\setminus {\cal I}'}A_i})  \right)\circ\sigma
\left| 
\!\begin{array}{c}
{\cal I}'\subseteq {\cal I}, {\cal J}'\subseteq {\cal J}\vspace{2mm}\\ 
\{x_i\}_{i\in {\cal I}'}\sqsubseteq \{y_j\}_{j\in {\cal J}'}\vspace{2mm}\\
p\in {\bf C}(I,\otimes_{j\in {\cal J}\setminus {\cal J}'}B_j)\vspace{2mm}\\
f\in  {\bf C}(\otimes_{i\in {\cal I}}A_i,\otimes_{j\in {\cal J}}B_j)\vspace{2mm}
\end{array}\!\!
\right\}\right.
\]
where $\sigma$ and $\sigma'$ are the unique symmetry isomorphisms that re-order the objects of ${\bf C}$ to match the ordering of ${\cal I}$ and ${\cal J}$, and $X \sqsubseteq Y$ means:
\[
\forall x\in X, \forall y\in Y: x\leq y\,.
\]
Finally, we close under tensoring, that is, for all $\{(A_i, x_i)\}_{i\in {\cal I}}$ and $\{(A_i', x_i')\}_{i\in {\cal I}'}$ for which the tensor exists, and all $\{(B_j, y_j)\}_{j\in {\cal J}}$ and all $\{(B_j', y_j')\}_{j\in {\cal J}'}$ for which the tensor exists,
if 
\[
f\in {\bf CC}(\{(A_i, x_i)\}_{i\in {\cal I}},\{(B_j, y_j)\}_{j\in {\cal J}})
\]
and 
\[
f'\in {\bf CC}(\{(A_i', x_i')\}_{i\in {\cal I}'},\{(B_j', y_j')\}_{j\in {\cal J}'})
\] 
then 
\[f \otimes f' \in
{\bf CC}(\{(A_i, x_i)\}_{i\in {\cal I}}\cup\{(A_i', x_i')\}_{i\in {\cal I}'},\{(B_j, y_j)\}_{j\in {\cal J}}\cup\{(B_j', y_j')\}_{j\in {\cal J}'}).
\]
\eit
\end{defn}

\begin{remark}
When defining hom-sets in Definition \ref{def:construction}, we ensured that we kept track of the space-time points with which objects are associated. To see why this is necessary, consider the following naive approach to constructing a caucat.  We set $a\leq b$ for $a, b\in R(\mathcal{G})$ when there exist $x\in a$ and $y\in b$ such that $x\leq y$.
We now attempt to define a caucat ${\bf CC}({\bf C}, \mathcal{G})$ as follows:
\bit
\item Objects are either  pairs $(A, a)$ with $A\in |{\bf C}|\setminus \{I\}$ and $a\in R(P)\setminus\emptyset$,  or $(I, \emptyset)$.
\item  Morphisms are: 
\[
{\bf CC}({\bf C}, \mathcal{G})((A,a), (B,b)):=
\left\{\begin{array}{ll}
{\bf C}(A,B)&\quad a\leq b\vspace{2mm}\\
{{\bf C}(I,B)}\circ\top_A &\quad a\not\leq b
\end{array}\right.
\]
\item The tensor of $(A,a)$ and $(B,b)$ exists iff both $a\not\leq b$ and $b\not\leq a$, or if $a$ or $b$ (or both) is the empty set, which implies that $a$ and $b$ are disjoint and that  $a\cup b\in R(\mathcal{G})$, so we can set:
\[
(A,a)\otimes (B,b):=(A\otimes B,a\cup b)\,.
\]
\eit

Consider the following scenario, as contrasted in Definition \ref{def:construction} and in the naive approach. Let $a\le b$ and $b\le c$ but $a\not\leq c$. For example, consider the spatial slices $a=\{x_1,x_2\}$, $b=\{y_1,y_2\}$ and $c=\{z_1,z_2\}$, with the relations $x_2\le y_1$ and $y_2\le z_1$, but with all other elements of the spatial slices $a,b,c$ spacelike separated. 

First we use the construction of Definition \ref{def:construction}. The objects we consider in the caucat are $\{(A_1,x_1),(A_2,x_2)\}$, $\{(B_1,y_1),(B_2,y_2)\}$ and $\{(C_1,z_1),(C_2,z_2)\}$. Suppose that in ${\bf C}$ we have connected morphisms $f:A_2\rightarrow B_1$ and $g:B_2\rightarrow C_1$. Ignoring symmetry isomorphisms (which cancel on composition), in the caucat ${\bf CC}({\bf C}, \mathcal{G})$ we have the morphisms
\[
f'=f\otimes (p_{y_2}\circ \top_{A_1}) \in {\bf CC}(\{(A_i, x_i)\}_{i\in {\cal I}},\{(B_j, y_j)\}_{j\in {\cal J}})
\]
and
\[
g'=(p_{z_2}\circ \top_{B_1})\otimes g \in {\bf CC}(\{(B_j, y_j)\}_{j\in {\cal J}},\{(C_k, z_k)\}_{k\in {\cal K}})
\]
and the composite is
\[
g'\circ f' = (p_{z_2}\circ \top_{B_1}\circ f)\otimes (g \circ p_{y_2}\circ \top_{A_1})=(p_{z_2}\otimes(g\circ p_{y_2}))\circ\top_{A_1\otimes B_1}
\]
which is disconnected, as required since $a\not\leq c$.

But consider these morphisms in the naive approach: let $A=A_1\otimes A_2$, $B=B_1\otimes B_2$ and $C=C_1\otimes C_2$; then $f'\in {\bf CC}({\bf C}, \mathcal{G})((A,a), (B,b))$ and $g'\in {\bf CC}({\bf C}, \mathcal{G})((B,b), (C,c))$.  Then $f'$ and $g'$ can be connected morphisms, and as a morphism in  ${\bf C}$, the composite $g'\circ f'$ may also be connected. Hence because the naive approach ignores how the domain and codomain of morphisms are assigned in space-time, the morphism $g'\circ f'$ may also be connected in the caucat.
\end{remark}
\begin{remark}
The causal structure (from Definition \ref{def:causal}) of the caucat ${\bf CC}({\bf C}, \mathcal{G})$ in Definition \ref{def:construction} is simply the set $R(\mathcal{G})$. 
\end{remark}

We now investigate the conditions on $\mathbf{C}$ and $P$ that make ${\bf CC}({\bf C}, \mathcal{G})$ in Definition \ref{def:construction} a caucat.

\begin{remark}\label{rmk:triv_caucat}
In  Definition \ref{def:construction}, we stipulated that the objects of  ${\bf CC}({\bf C}, \mathcal{G})$, which are sets of pairs, cannot consist of a pair $(I,a)$ unless $a=\emptyset$. This excludes the following degenerate case. Let $\mathbf{C}_0$ be the trivial monoidal category, i.e.~$|\mathbf{C}|=\{I\}$ and the only arrow is the identity morphism $1_I$. Since $\mathbf{C}_0$ has no connected morphisms, the only way to form  $\mathbf{CC}(\mathbf{C}_0, \mathcal{G})$ is for the graph $\mathcal{G}$ to be trivial, i.e.~the graph relation satisfies $x\leq y$ iff $x=y$. This follows from the fact that Definition \ref{def:construction} requires the hom-sets of causally related objects, i.e. where slices $a,b\in R(\mathcal{G})$ satisfy $a\sqsubseteq b$, to be connected.
\end{remark}

\begin{example}
Let $\mathbf{C}_1$ be the monoidal category with only one non-trivial object $A\neq I$, and only one disconnected morphism $f:A\rightarrow A$. Let the graph $\mathcal{G}$ have elements $G=\{x,y\}$, with the order $x\leq y$. Then ${\bf CC}({\bf C}, \mathcal{G})$ has one connected morphism $f:(A,x)\rightarrow (A,y)$. This also shows that the category ${\bf CC}({\bf C}, \mathcal{G})$ does not exist for an arbitrary monoidal category (even if non-trivial) and graph, since if $f$ is disconnected in this example then $\mathcal{G}$ must be the trivial graph as in Remark \ref{rmk:triv_caucat}.
\end{example}

This leads to the following proposition.

\begin{prop}
For any digraph $\mathcal{G}$, the category $\mathbf{CC}(\mathbf{C}_1, \mathcal{G})$ is a caucat. 
\end{prop}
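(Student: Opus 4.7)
The plan is to unpack Definition \ref{def:caucat} clause-by-clause and check each against the pairing construction of Definition \ref{def:construction}. For $\mathbf{CC}(\mathbf{C}_1, \mathcal{G})$ to be a caucat we must verify: (a) it is a symmetric strict partial monoidal category; (b) its tensor unit $(I,\emptyset)$ is terminal; (c) the monoidal product $X\otimes Y$ exists iff both $\mathbf{CC}(X,Y)$ and $\mathbf{CC}(Y,X)$ are disconnected; and (d) every object has at least one state. The overall strategy is to push all structural content down to $\mathbf{C}_1$ and to properties of $R(\mathcal{G})$, exploiting the extreme minimality of $\mathbf{C}_1$---whose only non-identity morphisms are built from $\psi:I\to A$ and $\phi:A\to I$ subject to $\phi\circ\psi=1_I$---to force the biconditional.

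Clauses (a), (b), (d) are largely bookkeeping. The symmetric strict partial monoidal structure is built into the construction: the operation $X\otimes Y = X\cup Y$ (defined when $X\cap Y=\emptyset$ and $X\cup Y\in R(\mathcal{G})$) is commutative and associative with unit $(I,\emptyset)$, the symmetry morphisms are the $\sigma$ appearing in Definition \ref{def:construction}, and bifunctoriality is inherited from the bifunctor on $\mathbf{C}_1$. Terminality of $(I,\emptyset)$ follows because in $\mathbf{C}_1$ the only effect on $A^{\otimes n}$ is $\phi^{\otimes n}$ (up to $\phi\circ\psi=1_I$); tracing the morphism template of Definition \ref{def:construction} with codomain $(I,\emptyset)$ forces $\mathcal{J}=\mathcal{J}'=\emptyset$ and $f\in\mathbf{C}_1(\otimes A,I)$, which is a singleton. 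For non-emptiness of states, $\psi^{\otimes|\mathcal{I}|}:I\to A^{\otimes|\mathcal{I}|}$ supplies a state of $\{(A,x_i)\}_{i\in\mathcal{I}}$ via the states clause of the construction.

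The main work is the biconditional. Let $X=\{(A,x_i)\}_{i\in\mathcal{I}}$ and $Y=\{(A,y_j)\}_{j\in\mathcal{J}}$. For the forward direction, assume $X\otimes Y$ exists, so $\{x_i\}\cup\{y_j\}\in R(\mathcal{G})$ and no $x_i$ is $\leq$-comparable to any $y_j$. Then for any non-empty subsets $\mathcal{I}'\subseteq\mathcal{I}$, $\mathcal{J}'\subseteq\mathcal{J}$ the constraint $\{x_i\}_{\mathcal{I}'}\sqsubseteq\{y_j\}_{\mathcal{J}'}$ fails, so every morphism of $\mathbf{CC}(X,Y)$ arising from Definition \ref{def:construction} has $\mathcal{I}'=\emptyset$ or $\mathcal{J}'=\emptyset$; in either case the formula factors through $(I,\emptyset)$ in the caucat, so the hom-set is disconnected, symmetrically for $\mathbf{CC}(Y,X)$. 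For the converse, suppose $X\otimes Y$ does not exist, so either the slices overlap or (without loss of generality) some $x_i\leq y_j$. Choosing $\mathcal{I}'=\{i\}$, $\mathcal{J}'=\{j\}$, $p=1_I$ and $f=1_A\in\mathbf{C}_1(A,A)$ produces a morphism in $\mathbf{CC}(X,Y)$ whose $\mathbf{C}_1$-content on the $(i,j)$ factor is $1_A$.

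The crux---and the main obstacle---is verifying that this last morphism is genuinely connected in $\mathbf{CC}(\mathbf{C}_1,\mathcal{G})$, i.e.\ does not factor through $(I,\emptyset)$. Any such factorization, when restricted to the relevant $A$-factor, would exhibit $1_A$ as a composite of an effect with a state in $\mathbf{C}_1$; since $\phi$ and $\psi$ are the only such morphisms, this would force $\psi\circ\phi=1_A$, contradicting the assumption that $f=\psi\circ\phi$ is a \emph{proper} disconnected endomorphism of $A$. This is precisely the property distinguishing $\mathbf{C}_1$ from the trivial $\mathbf{C}_0$ of Remark \ref{rmk:triv_caucat}, and it is where the proof pivots on the specific minimal structure of $\mathbf{C}_1$ rather than on any feature of $\mathcal{G}$, thereby yielding the result for arbitrary digraphs.
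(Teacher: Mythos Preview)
The paper gives no proof of this proposition; it is stated immediately after the motivating example, apparently as self-evident from the minimality of $\mathbf{C}_1$, and is in any case subsumed by the general result stated after Lemma~\ref{lem:pairing}. Your clause-by-clause verification is therefore more explicit than anything the paper supplies, and your identification of the crux---that the converse of the tensor biconditional hinges on $1_A$ being genuinely connected in $\mathbf{C}_1$, which is precisely what separates $\mathbf{C}_1$ from the trivial $\mathbf{C}_0$ of Remark~\ref{rmk:triv_caucat}---is exactly right.

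One point deserves care: in the converse you write ``either the slices overlap or some $x_i\leq y_j$'' but then argue only the second alternative. If the relation $\leq$ is reflexive (as for a poset, the paper's ``typical'' case) the overlap case collapses to the second by taking $x_i=y_j$ at the shared vertex. For a genuinely arbitrary digraph, however, an isolated vertex $x$ with no self-loop yields $X=Y=\{(A,x)\}$ with $X\otimes Y$ undefined by the disjointness clause of Definition~\ref{def:construction}, while the morphism template produces only disconnected morphisms in $\mathbf{CC}(X,X)$---indeed it does not even produce the identity. This is an imprecision in the paper's construction and in the phrase ``any digraph'' rather than a flaw in your argument; the paper's own sketch of the general case passes over the same point.
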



\begin{example}
Consider again the category $CPM_\top({\bf C})$ that was introduced in Example \ref{ex:cpm}. In this category $I$ is terminal:  if $\pi:A\to I$, then $\pi=1_I\circ \pi=\top_I\circ \pi=\top_A$.
\end{example}

\begin{lem}\label{lem:pairing}
The pairing ${\bf CC}({\bf C}, \mathcal{G})$ of $\mathbf{C}$ and $\mathcal{G}$ is a normalized category iff $\mathbf{C}$ is a normalized category.
\end{lem}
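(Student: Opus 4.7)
The plan is to verify the two defining conditions of a normalized category---terminality of the unit object and non-emptiness of $\mathbf{C}(I, A)$ for each $A$---in both directions of the equivalence. The key observation is that, by Definition \ref{def:construction}, morphisms and states in $\mathbf{CC}(\mathbf{C},\mathcal{G})$ are built directly from morphisms, states, and $\top$-maps of $\mathbf{C}$, so both normalization conditions should lift and descend transparently between the two categories.

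For the $(\Leftarrow)$ direction, assume $\mathbf{C}$ is normalized. First, non-emptiness of states is immediate: by the definition of the construction, $\mathbf{CC}((I,\emptyset), \{(A_i, x_i)\}_{i \in \mathcal{I}}) = \mathbf{C}(I, \otimes_i A_i)$, and if each $A_i$ has a state $\psi_i : I \to A_i$ in $\mathbf{C}$ then $\otimes_i \psi_i$ witnesses a state in $\mathbf{CC}$. For terminality of $(I,\emptyset)$, I consider any effect $\{(A_i, x_i)\}_{i \in \mathcal{I}} \to (I, \emptyset)$; with $\mathcal{J} = \emptyset$ forcing $\mathcal{J}' = \emptyset$, the general form reduces (up to symmetries) to $f \otimes (p \circ \top_{\otimes_{i \in \mathcal{I} \setminus \mathcal{I}'} A_i})$, where $f \in \mathbf{C}(\otimes_{i \in \mathcal{I}'} A_i, I)$ and $p \in \mathbf{C}(I, I)$. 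Terminality of $I$ in $\mathbf{C}$ then forces $f = \top_{\otimes_{i \in \mathcal{I}'} A_i}$ and $p = 1_I$, and using $\top_{A \otimes B} = \top_A \otimes \top_B$ (a consequence of terminality) all such expressions collapse to the single morphism $\top_{\otimes_i A_i}$, giving uniqueness.

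For the $(\Rightarrow)$ direction, assume $\mathbf{CC}(\mathbf{C},\mathcal{G})$ is normalized. For state-existence in $\mathbf{C}$: the case $A = I$ is witnessed by $1_I$; for $A \neq I$, pick any $x \in G$ such that $\{(A,x)\}$ is a valid object of $\mathbf{CC}$, then its state, which exists by hypothesis, is by definition an element of $\mathbf{C}(I, A)$. For terminality of $I$ in $\mathbf{C}$: the identity $\mathbf{CC}((I,\emptyset),(I,\emptyset)) = \mathbf{C}(I,I)$ combined with terminality of $(I,\emptyset)$ in $\mathbf{CC}$ gives $\mathbf{C}(I,I) = \{1_I\}$; and for any $A \neq I$, taking $\mathcal{I}' = \mathcal{I} = \{i\}$, $\mathcal{J}' = \emptyset$, $p = 1_I$ embeds $\mathbf{C}(A, I)$ into the singleton effect set $\mathbf{CC}(\{(A,x)\}, (I, \emptyset))$, forcing $|\mathbf{C}(A, I)| \leq 1$; nonemptiness is witnessed by the terminal morphism of $\mathbf{CC}$ pulled back to $\mathbf{C}$.

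The principal obstacle is book-keeping around the rather intricate definition of morphisms in $\mathbf{CC}(\mathbf{C},\mathcal{G})$: one must carefully track how the data $(\mathcal{I}', \mathcal{J}', f, p, \sigma, \sigma')$ parameterize the hom-sets and verify that distinct choices collapse under terminality. The combinatorial core of the argument is the identity $\top_{A\otimes B}=\top_A\otimes \top_B$, which makes every splitting of $\mathcal{I}$ into $\mathcal{I}'$ and $\mathcal{I}\setminus\mathcal{I}'$ yield the same effect; this, together with the definition of states as hom-sets out of $(I,\emptyset)$, makes both directions essentially book-keeping once the right reductions are in place.
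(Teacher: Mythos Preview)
Your proof is correct and follows the same underlying idea as the paper's: identify the effect hom-sets $\mathbf{CC}(\{(A_i,x_i)\}_i,(I,\emptyset))$ with $\mathbf{C}(\otimes_i A_i,I)$ so that terminality transfers in both directions. The paper's argument is essentially a one-liner asserting this identification directly, whereas you unpack the general morphism clause of Definition~\ref{def:construction} and show explicitly that all choices of $(\mathcal{I}',\mathcal{J}',f,p)$ collapse under terminality via $\top_{A\otimes B}=\top_A\otimes\top_B$. You also treat the non-emptiness-of-states condition in both directions, which the paper's proof omits (it only checks terminality, despite the definition of ``normalized'' including both clauses). One small caveat: your $(\Rightarrow)$ argument for state-existence silently assumes $G\neq\emptyset$ so that an object $\{(A,x)\}$ is available; the paper's proof has the same implicit assumption.
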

\begin{proof}
We need to show that the unit object of ${\bf CC}({\bf C}, \mathcal{G})$ is terminal iff the unit object of $\mathbf{C}$ is terminal. By the definition of the pairing construction the unit object in  ${\bf CC}({\bf C}, \mathcal{G})$ is $(I, \emptyset)$. For any object $(\{(A_i, x_i)\}_{i\in {\cal I}}$, the morphisms to the unit object are defined as 
\[
{\bf CC}({\bf C}, \mathcal{G})\bigl(\{(A_i, x_i)\}_{i\in {\cal I}}, (I,\emptyset)\bigr):=\mathbf{C}(\otimes_{i\in {\cal I}}A_i,I).
\]
Hence the unit $I$ in $\mathbf{C}$ is terminal iff the unit $(I,\emptyset)$ in ${\bf CC}({\bf C}, \mathcal{G})$ is.
\end{proof}

\begin{prop} The pairing  ${\bf CC}({\bf C}, \mathcal{G})$ of $\mathbf{C}$ and $\mathcal{G}$ is a caucat iff  $\mathbf{C}$ is normalized and ${\bf C}(I,A)$ is non-empty for all $A\in |\mathbf{C}|$.
\end{prop}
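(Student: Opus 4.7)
The plan is to prove both directions of the biconditional, leaning heavily on Lemma~\ref{lem:pairing} and on unpacking the morphism formula of Definition~\ref{def:construction}. The forward direction is almost immediate: if ${\bf CC}({\bf C},\mathcal{G})$ is a caucat then it is in particular a normalized category, so Lemma~\ref{lem:pairing} forces $\mathbf{C}$ to be normalized. For the state condition, I would fix any $A\in|\mathbf{C}|$ with $A\neq I$ and any $x\in G$ and consider the singleton pairing object $\{(A,x)\}$; its set of states is $\mathbf{C}(I,A)$ by definition, and the caucat axiom demands this be non-empty. (For $A=I$ the condition is automatic via $1_I$.)

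For the backward direction I would check the axioms of a caucat in turn. Lemma~\ref{lem:pairing} gives that $(I,\emptyset)$ is terminal; the symmetric strict partial monoidal structure is built into Definition~\ref{def:construction}; and for any pairing object $\{(A_i,x_i)\}_{i\in\mathcal{I}}$ the product $\bigotimes_i\psi_i$ of chosen states $\psi_i\in\mathbf{C}(I,A_i)$ is a state, so nonemptiness of states is inherited. The remaining task, which is where the real content sits, is the biconditional
\[
A\otimes B \text{ exists in the pairing} \;\Longleftrightarrow\; \mathbf{CC}(A,B)\text{ and }\mathbf{CC}(B,A)\text{ are disconnected.}
\]

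For the ``$\Rightarrow$'' half I would argue directly from the morphism formula: if the supports $\{x_i\}$ and $\{y_j\}$ are disjoint and $\{x_i\}\cup\{y_j\}\in R(\mathcal{G})$, then no $x_i\leq y_j$ holds, so any admissible $(\mathcal{I}',\mathcal{J}')$ in the morphism formula must have $\mathcal{I}'=\emptyset$ or $\mathcal{J}'=\emptyset$; normalization of $\mathbf{C}$ then collapses the $f$-factor to an identity-on-$I$ or to $\top$, and bifunctoriality rewrites the morphism as $q\circ\top_A$, i.e.~disconnected. For ``$\Leftarrow$'' I would prove the contrapositive: if $A\otimes B$ fails to exist, then either the supports overlap (so some $x_{i_0}=y_{j_0}$, hence $x_{i_0}\leq y_{j_0}$ by reflexivity of the causal relation) or some $x_{i_0}\leq y_{j_0}$ directly; in either case $(\mathcal{I}',\mathcal{J}')=(\{i_0\},\{j_0\})$ is admissible, and one can exhibit a non-disconnected morphism by plugging in a connected $f\in\mathbf{C}(A_{i_0},B_{j_0})$ (for instance, an identity when the two objects agree) together with tensored-in states for the remaining coordinates. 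The main obstacle is this last step: producing a \emph{connected} morphism in $\mathbf{C}$ between the relevant $A_{i_0}$ and $B_{j_0}$, so that the morphism built in the pairing does not accidentally factor through $(I,\emptyset)$; this is where the hypothesis that $\mathbf{C}(I,A)\neq\emptyset$ is really used, because it lets one freely pad with states while keeping the chosen $f$ intact, so that the whole composite does not collapse to $p\circ\top_A$.
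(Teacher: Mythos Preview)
Your overall skeleton matches the paper's proof: both directions are routed through Lemma~\ref{lem:pairing} for terminality, and the state condition is read off from the definition of states in the pairing. The paper's argument is in fact considerably shorter than yours: for the ``if'' direction it simply asserts that Definition~\ref{def:construction} ensures the tensor exists only when the hom-sets are disconnected, and that states in the pairing are inherited from states in $\mathbf{C}$; it does not work through the morphism formula case by case.

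Where you go beyond the paper is in trying to verify the full biconditional ``$A\otimes B$ exists $\Leftrightarrow$ both hom-sets are disconnected'' required by Definition~\ref{def:caucat}. The paper's proof only explicitly mentions the ``$\Rightarrow$'' half (``exists only when they are disconnected''), effectively leaving the converse to the construction. Your attempt at the converse, however, has a real gap that you yourself flag but do not close: to exhibit a connected morphism in the pairing when the supports are causally related, you need a connected $f\in\mathbf{C}(A_{i_0},B_{j_0})$, and nothing in the hypotheses ``$\mathbf{C}$ normalized'' and ``$\mathbf{C}(I,A)\neq\emptyset$'' guarantees such an $f$ exists. The nonemptiness of states is used exactly as you say---for padding the remaining coordinates---but it does not manufacture a connected $f$. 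Indeed the paper's own Remark~\ref{rmk:triv_caucat} and the example following it show that for some $\mathbf{C}$ there are no connected morphisms at all, in which case your contrapositive argument cannot go through as written. So your ``$\Leftarrow$'' step is not justified by the stated hypotheses; this is a point the paper glosses over rather than resolves, and your more detailed attempt exposes it rather than repairs it.
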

\begin{proof}
(If) A normalized category is a caucat if it has a partial tensor whose existence is defined by disconnected hom-sets, and if each object has at least one state. By Lemma \ref{lem:pairing}, if $\mathbf{C}$ is normalized then the pairing ${\bf CC}({\bf C}, \mathcal{G})$  is normalized. Also, Definition \ref{def:construction} ensures both that the monoidal product of two objects in ${\bf CC}({\bf C}, \mathcal{G})$ exists only when they are disconnected, and that each object in ${\bf CC}({\bf C}, \mathcal{G})$ has at least one state if each object in $\mathbf{C}$ has at least one state. 

(Only if) If  ${\bf CC}({\bf C}, \mathcal{G})$  is a caucat then by Lemma \ref{lem:pairing}, the category $\mathbf{C}$ is normalized. Also, by  Definition \ref{def:construction}, each object in ${\bf CC}({\bf C}, \mathcal{G})$ has at least one state only if each object in $\mathbf{C}$ has at least one state. 

\end{proof}

\subsection{Recovering CQM}\label{sec:recover}

The results of Subsection \ref{sec:Incompatibilities} showed that key structures of CQM, such as compactness and the dagger functor, cannot be retained in causal categories. These structures are used in CQM to formalize notions such as measurement and dynamics. However, we can still reinstate the power of CQM by providing a precise description of the connection between a caucat and the dagger compact category that is typically used to construct it. Indeed, the constructions described above provide for the existence of a functor from the caucat to this dagger compact category of processes. 

In the first case, a carved category $\mathbf{CC}$ is a subcategory of a given $\dagger$-SMC $\mathbf{C}$, and so has an embedding functor 
\[
\iota:\mathbf{CC}\rightarrow \mathbf{C}.
\]
In the case of the second construction, a paired category  ${\bf CC}({\bf C}, \mathcal{G})$ has a projection functor 
\[
\pi: {\bf CC}({\bf C}, \mathcal{G})\rightarrow {\bf C}.
\]
We now show in what way 
these functors can be exploited to recover CQM.

\begin{example}
We can identify unitary operations in a caucat as follows. Consider a carved category $\mathbf{CC}$ of a $\dagger$-SMC $\mathbf{C}$. We have the inclusion functor $\iota:\mathbf{CC}\rightarrow \mathbf{C}$, from which we can define \em causal unitary \em operations in $\mathbf{CC}$ using conditions on the image of $\iota$. Since $\iota(U)=U$, $\iota(A)=A$ and $\iota(B)=B$, causal unitary morphisms  $U:A\rightarrow B$ in $\mathbf{CC}$ are those which are unitary in $\mathbf{C}$, i.e.~those which satisfy
\[
U^\dagger\circ U=1_ A\;\; \& \;\; U\circ U^\dagger=1_ B. 
\]
in $\mathbf{C}$. A general (i.e.~not necessarily dagger-) isomorphism is similarly associated with a morphism in $\mathbf{CC}$, by witnessing it in $\mathbf{C}$.
\end{example}

Now, we showed in Proposition \ref{thm:isos} that the fact that a system remains of the same type (e.g.~a qubit) during an evolution process cannot be defined in the caucat itself. However the above example shows that this can instead be defined using the functor $\iota$ and data in $\mathbf{C}$. 

\begin{example}
Measurements can be identified in a caucat in a similar way to unitaries, although some care is needed due to the same issue of isomorphisms. A morphism $v:A\rightarrow B\otimes C$ in $\mathbf{CC}$ is a candidate for representing a measurement, since its codomain can represent a state space $B$ and a space of classical outcomes $C$. However, measurements in $\mathbf{C}$ are usually defined with identical types for the input and output state spaces, i.e.~it has the form $v:A\rightarrow A\otimes C$. We account for this by requiring that instead the input and output types are isomorphic, where, as discussed in the previous example, we define this isomorphism in $\mathbf{C}$. So one of the conditions for a morphism $v:A\rightarrow B\otimes C$ to be considered a measurement in  $\mathbf{CC}$, is that $B\cong A$ holds in $\mathbf{C}$.  

Consequently, for  a carved category  $\iota:\mathbf{CC}\rightarrow \mathbf{C}$, a  \em causal measurement \em is a morphism $v:A\rightarrow B\otimes C$ in $\mathbf{CC}$ such that
\bit
\item there is an isomorphism $f:B\cong A$ in $\mathbf{C}$
\item denoting the morphism $m:=(f\otimes 1_C)\circ v:A\rightarrow A\otimes C$ as 
\[
\epsfig{figure=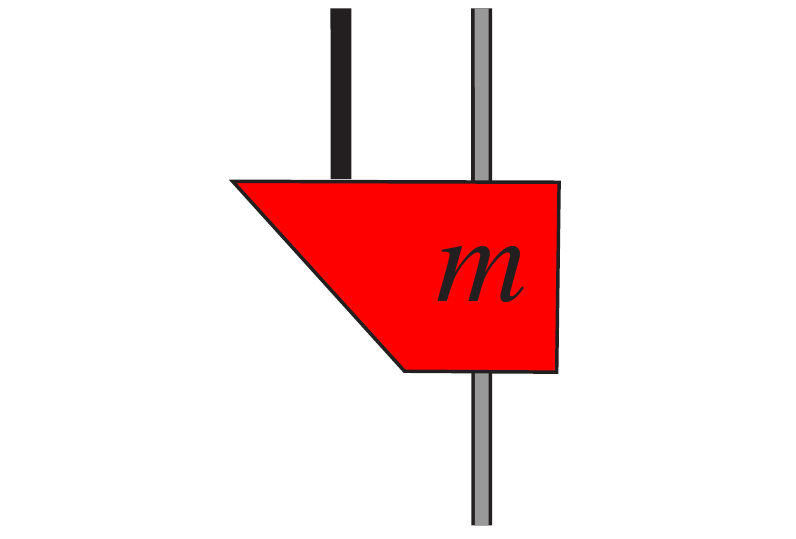,width=75pt}
\]
the morphism $m$ satisfies the (non-destructive) measurement axioms defined for morphisms in a $\dagger$-SMC \cite{CPav}:
\[
\epsfig{figure=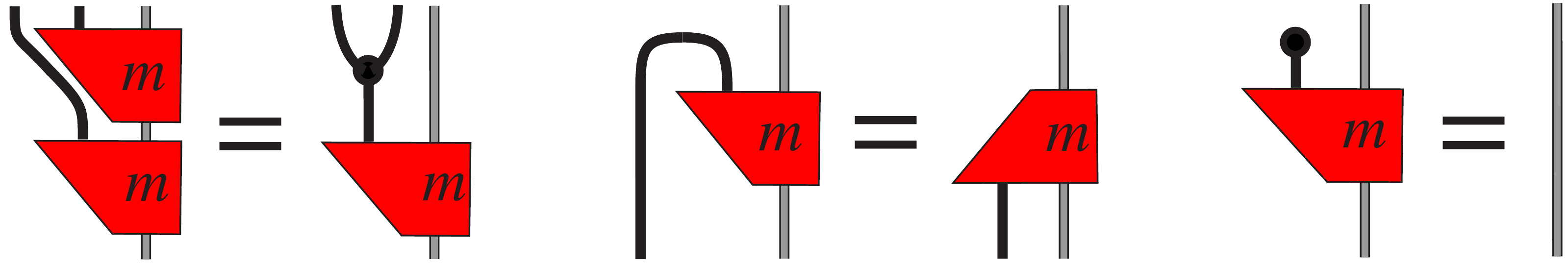,width=300pt}
\]
As discussed in \cite{CPav}, these axioms define an abstract notion of measurement for a dagger compact category, and in $\mathbf{FdHilb}$ they define a family $\{P_i\}_i$ of mutually orthogonal projectors. In that case the axioms respectively correspond to the von Neumann projection postulate (viz.~idempotence of $P_i$), self-adjointness of $P_i$, and completeness of  $\{P_i\}_i$, i.e.~$\displaystyle\sum_i P_i=1_\mathcal{H}$.
\eit
\end{example}


\section{Conclusion and outlook}


We have introduced causal categories as a mathematical vehicle which explicitly encodes causal structure within a category of processes that may take place within that causal structure.  The axioms of this structure were motivated by:
\bit
\item[(1)] compatibility of this causal structure and signaling processes;
\item[(2)] well-definedness of the concept of global state for all objects.
\eit
Causal categories are themselves typically neither compact nor dagger, but the constructions that we defined always provide a causal category with an ambient dagger compact category, hence retaining the full power of categorical quantum mechanics.

The unique morphisms that witness terminality of the tensor unit---terminality follows from (1)---also appeared in an axiomatization of mixedness in \cite{CPer}, which imposes additional structure on an environment structure.  It would be worthwhile to understand the interaction of this additional structure with the causal category structure better.  As this structure involves a purification condition, one would also like to understand the connection with the work by Chiribella-D'Ariano-Perinotti \cite{CDP1, CDP2} better.

%
%
%
\appendix
\section{Appendix}\label{sec:appendix}

In this Appendix we expand on the assumptions made at the end of Section \ref{sec:monoidal}, that symbolic formulae are stated up to equivalence in the diagrammatic calculus. Our aim is to show how connectedness in the graphical language can be formalised symbolically. 

The motivation for this is as follows.
Firstly we recall from Section  \ref{sec:monoidal} that graphical connectedness distinguishes the two modes of composition symbolically, except when the tensor unit is involved.  However, the two modes of composition have diffferent, complementary, physical interpretations: parallel composition means independence. But, as we showed in Section \ref{sec:CausalityInSMC}, graphical disconnectedness also means independence. Since graphical disconnectedness subsumes parallel composition (as we just described), it also is the \em primary \em concept of independence. But the graphical language arose in a formal way from considering SMCs (see Theorem \ref{thm:SMCcoherence}), which is usually defined symbolically. Hence it is desirable to understand what kind of symbolic language corresponds to this graphical language of connectedness: this will provide a `syntax for information flow'.

In other words, we want to `fix' the language so that formulae faithfully reflect the connectedness of a corresponding diagram. 

\paragraph{Standing assumptions about formulae.}
Since it is the  non-atomic morphisms that would allow one to `hide' disconnectedness, our two key \em standing assumptions\em in Section \ref{sec:monoidal} were:
\begin{enumerate}
\item all morphism formulae contain only atomic morphisms;
\item a morphism is expressed as a formula using $\otimes$ instead of $\circ$ whenever possible.
\end{enumerate}
This corresponds to:
\begin{enumerate}
\item ensuring that the `atoms' of our language are connected morphisms, so that sequential composition corresponds to genuine information-flow
\item revealing genuine disconnectedness of a morphism in its corresponding formula (i.e. a formula reveals non-atomicity of the morphism). 
\end{enumerate}
In short, we  force the syntax of the morphism language to faithfully represent the topology of the graphical language.

\begin{remark}[Morphism language formulae containing only atomic morphisms]
Note that the assumption that all  formulae contain only atomic morphisms can be made without any loss of generality, because we only consider protocols that involve a finite number of systems.  An arbitrary formula can be turned into one made up of atomic morphisms simply by substituting all non-atomic morphisms by a non-trivial parallel composition, and repeat this procedure until all  morphisms are atomic. Without finiteness this procedure may not terminate. This can be seen as a formally  distintinguishing between the following two views on the relation between a system and its subsystems. This first view is `compositional': we describe a system by considering jointly its subsystems, and the latter are conceptually prior. Conversely, the second view is `decompositional': Nature appears to us as a single entity, and we can tractably do physics by decomposing it into susbsystems; now the global system is conceptually prior. 
.
\end{remark}

\begin{defn}[`Disconnectedness' for symbolic language]\label{def:factors}\em
We say that a symbolic language expression $\mathcal{F}:\mathcal{A}\rightarrow\mathcal{B}$ \em factors through $\otimes$ \em if, by pre- and post-composition of generalized symmetry morphisms and
by using the axioms of SMCs, $\mathcal{F}$ cannot be rewritten as a symbolic language expression $\mathcal{F}_1\otimes\mathcal{F}_2$.
\end{defn}
For example, given $\top:B\rightarrow I$ and $\psi:I\rightarrow B$, the expression $(g\circ f)\otimes(\psi\otimes\top)$ does not factor through $\otimes$, whereas the expression $(1_I\otimes\psi)\circ(\top\otimes 1_I)$ does, because the axioms of SMCs ensure that 
\[
(1_I\otimes\psi)\circ(\top\otimes 1_I) = (1_I\circ\top)\otimes(\psi\circ1_I).
\]
Note that Definition \ref{def:factors} captures a notion that is distinct from non-atomicity, since the latter refers to whether a morphism is equal to the tensor of two other morphisms (and is therefore about the contingent equations of the category), whereas the former refers to whether a morphism formula can be manipulated syntactically  to a particular form, using the axioms of symmetric monoidal categories. The requirement of pre- and post-composition in Definition \ref{def:factors} is included because of morphisms such as the symmetry isomorphism $\sigma_{A,B}:A\otimes B\rightarrow B\otimes A$. This will be symbolically disconnected because post-composition with $\sigma_{B,A}$ yields $1_A \otimes 1_B$.

\begin{thm}[Relationship between symbolic and graphical connectedness]\label{thm:syntax}
Given the standing assumptions above, the topological disconnectedness of a morphism in the graphical language is equivalent to its corresponding symbolic language expression factoring through $\otimes$.
\end{thm}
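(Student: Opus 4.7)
The plan is to deduce both implications from the Joyal--Street coherence theorem (Theorem~\ref{thm:SMCcoherence}), which lets us translate freely between symbolic manipulations by the SMC axioms and planar-isotopy-like deformations of diagrams.

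For the ($\Leftarrow$) direction, suppose $\mathcal{F}$ factors through $\otimes$, i.e.~after applying SMC axioms and pre/post-composing with generalized symmetry morphisms $\sigma,\sigma'$ one has $\mathcal{F} = \sigma' \circ (\mathcal{F}_1 \otimes \mathcal{F}_2) \circ \sigma$. By coherence the diagrams of the two sides are equivalent in the graphical calculus, so it suffices to observe that the right-hand side is manifestly topologically disconnected: its diagram consists of the two side-by-side sub-diagrams of $\mathcal{F}_1$ and $\mathcal{F}_2$ with only symmetry crossings appended at top and bottom, and crossings merely permute wires without joining strands.

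For the ($\Rightarrow$) direction, assume the diagram $D$ of $\mathcal{F}$ splits into topologically disconnected components $D_1$ and $D_2$. By the standing assumptions every atomic box in $D$ lies in exactly one component and every wire joining two boxes lies within a single component. I would use a Joyal--Street planar deformation that slides all boxes of $D_1$ into the left half-strip and all boxes of $D_2$ into the right half-strip; this is possible because no wire crosses between the halves. The only remaining obstruction is that the boundary inputs (resp.\ outputs) of $D_1$ and $D_2$ may be interleaved along the bottom (resp.\ top) edge of $D$. I absorb this interleaving into the two generalized symmetry morphisms $\sigma$ and $\sigma'$, whose sole role is to unshuffle the boundary wires so that all $D_1$-wires sit to the left of all $D_2$-wires. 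Reading the deformed diagram symbolically then yields $\mathcal{F} = \sigma' \circ (\mathcal{F}_1 \otimes \mathcal{F}_2) \circ \sigma$, with $\mathcal{F}_i$ the formula corresponding to $D_i$.

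The main obstacle is the bookkeeping around formulae written with $\circ$ whose diagrams are already disconnected. The canonical case is a sequential composite $g \circ f$ with $f:A\to I$ and $g:I\to B$: this is graphically disconnected but is not symbolically a tensor until rewritten. Standing Assumption~2 is precisely what handles this, since such compositions through $I$ must be replaced by $f\otimes g$ before analysis, after which the decomposition is immediate. Standing Assumption~1 plays the complementary role of guaranteeing that no hidden disconnectedness is buried inside a single symbol, so the planar-isotopy argument detects every genuine splitting of the diagram. A secondary, routine point is that any permutation of boundary wires can be realised by a generalized symmetry morphism, which follows from the fact that transpositions generate the symmetric group.
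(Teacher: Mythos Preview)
Your proposal is correct and takes essentially the same approach as the paper: both reduce the claim to the Joyal--Street coherence theorem (Theorem~\ref{thm:SMCcoherence}). The paper's own proof is a single sentence invoking that theorem, whereas you spell out the two directions and the role of the standing assumptions explicitly; your added detail is sound and does not depart from the paper's line of argument.
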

\begin{proof}
This follows by Theorem \ref{thm:SMCcoherence}  \cite{JS}, which states that the graphical calculus corresponds fully and faithfully to the syntactic manipulations of the corresponding symbolic language expressions. 
\end{proof}
The significance of Theorem \ref{thm:syntax} is that it provides a formal expression of
\begin{quotation}
reading the topology from the syntax.
\end{quotation}




\end{document}